\documentclass[letterpaper,twocolumn,10pt]{article}

\usepackage{usenix,epsfig,endnotes}

\usepackage{balance}
\usepackage{amsopn}
\usepackage{amsmath} 
\usepackage{amsthm}
\usepackage{url}
\usepackage{multirow}
\usepackage{hyperref}
\usepackage{cite}
\usepackage{cleveref}

\newtheorem{theorem}{Theorem}
\newtheorem{corollary}{Corollary}
\newtheorem{lemma}{Lemma}
\theoremstyle{definition}
\newtheorem{definition}{Definition}[section]

\usepackage{xcolor}
\usepackage{paralist}
\usepackage{graphicx}
\mathchardef\mhyphen="2D

%
%

%
%

\newcommand{\Sim}[1]{\mathsf{Sim}_{#1}}

\newcommand{\ourscheme}{Pyramid ORAM}
\newcommand{\ourhashtable}{Zigzag Hash Table}
\newcommand{\ourhashtableS}{ZHT}
\newcommand{\route}{\mathsf{route}}
\newcommand{\build}{\mathsf{build}}
\newcommand{\search}{\mathsf{search}}
\newcommand{\throw}{\mathsf{throw}}
\newcommand{\real}{\mathsf{real}}
\newcommand{\true}{\mathsf{true}}
\newcommand{\false}{\mathsf{false}}

\newcommand{\inc}{\mathsf{in}}
\newcommand{\out}{\mathsf{out}}

\newcommand{\N}{N}
\newcommand{\D}{D}

\newcommand{\secsize}{\mathsf{s}}

\newcommand{\p}{\mathsf{p}} 

\newcommand{\n}{n}

\newcommand{\BSize}{c}
\newcommand{\NTable}{k}

\newcommand{\key}{\mathsf{key}}
\newcommand{\val}{\mathsf{val}}

\newcommand{\firstLevelSize}{p}
\newcommand{\dummy}{\mathsf{dummy}}

\newcommand{\Level}{L}

\makeatletter
\newcommand{\Rmnum}[1]{\expandafter\@slowromancap\romannumeral #1@}
\makeatother

\title{The Pyramid Scheme: Oblivious RAM for Trusted Processors}

\author{
  Manuel Costa
  \and \makebox[.2\linewidth]{Lawrence Esswood\thanks{Work done at Microsoft Research.}}\\The University of Cambridge\\
  \and \makebox[.3\linewidth]{Olga Ohrimenko}\\Microsoft Research\\
  \and \makebox[.3\linewidth]{Felix Schuster}\\Microsoft Research\\
  \and \makebox[.2\linewidth]{Sameer Wagh$^*$}\\Princeton University\\
}

\author{
  {\rm Manuel Costa, Lawrence Esswood\thanks{Work done at Microsoft Research; affiliated with University of Cambridge, UK.}, Olga Ohrimenko, Felix Schuster, Sameer Wagh\thanks{Work done at Microsoft Research; affiliated with Princeton University, USA.}}\\ 
  Microsoft Research
  }

\begin{document}

\maketitle

\begin{abstract}
Modern processors, e.g., Intel SGX, allow applications to isolate
secret code and data in encrypted memory regions called enclaves.
While encryption effectively hides the contents of memory, the sequence of
address references issued by the secret code leaks information. This is a serious
problem because these leaks can easily break
the confidentiality guarantees of enclaves.

In this paper, we explore Oblivious RAM (ORAM) designs that prevent these
information leaks under the constraints of modern SGX processors.
Most ORAMs are a poor fit for these processors
because they have high constant overhead factors or require large private memories,
which are not available in these processors.
We address these limitations with a new hierarchical ORAM construction, the Pyramid ORAM,
that is optimized towards online bandwidth cost and small blocks.
It uses a new hashing scheme that circumvents
the complexity of previous hierarchical schemes.

We present an efficient x64-optimized implementation of Pyramid ORAM that uses only
the processor's registers as private memory.
We compare \ourscheme{}
with Circuit ORAM,
a state-of-the-art tree-based ORAM scheme that also uses constant private memory.
Pyramid ORAM has better
online asymptotical complexity than Circuit ORAM.
Our implementation of \ourscheme{} and Circuit ORAM validates
this: as all hierarchical schemes, Pyramid ORAM has high variance of access latencies;
although latency can be high for some accesses,
for typical configurations Pyramid ORAM provides access latencies that are
{8X} better than Circuit ORAM for {99\%} of accesses.
Although the best known hierarchical ORAM has better asymptotical complexity,
Pyramid ORAM has significantly lower constant overhead factors, making it the preferred choice
in practice.
\end{abstract}

\section{Introduction} \label{sec:introduction} 

Intel SGX~\cite{sgx3} allows applications to isolate secret code and data in encrypted memory regions called enclaves. Enclaves have many interesting applications, for instance
they can protect cloud workloads~\cite{haven,vc3,scone,ryoan} in malicious or compromised environments, or guard secrets
in client devices~\cite{sgx3}.

While encryption effectively hides the contents of memory, the sequence of memory address references issued by enclave code leaks information. If the attacker has physical
access to a device, the addresses could be obtainable by attaching hardware probes to the memory bus. In addition, a compromised operating system
or a malicious co-tenant in the cloud can obtain the addresses through cache-probing and other side-channel attacks~\cite{ristenpart09,razavi16,maurice17}.
Such leaks effectively break the confidentiality promise of enclaves~\cite{sgxsidechannels,malwareguard,sgxcacheattacks,liu2015last,sanctum,brasser17,moghimi17}.

In this paper, we explore Oblivious RAM (ORAM) designs that prevent these
information leaks under the constraints of modern SGX processors.
ORAMs completely eliminate information leakage through memory
address traces, but most ORAMs are a poor fit for enclave,
because they have high constant overhead factors~\cite{DBLP:journals/algorithmica/Paterson90, tree_based_orams,DBLP:conf/ndss/StefanovSS12} or require large private memories~\cite{pathoram},
which are not available in the enclave context.
We address these limitations with a new hierarchical ORAM, the Pyramid ORAM.
Our construction differs from previous hierarchical schemes
as it uses simpler building blocks that incur smaller constants in terms of complexity.
We instantiate each level of the hierarchy with our variant of multilevel adaptive hashing~\cite{hashtable},
that we call Zigzag hash table.
The structure of a Zigzag hash table allows us to avoid
the kind of expensive oblivious sorting that slowed down previous schemes~\cite{KLO,gm-paodor-11,GoldreichO96}.
We propose a new probabilistic routing network as a building block for constructing a Zigzag hash table
obliviously.
The routing network runs in $n\log n$ steps and does not bear ``hidden constants''.

Hierarchical ORAMs, despite being sometimes regarded inferior to tree-based schemes due to complex rebuild phases, have specific performance characteristics that make them the preferred choice for many applications.
Inherently, most accesses to a hierarchical ORAM are relatively cheap (i.e., when no or only few levels need to be rebuilt), while a few accesses are expensive (i.e., when many levels need to be rebuilt). In contrast, in tree-based schemes, all accesses cost the same. Since, in hierarchical ORAMs, the schedule for rebuilds is fixed, upcoming expensive accesses can be anticipated and the corresponding rebuilds can be done earlier, e.g., when the system is idle, or concurrently~\cite{ccsw,pis}. This way, high-latency accesses can be avoided.
To reflect this, we report \emph{online} and \emph{amortized} access costs. The former measures access time without considering the rebuild and the latter measures the online cost plus the amortized rebuild cost.
{Furthermore, hierarchical schemes can naturally support resizable
arrays and do not require additional techniques compared to tree-based schemes (as also noted in~\cite{dynamicORAM}).}

We present an efficient x64-optimized implementation of Pyramid ORAM that uses only CPU registers as private memory.
We compare our implementation with our own implementation of Circuit ORAM~\cite{circuit},
a state-of-the-art tree-based ORAM scheme that also uses constant memory,
on modern Intel Skylake processors.
Although the predecessor of Circuit ORAM, Path ORAM~\cite{pathoram}, has been successfully
used in the design of custom secure hardware
and memory controllers~\cite{ascend,phantom,Fletcher:2015:LLH:2860695.2860768,Ren:2013:DSE:2485922.2485971,Fletcher:2015:FON:2694344.2694353}, Circuit ORAM has better
asymptotical complexity when used with small private memory.

Pyramid ORAM has better online asymptotical complexity and outperforms Circuit ORAM in practice in many benchmarks, even without optimized scheduling of expensive shuffles. For typical configurations, Pyramid ORAM is at least {8x} faster than Circuit ORAM for {99\%} of accesses. 
Finally, although the best known hierarchical ORAM~\cite{KLO} has better asymptotical complexity,
Pyramid ORAM has significantly lower constant overhead factors, making it the preferred choice
in practice.

In summary, our key contributions are:
\begin{itemize}
\item A novel hierarchical ORAM, \ourscheme{}, that
asymptotically dominates tree-based schemes when comparing
online overhead of an oblivious access;
\item Efficient implementations of \ourscheme{} and Circuit ORAM for Intel SGX;
\item The first experimental results of a hierarchical ORAM in Intel SGX. Our results show that \ourscheme{} outperforms Circuit ORAM on small datasets and has significantly better latency for 99\% of accesses. 
\end{itemize}

\begin{table*}
\begin{center}
\small
\caption{Asymptotical performance of ORAM schemes for $\N$ elements
with private memory of constant size and elements of size $\D = \Omega(\log \N)$. All logarithms use the base of 2.}
\renewcommand{\arraystretch}{1.2}
\begin{tabular}
{ l | c | c | l}
\multirow{ 2}{*}{Scheme} &  \multicolumn{2}{c|}{Bandwidth cost} & Server Space \\ 
\cline{2-3}
&  Online & Amortized  & Overhead \\
\hline
Binary Tree ORAM~\cite{tree_based_orams} (Trivial bucket) & \multicolumn{2}{c|}{$\D(\log \N)^2+(\log \N)^4$} & $\log \N$\\
Path ORAM (SCORAM)~\cite{scoram} &  \multicolumn{2}{c|}{$\D\log\N + (\log\N)^3 \log \log \N$}  & 1 \\
Circuit ORAM~\cite{circuit} & \multicolumn{2}{c|}{$\D\log\N + (\log\N)^3$} & 1 \\
\hline
Hierarchical, GO~\cite{GoldreichO96} &  $\D(\log \N)^2$ & $\D(\log \N)^3$ & $\log \N$ \\
Hierarchical, GM~\cite{gm-paodor-11} & $\D(\log \N)$ & $\D(\log \N)^2$ & 1 \\
\cline{2-3}
Hierarchical, Kushilevitz~\textit{et al.}~\cite{KLO} &  \multicolumn{2}{c|}{$\D (\log \N)^2 / \log \log \N$} & 1 \\
\hline
\hline
\textbf{\ourscheme{}}  &$\D (\log \N)  (\log \log \N)^2$   & $\D (\log \N)^2  (\log \log \N)^2$  & $(\log \log \N)^2$  \\
\end{tabular}
\label{tbl:cmp}
\end{center}
\end{table*}

\section{Preliminaries}

\paragraph{Attacker Model}
Our model considers a trusted processor and a trusted client program running inside
an enclave. We assume the trusted processor's implementation is correct
and the attacker is unable to physically open the processor package to extract secrets.
The processor has a small amount of private trusted memory: the processor's registers;
we assume the attacker is unable to monitor accesses to the registers.
The processor uses an untrusted external memory to store data; we also
refer to the provider of this external memory as the untrusted server as this
terminology is often used in ORAM literature; in our implementation
this is simply the untrusted RAM chips that store data. We assume the attacker
can monitor all accesses of the client to the data stored in the external memory.
We assume the processor encrypts the data before sending it to external memory
(this is implemented in SGX processors~\cite{sgx1}), but the addresses where the
data is read from or written to are available in plaintext.
We assume the processor's caches are shared with programs under the control
of the attacker, for instance malicious co-tenants running on the same CPU~\cite{ristenpart09,malwareguard}
or a compromised operating system~\cite{sgxsidechannels}.
Hence, we assume that an attacker can observe client memory accesses at cache-line granularity.
This model covers scenarios where SGX processors run in an untrusted cloud environment where they are shared amongst many tenants, the cloud operating system may be compromised, or the
cloud administrator may be malicious.
All side channels that are not based on memory address traces, including power analysis, accesses to disks
and to the network, as well as channels based on shared microarchitectural elements
inside the processor other than caches, (e.g., the branch predictor~\cite{branch_shadowing}) are outside the scope of this paper.

\paragraph{Definitions}
An element $e$ is an index-value pair
where $e.\key$ and $e.\val$ refer to its key and value, respectively.
A \emph{dummy} element is an element of the same size as a real element.
Elements, real or dummy, stored outside of the client's private memory
are encrypted using semantically secure encryption
and are re-encrypted whenever the client reads them to
its private memory and writes back.
Hence, encrypted dummy elements are indistinguishable from real elements.
Our construction relies on hash functions that are instantiated using pseudo-random functions
and modeled as truly random functions.
We say that an event happens with negligible probability
if it happens with probability at most $\frac{1}{2^\secsize}$, for
a sufficiently large $\secsize$, e.g., 80.
An event happens with very high probability (w.v.h.p.)
if it happens with probability at least $1 - \frac{1}{2^\secsize}$.

\paragraph{Data-oblivious Algorithm} 
To protect against the above attacker,
the client can run a \emph{data-oblivious}
counterpart of its original code.
Informally, a data-oblivious algorithm
appears to access memory independent of the content of
its input and output.

Let $\mathcal{A}$ be an algorithm that
takes some input and produces an output.
Let $\rho$ be the public information
about the input and the output (e.g., the size of the input).
Given the above attacker model,
we assume that all memory addresses
accessed
by $\mathcal{A}$ are observable.
We say that $\mathcal{A}$
is data-oblivious, i.e., secure in the outlined setting,
if there is a simulator $\Sim{\mathcal{A}}$
that takes \emph{only}~$\rho$
and produces a trace of memory accesses that
appears to be indistinguishable
from $\mathcal{A}$'s.
That is, a trace of a data-oblivious algorithm
can be reconstructed using $\rho$.

\paragraph{Oblivious RAM}
ORAM is a data-oblivious RAM~\cite{GoldreichO96}.
We will distinguish between \textit{virtual} and \textit{real accesses}.
A virtual access is an access the client intends to perform on RAM,
that is, an access it wishes to make oblivious.
The real access is an access that is made to the data structures
implementing ORAM
in physical memory, i.e., those that appear on the trace produced by an algorithm.

\paragraph{Performance}
We assume that the client uses external memory (i.e., memory at the server) to store elements of size $\D$ (e.g., 64 bytes)
and it can fit a constant number of such elements in its private memory.
{Similar to other work on oblivious RAM~\cite{pathoram,circuit}, we
assume that~$\D$ is of size~$\Omega(\log \N)$, that is, it is sufficient to store a tag of the element.
During each access the client can read a constant number of elements into its private memory,
decrypt and compute on them, re-encrypt, and write them back.
When analyzing the performance of executing an algorithm that operates on external memory,
we are interested in counting the number of bits of the external memory the client must access.
Similar to previous work on ORAM~\cite{circuit,pathoram}, we call this measurement \emph{bandwidth cost}.

For ORAM schemes,  we distinguish between \emph{online}
and \emph{amortized bandwidth cost}.
The former measures the number of bits to obtain a requested element, while the latter measures
the amortized bandwidth for every element.
This distinction is important especially for hierarchical schemes that periodically stall to perform
a reshuffle of the external memory before they can serve the next request.
However, such reshuffles happen after a certain number of requests
are served and their cost can be amortized
over preceding requests that caused the reshuffle.

In Table~\ref{tbl:cmp} we compare the overhead of \ourscheme{} with other ORAM schemes that use constant private memory.
Though \ourscheme{} uses more space at the server than tree-based schemes,
it has better asymptotical online performance for blocks of size $O((\log \N/\ \log\log\N)^2)$.
\ourscheme{} is less efficient than hierarchical schemes by
Kushilevitz~\textit{et al.}~\cite{KLO} and Goodrich and Mitzenmacher~\cite{gm-paodor-11}.
However, as we explain in Section~\ref{sec:relatedwork},
our scheme relies on primitives that are more efficient in practice than those used in~\cite{KLO,gm-paodor-11}.
For example, we do not rely on the AKS sorting network.

\section{Hierarchical ORAM (HORAM)}
\label{sec:hierarchical}

Our \ourscheme{} scheme follows the same general structure as other hierarchical ORAM (HORAM)
constructions~\cite{GoldreichO96,wsc-bcomp-08,KLO,gm-paodor-11,GMOT12}.
We outline this structure below and describe how we instantiate it
using our own primitives, \nameref{sec:network} and \nameref{sec:ZHT}, in the next sections.

\paragraph{Data Layout}
HORAM abstractly arranges the memory at the server
into $O(\log \N)$ levels, where $\N$ is the number of elements
one wishes to store remotely and access obliviously later.
Though~$\N$ does
not have to be specified a priori and the structure can grow to accommodate more elements,
for simplicity, we assume~$\N$ to be a fixed power of 2.

The first two levels, $\Level_0$ and $\Level_1$, can store up to $\firstLevelSize$
elements each, where $\firstLevelSize\ge 2$.
The third level $\Level_2$, can store up to $2\firstLevelSize$ elements
and
the $i$th level stores up to $2^{i-1} \firstLevelSize$.\footnote{The construction by Kushilevitz~\textit{et al.}~\cite{KLO} uses
a slightly different layout from the one described here. In particular,
it has less levels and lower levels can hold $\log \N \times$ the size of the previous level instead of~$2\times$.}
The last level, $l$, can fit all~$\N$ elements (i.e.,
$l = \log \N/\firstLevelSize$).
Note that we only mentioned the number of real
elements that each level can fit.
In many HORAM constructions,
each level also contains dummy elements
that are used for security purposes. Hence,
the size of each level is of some fixed size
and is often larger than the number of real elements the level can fit.
The data within each level is stored encrypted (implicitly done by the hardware in the case of SGX), hence,
real and dummy elements are indistinguishable.

The first level $\Level_0$ is implemented as a list.
The search over $\Level_0$ is a simple scan.
Each consequent level, $\Level_{i\ge1}$, is implemented
as a hash table to ensure fast search time (e.g., a cuckoo hash table~\cite{KLO,gm-paodor-11,GMOT12,Pinkas2010} or a hash table
with buckets~\cite{GoldreichO96}).

\paragraph{Setup}
During the setup, the client inserts $\N$ elements into the last level $\Level_l$
and uploads the first $l-1$ empty levels and $\Level_l$ to the server.
As mentioned before, this is not required and the client can populate the data structure
by writing elements obliviously one at a time.

\paragraph{Access}
During a virtual access for an element with $e.\key$, the client performs
the following search procedure across all levels.
If level $i$ is empty, it is skipped.
If the element was not found in levels $0,\ldots,i-1$, then
the client searches for $e.\key$ in $\Level_i$.
If $\Level_i$ has the element, it is removed from $\Level_i$,
cached, and the client continues accessing levels $i+1, i+2, \dots$
but for dummy elements.
If~$\Level_i$ does not have the element, the client
searches for $e.\key$ in level $i+1$ and so on.
After all levels are accessed, the cached $e$ is appended to $\Level_0$
(even if $e$ was found in some earlier location in $\Level_0$ from which
it was removed).

\paragraph{Rebuild}
After $\firstLevelSize$ accesses,
$\Level_0$ contains up to $\firstLevelSize$ elements, where some may be dummy
if the same element was accessed more than once. This triggers the rebuild phase:
$\Level_0$ is emptied by inserting its \emph{real} elements into~$\Level_1$.
After the next $\firstLevelSize$ accesses, $\Level_0$ becomes full again.
Since $\Level_1$ is already full, real elements from both~$\Level_0$ and~$\Level_1$ are
inserted into~$\Level_2$.
In general, after every $2^i \firstLevelSize < \N$ accesses levels
$0,1, \ldots, i$ are emptied into $\Level_{i+1}$.
After $\N$ accesses levels $0,1,\ldots,l-1$ are all full.
In this case, a new table, $\Level'_{l}$, of the same size as $\Level_l$, is created.
Elements from~$\Level_0, \Level_1, \ldots, \Level_{l}$ are inserted
into~$\Level'_{l}$ and emptied, and $\Level_{l}$ is replaced with $\Level'_{l}$.
The same process repeats for the consequent accesses.
Since the rebuild procedure is a deterministic function of the number of accesses performed so far,
it is easy to determine which levels need rebuilding for any given access.

The insertion of real elements
from levels $0,1, \ldots, i$ to $i+1$ has to be done carefully
in order to hide the number of real and dummy elements
at each level. For example, knowing that there was only one real element
reveals that all past accesses were for the same element.
The data-oblivious insertion (or rebuild) process depends on the specific HORAM construction
and the data structure used to instantiate each level.
Many constructions
rely on an oblivious sorting network~\cite{batcher,AKS} for this.

\paragraph{Security}
A very high-level intuition behind the security of an HORAM
relies on the following: (1)
level $i$ is searched for every element at most once
during the accesses between two sequential rebuilds of level~$i$;
(2) searches for the same element before and after a rebuild
are uncorrelated;
(3) the search at each level
does not reveal which element is being searched for.

\paragraph{\ourscheme{} Overview}
HORAM's performance and security rely on two critical
parts: the data structure used to instantiate levels $\Level_1, \ldots, \Level_l$
and the rebuild procedure.
In this paper we propose \ourscheme{}, an HORAM instantiated with a novel data structure and
a rebuild phase tailored to it.
In Section~\ref{sec:ZHT} we devise a new hash table, a Zigzag Hash Table, and use it to instantiate each level
of an HORAM.
To avoid using an expensive oblivious sorting network when rebuilding
a Zigzag Hash Table, in Section~\ref{sec:ozht} we design a setup phase using
a Probabilistic Routing Network (Section~\ref{sec:network}).
The combination of the two primitives and their small constants result in an HORAM that is more efficient in practice
than previous HORAM schemes.

\section{Probabilistic Routing Network (PRN)}
\label{sec:network}
\begin{figure}[t!]
\centering
    \includegraphics[width=\columnwidth]{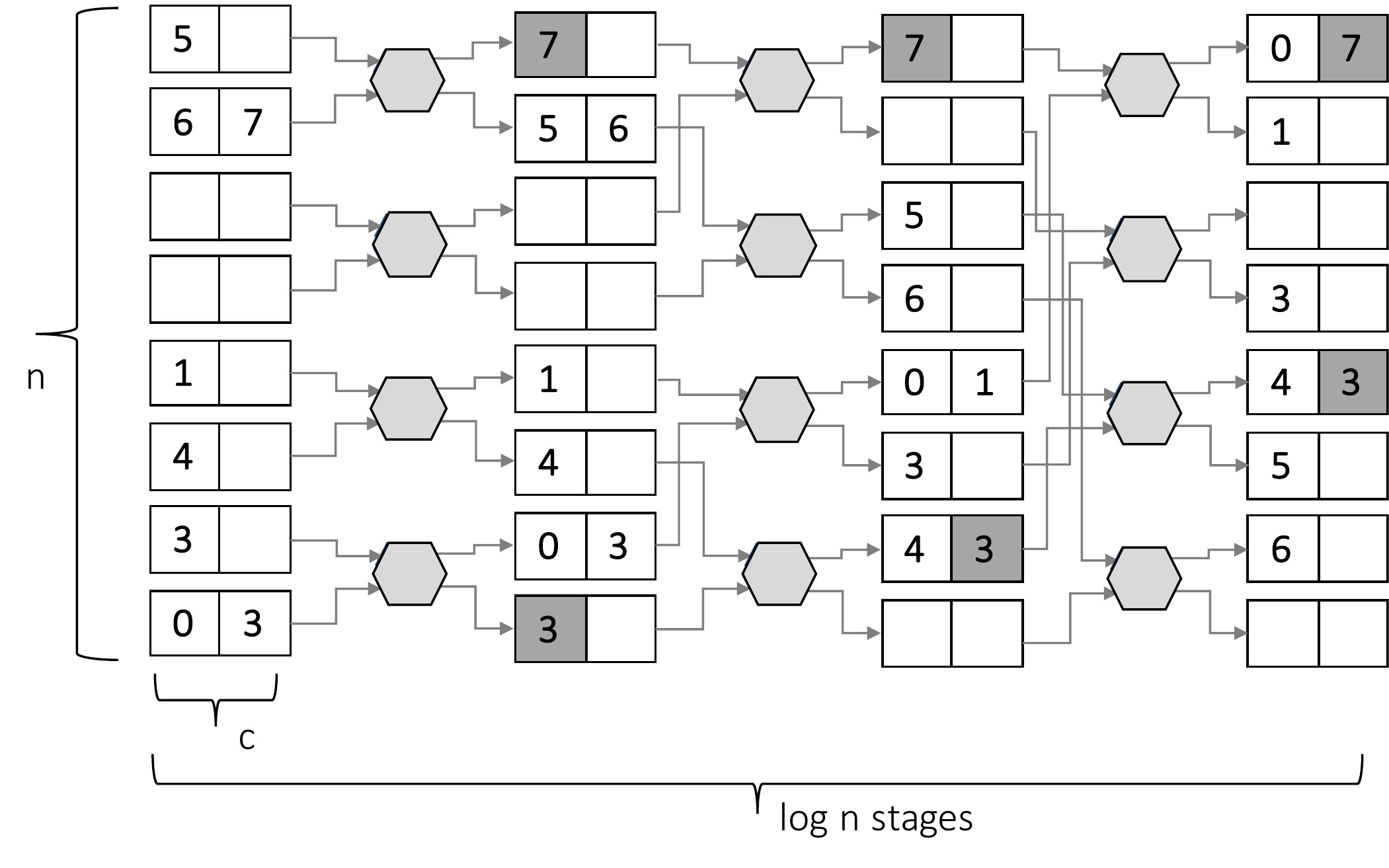}
\caption{Routing network for $\n=8$ elements where each element
is indicated in the input table by its destination position (e.g., the destination
of the element in the first bucket is 5 and destinations of the two elements in
the second bucket are 6 and 7).
The elements that could not be routed in the second stage (i.e., tagged $\false$ by the network) are greyed out.
}
\label{fig:network}
\end{figure}

In this section, we describe a Probabilistic Routing Network
that we use in the rebuild phase
of our HORAM instantiation.
However, the network as described here,
with its data-oblivious guarantees and efficiency,
may be of an independent interest.

A PRN takes a table of $\n$ buckets with $c$ slots in each bucket.
The bucket size is sufficiently small compared to $n$.
For example, in \ourscheme{} $c$ is set to $O(\log \log \n)$.
The input table has at most $\n$ real
elements labeled with indices $0,1, \ldots, \n-1$ denoting their destination buckets.
The labels do not have to be stored alongside the elements and can be determined
using a hash function, for example.
There can be more than one element in the input bucket
and, since a hash function is not a permutation function,
two or more elements may be assigned to the same destination bucket.

A PRN's goal is to route a large subset of elements to the correct output buckets.
In the output, elements in that subset are associated with the tag $\true$, others with $\false$.
Given the destination labels, the network's output can be used for efficient element lookups.
Such a lookup is \emph{successful}, if an element was routed to its correct destination (i.e., has tag $\true$).

For the security purposes of HORAM, we wish to design a network
that is data-oblivious in the following sense.
The adversary that observes the routing
as well as element accesses to the output
learns nothing about the placement of the elements
in the output as long as it does not learn whether
accesses are successful or not.

\paragraph{Algorithm}
We now describe a routing network with the above properties.
The routing network consists of $\log \n$ stages where the input at each stage, except the first stage, is the output of the previous stage.
The output of the last stage is the output of the network.
Besides the destination bucket indices, every element also carries a Boolean tag that is set to $\true$ for every real element in the original input.
(However, depending on the use case, input elements may be tagged $\false$, for example, if they are dummy elements.)
Once the tag is set to $\false$, it does not change throughout the network.
Crucially, only elements that are in their correct destination buckets in the output are tagged $\true$.

At each stage $i$, for $1 \leq i \leq \log n$, we perform a re-partition operation between each pair of buckets with indices that differ only by the bit in position $i$ (starting from the lowest bit).
(Note that here indices correspond to indices of the input buckets, not the indices of the destinations of the elements inside.) For example, for $n=8$,
when $i=1$ we re-partition the following pairs of buckets: $(000, 001), (010, 011), (100, 101) \ \mathrm{ and } \ 
 \ (110, 111)$
 and when $i = 3$, buckets: $(000, 100), (001,101), (010,110)$ and $(011,111)$. 
The re-partition operation itself moves items between the two buckets (number them one and two) and works as follows:
For elements with tag $\true$, we consider the $i$th bit of their bucket index.
If this bit is set to 0, the element goes to the first bucket,
and to the second bucket, otherwise. 
As there may be more than $c$ elements assigned the same bit at position $i$,
the network chooses $c$ of them at random, and retags the extra/spilled elements with $\false$.
All elements tagged $\false$, including those that arrived with the $\false$ tag from the previous stage,
are then allowed to go in either bucket (note that there is always space for all elements). 
The effect of the network is that at stage $i$, every element still tagged $\true$ will have the first $i$ bits match that of the destination bucket.
Therefore, after $\log n$ stages elements will either be tagged $\false$, or be in the correct bucket.
(An illustration of the routing network for $\n=8$ is given in Figure~\ref{fig:network}.)

\paragraph{Bucket Partitioning}
We now describe how to perform the re-partition of two buckets.
If the client memory is $2\BSize$ then the re-partition is trivial.
Otherwise, we use Batcher's sorting network~\cite{batcher} as follows.
The two buckets are treated as one contiguous range of length $2c$
that needs to be sorted ordering the elements as follows: (1) elements with bit~0 at the~$i$th position,
(2) empty elements and elements tagged~$\false$, and (3) elements with bit~1 at the~$i$th position.
Sorting will therefore cause a partition, with empty and $\false$-tagged elements in the middle, and other elements in their correct buckets.

\paragraph{Analysis}
In this section we summarize the performance and security properties of the network.

\newcommand{\Sort}{\mathsf{Sort}}

\begin{theorem}
The probabilistic routing network
operates in $(\n / 2) \log \n \times \Sort(2c)$ steps, where~$\Sort(c)$ is time to sort $c$ elements obliviously.
\label{thm:singularRN1}
\end{theorem}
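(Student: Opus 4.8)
The plan is a straightforward accounting of the work, stage by stage. The network is built from exactly $\log\n$ stages, and its total cost is the sum over stages of the cost of all re-partition operations performed in that stage, so it suffices to count, for one stage, how many re-partitions occur and what each one costs. At stage $i$ the buckets are grouped into pairs whose indices agree in every bit except bit $i$, i.e.\ the pairs $\{x, x\oplus 2^{i-1}\}$; this is a perfect matching of $\{0,\dots,\n-1\}$, so there are exactly $\n/2$ re-partition operations per stage, independent of $i$. Summing over the $\log\n$ stages gives $(\n/2)\log\n$ re-partition operations in total.

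It then remains to bound the cost of a single re-partition by $\Sort(2c)$. A re-partition treats its two buckets as one contiguous array of $2c$ slots and does three things: (i) for each real element, read the $i$th bit of its destination index and form the three-valued sort key (bit $0$ / empty-or-$\false$ / bit $1$); (ii) when more than $c$ elements share a bit value, keep $c$ of them, chosen at random, and retag the remainder $\false$; (iii) output the array sorted by this key. Step (iii) is one invocation of Batcher's network on $2c$ elements, costing $\Sort(2c)$ by definition. Step (i) is a linear scan, $O(c)$. Step (ii) is the only delicate point: it can be absorbed into the sort by appending a fresh random priority to each element and tie-breaking the sort key on that priority, or handled by a second linear pass that flips to $\false$ every element past the $c$-th in its group; either way it adds only $O(c)$ work. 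Hence a re-partition costs $\Sort(2c)+O(c)$, and since sorting $2c$ elements already takes $\Omega(c)$ steps this is $\Theta(\Sort(2c))$. (If the client memory already holds $2c$ elements the re-partition is done directly in private memory in $O(c)$ steps, which only sharpens the bound.) Multiplying the two counts yields the claimed $(\n/2)\log\n \times \Sort(2c)$ steps.

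I expect the main obstacle to be item (ii): arguing that the random down-selection and retagging is carried out obliviously and within the $\Sort(2c)$ budget, rather than requiring a separate oblivious shuffle of the bucket. Everything else — counting the stages, counting the bucket pairs at each stage, and invoking the stated cost of Batcher's network on $2c$ elements — is bookkeeping that follows directly from the algorithm description above.
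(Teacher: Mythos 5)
Your proof is correct and follows the same decomposition the paper uses: $\log\n$ stages, $\n/2$ re-partitions per stage, each costing $\Sort(2c)$ via Batcher's network (or directly in private memory if it holds $2c$ elements). The paper dispatches the theorem in a single sentence as trivial, so your extra care about absorbing the oblivious down-selection and retagging into the $\Sort(2c)$ budget is a useful elaboration rather than a departure in approach.
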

The theorem follows trivially: the network has $\log \n$ stages where each stage consists of repartitioning
of~$\n/2$ pairs of buckets.

\begin{corollary}
The probabilistic routing network
operates in $O(c (\log c)^2 \n \log \n)$ steps
using constant private memory and Batcher's sorting network for repartitioning bucket pairs.
\label{cor:singularRN2}
\end{corollary}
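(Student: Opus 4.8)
The plan is to combine Theorem~\ref{thm:singularRN1} with a standard size bound for Batcher's sorting network. Theorem~\ref{thm:singularRN1} already reduces the cost of the entire network to $(\n/2)\log\n$ invocations of an oblivious routine that sorts $2c$ elements, so all that remains is to bound $\Sort(2c)$ when this routine is instantiated with Batcher's network under the constant-private-memory constraint, and then substitute.

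First I would recall that Batcher's odd-even merge sort (equivalently, bitonic sort) on $m$ keys is a fixed network of $O(m(\log m)^2)$ compare-exchange gates whose wiring depends only on $m$, not on the data. Each gate reads two elements into private memory, performs a data-oblivious conditional swap (it writes the smaller element to the low slot and the larger to the high slot regardless of the comparison outcome, so the memory-access pattern is independent of the values), and writes both back. This uses space for only two elements, i.e., constant private memory, and the sequence of touched addresses is exactly the fixed gate list of the network, hence data-oblivious. So $\Sort(m) = O(m(\log m)^2)$ steps with constant private memory.

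Next I would substitute $m = 2c$. Since $c \ge 2$ (indeed $c = \Theta(\log\log\n)$ in our use), we have $\log(2c) = 1 + \log c = \Theta(\log c)$, so $\Sort(2c) = O\!\left(2c(\log 2c)^2\right) = O\!\left(c(\log c)^2\right)$. Plugging this into the bound of Theorem~\ref{thm:singularRN1} yields a total of $(\n/2)\log\n \cdot O\!\left(c(\log c)^2\right) = O\!\left(c(\log c)^2\,\n\log\n\right)$ steps, as claimed.

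There is essentially no hard step: the real content is the reduction already carried out in Theorem~\ref{thm:singularRN1}. The only points needing a sentence of care are that the compare-exchange gates of Batcher's network can be realized obliviously with constant private memory (immediate from the conditional-swap implementation above), and that the ``bucket partitioning'' target ordering into three groups (bit-$0$ elements, then empty and $\false$-tagged elements, then bit-$1$ elements) is just a total order on three symbols, so sorting with respect to it is still an instance of $\Sort(2c)$ and incurs no additional asymptotic cost.
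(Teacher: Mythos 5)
Your argument is correct and is the same straightforward substitution that the paper intends: the corollary has no explicit proof in the paper beyond Theorem~\ref{thm:singularRN1}, and your instantiation of $\Sort(2c)$ with Batcher's $O(m(\log m)^2)$ network under constant private memory is exactly the expected step. The extra care you take to note that the three-way target ordering is still a total order (and hence still a single sorting pass) is a correct and useful clarification, not a deviation.
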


The PRN is more efficient than Batcher's sorting network, that runs in $O(\n(\log\n)^2)$ steps,
at the expense of not being able to route all the elements.
Though the network cannot produce all $\n!$ permutations,
it gives us important security properties.
As long as
the adversary does not learn which output elements
are tagged $\true$ and $\false$ it does not learn the position
of the elements in the output. 
Hence, when using the network in the ORAM setting
we will need to ensure that the adversary does not learn whether an element
reached its destination or not.
We summarize this property below.

\begin{theorem}
The probabilistic routing network and lookup accesses to its output are
data-oblivious as long as the adversary does
not learn whether lookups were successful or not.
\label{thm:onetwork}
\end{theorem}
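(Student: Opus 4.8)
The plan is to exhibit a simulator $\Simulator$ that, given only the public parameters $\rho=(\n,c,q)$ (with $q$ the number of lookups), together with the enclosing‑HORAM invariant that every lookup is issued for a \emph{distinct} handle (a real key is searched at each level at most once between rebuilds, and dummy lookups carry a counter), outputs a memory‑access trace identically distributed to the real one; the argument then splits into the build phase and the lookup phase. For the build phase, I would observe that the PRN's address pattern is a deterministic function of $\n$ and $c$ alone: there are $\log\n$ stages, the $i$th stage re‑partitions the $\n/2$ bucket pairs whose indices differ only in bit $i$ (a fixed, public pairing), and each re‑partition is carried out either wholly in private memory (when the private memory is $2c$) or via a fixed Batcher network on the $2c$‑slot concatenation. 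The only data‑dependent choices — the comparator outcomes inside Batcher and which overfull elements are spilled and retagged $\false$ — never surface on the trace: the former are computed on decrypted values held in private memory, and the latter I would realize obliviously by attaching a fresh uniformly random priority to each element and ordering ties by that priority, so that ``keep the first $c$'' touches only fixed slots. Since every write‑back to external memory is re‑encrypted under a semantically secure scheme, real, dummy, and simulator‑generated contents are indistinguishable, and $\Simulator$ simply replays this fixed build trace.

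For the lookup phase, a single lookup for handle $x$ reads and re‑encrypts the $c$ slots of output bucket $h(x)$, where $h$ is the hash function defining this PRN instance; it reports success iff that bucket holds the sought element carrying tag $\true$, but — and this is the crux — it touches exactly those $c$ slots whether or not it succeeds, because a $\false$‑tagged copy of $x$ that migrated to some other bucket is never consulted. Modeling $h$ as a truly random function and using that the $q$ handles are distinct, the indices $h(x_1),\dots,h(x_q)$ are i.i.d.\ uniform on $\{0,\dots,\n-1\}$, so $\Simulator$ draws $j_1,\dots,j_q$ i.i.d.\ uniform and outputs a scan of bucket $j_i$ for the $i$th lookup. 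The simulated and real index distributions coincide exactly and no other addresses are involved, so the combined build‑then‑lookups trace is reconstructible from $\rho$ alone — i.e., the network is data‑oblivious — precisely when the adversary is not told the per‑lookup success bits, which is the theorem's hypothesis.

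The step I expect to be the main obstacle is making the lookup phase rigorous: one must argue that the $h(x_i)$ are \emph{jointly} simulatable, which leans on the distinct‑handle property supplied by the enclosing HORAM (so the fresh randomness of $h$ makes the indices independent and uniform) rather than on anything internal to the network, and one must show that success versus failure induces no branch in the address pattern — both for the ``spilled to $\false$'' case and for the interaction with the HORAM's early‑termination behaviour — which is exactly why the statement must be stated conditionally. The build‑phase claim, by contrast, is essentially code inspection plus obliviousness of Batcher's network and semantic security of the encryption.
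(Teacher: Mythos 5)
Your proof takes essentially the same approach as the paper's: a simulator that replays the deterministic, input-independent build trace of the network and, for each lookup, scans a freshly drawn uniformly random output bucket, with indistinguishability in the lookup phase following because the hash function is modeled as truly random and the per-lookup success bit is withheld from the adversary. Where you sharpen the paper's prose is in making explicit the tacit assumption that the looked-up keys must be distinct (otherwise the real trace would show repeated-bucket correlations that a simulator drawing fresh randomness per lookup could not reproduce) and in correctly attributing that invariant to the enclosing ZHT/HORAM rather than to anything internal to the network.
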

\begin{proof}
The accesses made by the routing network to the input are independent of the input
as they are deterministic and the re-partition operation is either done in private memory or using a data-oblivious sorting
primitive. Hence, we can build the simulator $\Sim{\route}$ that,
given a table of dummy elements,
performs accesses according to the network.
The adversary cannot distinguish $\Sim{\route}$
from the real routing even if it knows the location of the elements
in the input.

During the search, the adversary is allowed to request lookups for the elements
from the original input and observe the accessed buckets.
For data-obliviousness, the lookup simulator is notified about the lookup
but not the element
being searched. The simulator then accesses a random bucket
of the output generated by~$\Sim{\route}$.

The accesses of the simulator and the real algorithm diverge only
in the lookup procedure. The accesses generated by the simulator
are from a random function, while real lookups access elements
according to a hash function used to route the elements.
Since the adversary is not notified
whether a lookup was successful or not, it cannot distinguish
whether simulator's function could have been used to
successfully allocate the same elements as the real routing network.
Hence, the distribution of lookup accesses by the simulator is indistinguishable
from the real routing network.
\end{proof}

We defer the analysis of the number of elements
that are routed to their output bucket to the next section
as it depends on how elements are distributed
across input buckets.
For example, we assume random allocation
of the elements in the input in~\ourscheme{}.

\begin{table}[b]
\caption{Notation.}
\begin{center}
\begin{tabular}
{|c || l|}
\hline
$\N$ & Number of elements in ORAM\\
\hline
$\n$ & Capacity of a \ourhashtable{} (\ourhashtableS{}) \\
\hline
$\NTable$ & Number of tables in a \ourhashtableS{}\\
\hline
$\BSize$ & Bucket size of tables $H_1, \ldots, H_k$ in a \ourhashtableS{}\\
\hline
\end{tabular}
\end{center}
\end{table}

\section{Zigzag Hash Table (ZHT)}
\label{sec:ZHT}

In this section we describe a data structure, called Zigzag Hash Table (ZHT),
that we use to instantiate the
levels of our hierarchical ORAM.
A ZHT is a variation of multilevel adaptive hashing~\cite{hashtable}
but instantiated with tables of fixed size and buckets of size $c$ (see Section~\ref{sec:relatedwork} for a detailed comparison).
We begin with the non-oblivious version of a ZHT
and then explain how to perform search and
rebuild obliviously.

\begin{figure}[t!]
\centering
    \includegraphics[scale=0.45]{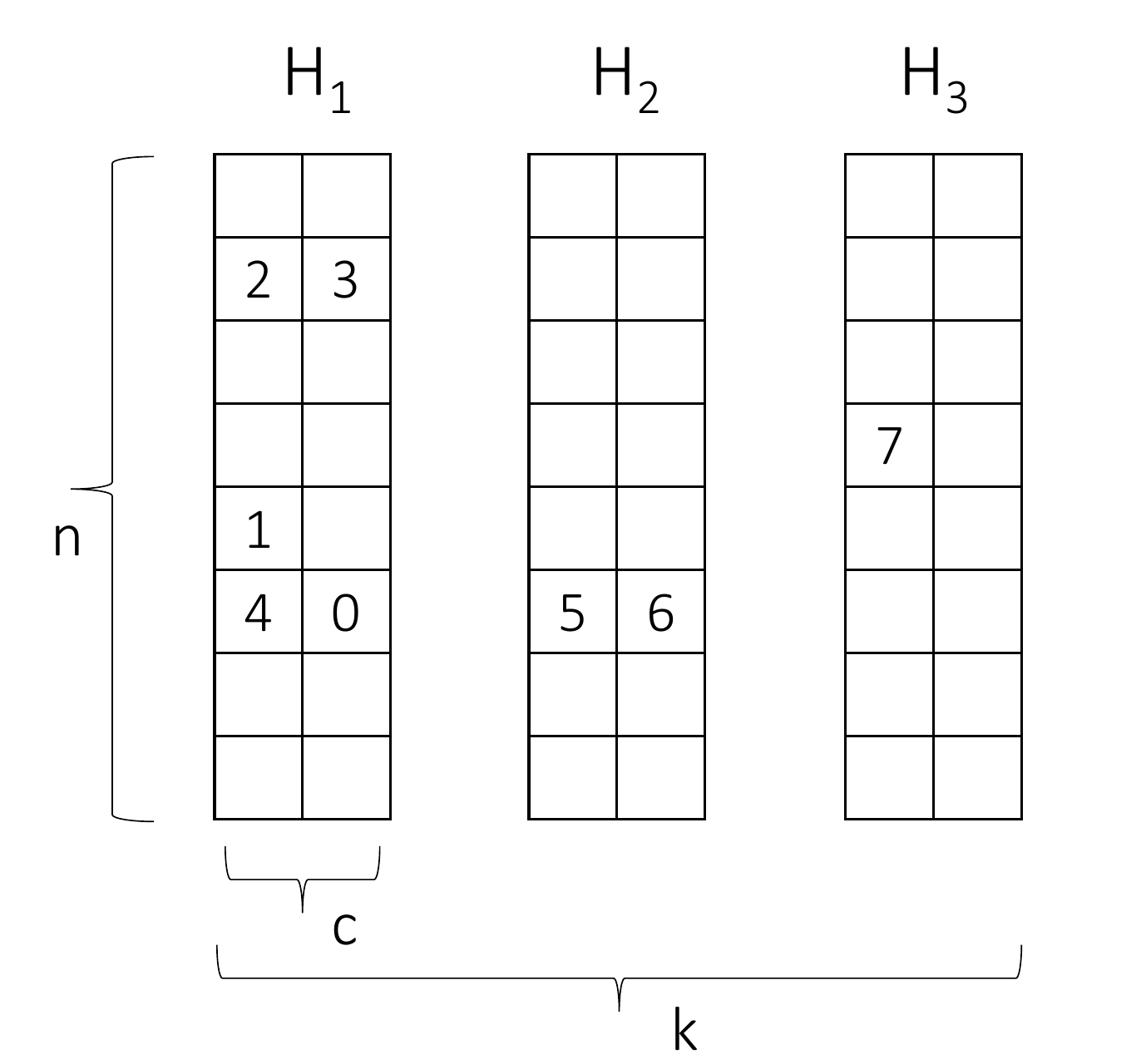}
\caption{A Zigzag hash table for $n=8$ elements with $k=3$ tables
where each table has $8$ buckets of size $c=2$.
Elements 0--7 were inserted in order,
with hash functions defined as: $h_1(1) = 4$, $h_1(2) = h_1(3) = h_1(5) = h_1(6) = h_1(7) = 1$, $h_1(4) = h_1(0) = 5$,
$h_2(5) = h_2(6) = h_2(7) = 5$ and $h_3(7) = 3$.
Zigzag path of element 7 is defined as $\p_7 = \{1,5,3\}$.}
\label{fig:ZHT}
\end{figure}
A ZHT is a probabilistic hash table that can store up to $\n$ key-value elements
with very high probability.
We refer to $\n$ as the capacity of a ZHT, that is, the largest number of elements
it can store.
ZHT consists of~$k$ tables, $H_1, H_2, \ldots, H_k$.
Each table contains $\n$ buckets, with each bucket
containing at most $\BSize$ elements.
Each table is associated with a distinct hash function $h_i$ that maps element keys to buckets.
The hash functions are fixed during the setup, before the elements arrive.

An element~$e$ is associated with an (abstract) 
\emph{zigzag path} defined by the series of bucket locations in each table: $h_1(e.\key)$, $h_2(e.\key),$ $\ldots, h_k(e.\key)$.
Elements are inserted into the table by accessing the buckets along the path, placing the element in the first empty slot.
If the path is full, the insertion procedure has failed.
When referring to the insertion of several elements, we use
$\throw(A, H_1$, $H_2$, $\ldots, H_k)$ that
takes elements in the array $A$ and inserts them according to their
zigzag paths.
Once inserted, elements are found by scanning the path and comparing indices of elements in the buckets until the element is found.
(See Figure~\ref{fig:ZHT} for an illustration of a ZHT.)

\begin{definition}[Valid zigzag paths]
Let $\p_1, \p_2, \ldots, \p_n$ be $n$ zigzag paths.
If every element $e_j$ can be inserted into a ZHT according to locations in the path $\p_j$,
then the paths are valid.
\end{definition}

If $n$ elements are successfully inserted in a ZHT,
we say that the zigzag paths associated with the elements
are valid.
We will use ZHT$_{n,k,c}$ to denote a ZHT with parameters $k$ and $c$ sufficient
to avoid overflow with high probability when inserting $n$ elements.
We summarize the parameters and the performance of a ZHT below.

\begin{theorem}[ZHT Performance]
Insertion and search of an element in a ZHT$_{n,k,c}$ takes $O(k\BSize)$ accesses.
Insertion of $n$ elements fails with negligible probability when
$k = O(\log \log n)$ and $c = O(\log \log n)$.
\label{thm:ZHT}
\end{theorem}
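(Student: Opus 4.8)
The bound $O(k\BSize)$ on insertion and search is immediate: an element's zigzag path visits exactly one bucket in each of the $k$ tables $H_1,\dots,H_k$, and each such bucket has $\BSize$ slots, so walking the path either to insert (scan for an empty slot) or to search (scan for a matching key, possibly for a dummy) costs $O(k\BSize)$ accesses regardless of the table contents. The substance is the overflow bound, which I would prove by analysing greedy insertion one table at a time. Set $n_1=n$ and, for $i\ge 1$, let $n_{i+1}$ be the number of the $n$ elements that fail to land in any of $H_1,\dots,H_i$; a run of $n$ insertions fails precisely when $n_{k+1}\ge 1$. The crucial structural fact is that the multiset $S_i$ of elements that reach table $H_i$ (and the order in which they do) is a function of $h_1,\dots,h_{i-1}$ only, whereas $h_i$ is a fresh independent random function; hence, conditioned on the history $h_1,\dots,h_{i-1}$ — and so on $S_i$ and on $n_i=|S_i|$ — the buckets $h_i(e)$, $e\in S_i$, form an independent throw of $n_i$ balls into $n$ bins, and $n_{i+1}=\sum_{b=1}^{n}(\text{load of bin }b-\BSize)^{+}$ for that experiment, a quantity whose distribution depends on the history only through $n_i$. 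Two easy facts I would record up front: the sequence $n_1\ge n_2\ge\cdots$ is non-increasing, and if ever $n_i\le \BSize$ then every bin has load at most $\BSize$, so $n_{i+1}=0$ and insertion succeeds.

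Next I would bound a single level of overflow. With $\beta=n_i/n\le 1$, a fixed bin's load is $\mathrm{Bin}(n_i,1/n)$, so $\Pr[\text{load}\ge j]\le\beta^{j}/j!$ and therefore $\mathbf{E}[n_{i+1}\mid n_i]=n\,\mathbf{E}[(\text{load}-\BSize)^{+}]=n\sum_{j\ge \BSize+1}\Pr[\text{load}\ge j]\le 2\,n_i^{\,\BSize+1}/((\BSize+1)!\,n^{\BSize})$. For concentration I would use that $n_{i+1}$ is a function of the $n_i$ independent bin choices with bounded differences — moving one ball changes the total overflow by at most $1$ — so McDiarmid's inequality gives a sharp tail bound; alternatively the indicators ``ball $j$ overflows'' are negatively associated and a multiplicative Chernoff bound applies. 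Either way one obtains, w.v.h.p., $n_{i+1}\le 2\,\mathbf{E}[n_{i+1}\mid n_i]+O(\secsize)$, and substituting the expectation estimate yields a recursion of the shape $\beta_{i+1}\le 3\,\beta_i^{\,\BSize+1}/(\BSize+1)!$ (plus a small additive slack), i.e. a doubly-exponential contraction of $\beta_i$, valid as long as $n_i$ is not yet too small.

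Finally I would iterate this across the $k$ transitions and union-bound over them, which keeps the overall failure probability negligible (a $\le k$-fold union bound costs only an additive $\log k=O(\log\log\log n)$ in the security parameter). Taking $\BSize=\Theta(\log\log n)$ makes each contraction divide $\beta_i$ by roughly $(\BSize+1)!=2^{\Theta(\log\log n\,\cdot\,\log\log\log n)}$, so after $O(\log\log n/\log\log\log n)=o(\log\log n)$ levels $n_i$ has shrunk far enough that the bare expectation bound $\mathbf{E}[n_{i+1}\mid n_i]\le 2\,n_i^{\,\BSize+1}/((\BSize+1)!\,n^{\BSize})$ is itself negligible; one further level then forces $n_{k+1}=0$ w.v.h.p. by Markov's inequality (or, alternatively, the contraction reaches $n_i\le\BSize$, after which overflow is impossible). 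Since $k=\Theta(\log\log n)$ comfortably exceeds the $o(\log\log n)$ levels just used, the stated parameters suffice. I expect the main obstacle to be precisely this last juncture: the concentration step carries an additive error that halts the doubly-exponential recursion once $n_i$ becomes small, yet $n_{k+1}$ must be driven all the way to $0$; the resolution is to hand off from concentration to a direct Markov bound on the (by then negligible) conditional expectation, and to verify that the number of contraction steps needed before that hand-off is still $o(\log\log n)$ uniformly for all sufficiently large $n$, so that $k=\Theta(\log\log n)$ tables are enough.
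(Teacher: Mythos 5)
Your proof is correct and follows essentially the same route as the paper's sketch together with its supporting Lemmas~\ref{lemma:throw_ave}, \ref{lemma:firstounds}, and~\ref{lemma:lastrounds}: bound the expected per-table overflow at roughly $n_i^{c+1}/((c+1)!\,n^c)$, concentrate (the paper via Chernoff and a Poisson bin-load approximation, you via McDiarmid or negative association---equivalent here), iterate the resulting doubly-exponential contraction, and finish with a Markov step once the conditional expectation is negligibly small, which matches the paper's observation that the overflow expression drops below $1$. Your independence observation (fresh $h_i$ conditioned on the history of $h_1,\dots,h_{i-1}$) is the same inductive hook the paper uses implicitly, and your $O(\log\log n/\log\log\log n)$ count of contraction steps is a mild sharpening that of course still lies in $O(\log\log n)$.
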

\begin{proof} \emph{(Sketch.)}
The analysis follows Theorem 5.2 from~\cite{Adler:1995:PRL:225058.225131}, adopted for our scenario.
We note that the analysis in~\cite{Adler:1995:PRL:225058.225131} is for re-throwing elements into the same table
and not into an empty table, as is the case in our data structure.

We refer to an element of a table as \emph{overflowed}
if the bucket it was allocated to already had $\BSize$ elements
in it.

For each table $H_i$, we are interested in measuring $\inc_i$, the number of incoming elements, and $\out_i$, the number of outgoing elements.
The former comprises the elements that did not fit
in tables $1,2,\ldots,i-1$.
The number of outgoing elements comprises elements that did not fit into table $i$
when throwing $\inc_i$.
Except for the first phase, the number
of incoming elements into $H_i$ is determined by the number of outgoing elements
of $H_{i-1}$.
Hence, to determine $k$
we need to bound the number of outgoing elements
at each level.

We first show that one round of throw
is required
to bound the number of overflowed elements to be at most $n/(2e)$
(Lemma~\ref{lemma:firstounds}).
We then show that as long as the number of elements
arriving at the table is less than $n/(2e)$
only $O(\log \log n)$ tables are required to fit all $n$ elements
(Lemma~\ref{lemma:lastrounds}).
\end{proof}

We note that buckets that hold at most one element (i.e., $c =1$)
are sufficient to bound the number of tables, $k$, to be $\log \log n + O(1)$
~\cite{Adler:1995:PRL:225058.225131,hashtable}.
However, as will be explained in the following sections,
in order to insert elements into a ZHT
obliviously we require $c$ to be $O(\log \log n)$.

As described so far, the insertion procedure of a ZHT does not provide security properties
one desires for ORAM (e.g., observing insertion of $n$ elements and a search for element $e$ reveals when $e$
was inserted). 
In the next section, we make small alternations to the
insertion and search procedures of a ZHT and summarize the security properties a ZHT already has.
In particular, we will show that one can obliviously search for elements in a ZHT, as long
as it was constructed privately.
Then, in Section~\ref{sec:ozht} we show how to insert~$n$ elements into a ZHT obliviously
using small private~memory.

\subsection{Data-oblivious properties of ZHT}
We make the following minor changes to a ZHT
and its insertion and search procedures.
We fill all empty slots of a ZHT with dummy elements.
Recall that in the ORAM setting the data structure is encrypted and dummy elements
are indistinguishable from real.
Then, when inserting an element, instead of stopping once the first empty slot is found,
we access the remaining buckets on the path,
pretending to write to all of them.
Similarly, during the search, we do not stop when the element is found
and access all buckets on its zigzag path.
Hence, anyone observing accesses of an insert (or search)
observe the zigzag path but do not know where on the path the element
was inserted (or found). 
We define dummy counterparts of insertion and search procedures that simply
access a random bucket in each table of a ZHT, performing fake writes and reads.

Given the above changes, the search for $n$ elements
is data-oblivious as long as every element is searched for only once
and assuming that an adversary does not observe how the table was built.

\begin{lemma}[ZHT Data-Oblivious Search]
Searching for~$n$ distinct elements in a ZHT$_{n,k,c}$
is data-oblivious.
\label{lemma:ZHTOblivious1}
\end{lemma}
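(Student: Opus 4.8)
The plan is to exhibit a simulator $\Sim{\search}$ that, given only the number of lookups to perform (and the public parameters $n,k,c$ of the table), produces a memory-access trace indistinguishable from that of the real search procedure on any sequence of $n$ distinct keys. The key observation, already baked into the modified search procedure described just above the lemma, is that a real search for a key $e$ touches exactly one bucket in each of the $k$ tables, namely $h_1(e.\key),\ldots,h_k(e.\key)$, and scans all $c$ slots of each, regardless of where (or whether) $e$ is actually found. So the trace of one real lookup is completely determined by the tuple $(h_1(e.\key),\ldots,h_k(e.\key))$, and the simulator simply needs to produce a plausible-looking such tuple without knowing $e$.

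First I would define $\Sim{\search}$: on each lookup request (it is told that a lookup is happening, but not which key), it picks an independent uniformly random bucket index in $[n]$ for each table $H_1,\ldots,H_k$ and accesses all $c$ slots of each of those $k$ buckets. This is exactly the ``dummy search'' operation already defined in the text. I would then argue indistinguishability in two steps. Step one: because the $h_i$ are modeled as truly random functions (as stated in the Preliminaries) and the $n$ queried keys are \emph{distinct}, the collection of path-tuples $\{(h_1(e_j.\key),\ldots,h_k(e_j.\key))\}_{j=1}^n$ is distributed exactly as $n$ independent uniform draws from $[n]^k$ — which is precisely what the simulator samples. Here I must also note that the adversary's view during the build phase is \emph{not} available to it (this is the ``as long as the table was constructed privately'' hypothesis), so the only correlation the adversary could exploit — namely, that the real paths are the ones into which the elements were successfully thrown — is exactly what it is denied. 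Step two: the remaining accesses of the search procedure (the per-slot scans within a bucket, the order of visiting tables, the fake writes) are fixed, data-independent patterns, so they contribute identically to the real trace and the simulated trace.

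I would then assemble these into a hybrid/indistinguishability argument: the real trace is (fixed skeleton) $\circ$ (real path-tuples), the simulated trace is (same fixed skeleton) $\circ$ (uniform path-tuples), and the two tuple-distributions are identical by the randomness-of-$h_i$ plus distinctness argument, hence the traces are identically distributed (not merely computationally close, once we are in the random-function model). One subtlety worth flagging explicitly: distinctness of the $n$ keys is essential — if a key were searched twice, the real trace would repeat a path-tuple while the simulator's two draws would almost surely differ; the lemma's hypothesis of ``$n$ distinct elements'' rules this out, and it dovetails with the HORAM-level invariant (each level is searched for a given key at most once between rebuilds).

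The main obstacle is making precise the claim that the real path-tuples ``look uniform'' despite the fact that, conditioned on the build having succeeded, they are not unconditionally a fresh uniform sample — they are the outcome of the $\throw$ process, which selected and possibly relabeled elements based on the $h_i$ values. The resolution is that the adversary in this lemma never observes the build, so from its perspective the $h_i$ are untouched random functions and the queried keys' images are genuinely uniform and independent; formally I would phrase this as: conditioned on the adversary's view (which excludes build-time accesses), the $h_i$ restricted to the $n$ queried distinct keys are uniform over $[n]^{k\times n}$, and the search trace is a deterministic function of those images together with the public fixed skeleton. Getting that conditioning statement stated cleanly — and being explicit that the lemma is therefore about search-in-isolation, with obliviousness of the combined build-and-search deferred to Section~\ref{sec:ozht} — is the crux; everything else is bookkeeping.
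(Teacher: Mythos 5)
Your construction of $\Sim{\search}$ (dummy searches picking uniformly random buckets) matches the paper's, and your identification of key-distinctness and the random-function model as the two hypotheses doing the real work is correct, as is the ``fixed skeleton plus path-tuple'' decomposition. The problem is in your resolution of what you yourself flag as the crux. You correctly observe that, conditioned on the build having succeeded, the real path-tuples are \emph{not} unconditionally a fresh uniform sample. But your proposed fix --- that the adversary ``never observes the build, so from its perspective the $h_i$ are untouched random functions and the queried keys' images are genuinely uniform'' --- is a non-sequitur. Even though the adversary does not see the build-time accesses, the mere fact that the search phase takes place implicitly conveys to the adversary that the build succeeded. Its view is therefore conditioned on that event, and under that conditioning the real path-tuples are uniform-\emph{conditioned-on-forming-valid-zigzag-paths}, whereas your simulator's are unconditionally uniform. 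The adversary can test validity locally (it sees all $n$ path-tuples and can re-run the $\throw$ process on them), so this difference is observable. The two distributions are therefore not identical --- only statistically close --- and your claim of ``identically distributed (not merely computationally close)'' is false as stated.

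The paper closes exactly this gap with an explicit failure analysis that your write-up is missing. It introduces a failure event for the real algorithm (the private build cannot place all $n$ elements) \emph{and}, symmetrically, a failure event for the simulator (the $n$ uniformly drawn path-tuples do not form $n$ valid zigzag paths). It then argues that conditioned on neither event occurring, the two trace distributions coincide, and that each failure event has negligible probability by Theorem~\ref{thm:ZHT}, so the unconditional statistical distance is negligible. That negligible-probability step is essential to the proof and is exactly what your argument needs but does not supply; the ``adversary doesn't watch the build'' observation, while true and relevant (it is why the hash values on the queried keys remain fresh randomness rather than public data), does not by itself remove the conditioning-on-success bias. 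Everything else in your proposal --- the dummy-search simulator, the role of distinctness, the observation that per-bucket scans and table ordering are a fixed data-independent skeleton --- is sound and essentially the same as the paper's.
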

\begin{proof}
We can build a simulator that given $n,k$ and $c$
produces a sequence of accesses to its tables indistinguishable
from locations accessed by a ZHT algorithm.

Consider an algorithm (with large private memory) that takes~$n$ elements
and inserts them one by one into an empty ZHT.
If it cannot insert all $n$ elements, it fails.
Otherwise, during a search request for an element key, $e.\key$,
it accesses the locations that correspond to the zigzag path
defined by $e.\key$ and the hash functions of the~ZHT.

The simulator,  $\Sim{\search}$, sets up~$k$ tables with $n$ buckets of size~$c$ each.
For every search, the simulator performs a dummy search.
We say that a simulator fails if after simulating searches for $n$ elements, it
has produced accesses to ZHT tables that do not form $n$ valid zigzag paths.

In each case the adversary chooses $n$ original elements. It then requests a search
for one of the $n$ elements that it has not requested before.
A ZHT algorithm is given the key of the element, while the simulator is invoked to perform a dummy search.
In both cases, the adversary is given the locations accessed during the search.
If neither the simulator nor the ZHT algorithm fail,
the accessed locations are indistinguishable as long as ZHT's hash functions are indistinguishable from truly random functions.
The failure cases of the algorithms allow the adversary to distinguish the two as failures happen
at different stages of the indistinguishability game.
However, both failures happen with negligible probability due to the success rate of a ZHT$_{n,k,c}$.
\end{proof}

\begin{lemma}
Searching for $n$ distinct elements in a ZHT$_{n,k,c}$
is data-oblivious
even if the adversary knows the order of insertions.
\label{lemma:ZHTOblivious2}
\end{lemma}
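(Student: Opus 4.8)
The plan is to bootstrap off Lemma~\ref{lemma:ZHTOblivious1} and argue that revealing the insertion order does not give the adversary any additional distinguishing power. The key observation is that in the indistinguishability game of the previous lemma, the only place where the simulator's behavior and the real ZHT's behavior could diverge is (a) in the search accesses themselves, which are determined by the zigzag paths (hence by the hash functions), and (b) in the failure events. Neither of these depends on the order in which the $n$ elements were inserted: the search access pattern for a given key $e.\key$ is $h_1(e.\key), \ldots, h_k(e.\key)$ regardless of when $e$ was placed, and — crucially — whether insertion of a fixed set of $n$ elements succeeds is \emph{independent of the insertion order}. I would first state and justify this order-independence claim: since each element is greedily placed in the first empty slot along its (fixed) zigzag path, one can show by a standard exchange/coupling argument that the final occupancy of every bucket, and in particular the success/failure of the whole insertion, is a symmetric function of the multiset of keys.

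Concretely, the steps in order: First, I would recall the game from Lemma~\ref{lemma:ZHTOblivious1}, now additionally handing the adversary a permutation $\pi$ specifying the order of insertion of the $n$ chosen elements. Second, I would argue that the real ZHT algorithm's transcript (the locations accessed during each of the $n$ searches, plus the success/failure bit) is a deterministic function of the hash functions $h_1, \ldots, h_k$ and the set of keys — not of $\pi$ — invoking the order-independence of greedy insertion into fixed paths. Third, I would note that the simulator $\Sim{\search}$ ignores $\pi$ entirely (it only ever does dummy searches and checks validity of the induced paths, which again is order-independent). Therefore the adversary's view in both worlds is exactly what it was in Lemma~\ref{lemma:ZHTOblivious1}, conditioned on the extra information $\pi$, which is itself chosen by the adversary and carries no information about the hidden hash functions. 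Fourth, I would conclude that indistinguishability (up to the negligible failure probability of a ZHT$_{n,k,c}$, exactly as before) is preserved.

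The main obstacle I anticipate is making the order-independence of greedy insertion rigorous and precise in the ZHT setting. One has to be careful: ``first empty slot along the path'' is a sequential rule, and a priori the slot chosen for element $e$ depends on which earlier elements happened to land in buckets shared with $e$'s path. The clean way is to observe that the \emph{final set} of occupied slots in each bucket depends only on the multiset of keys — the total number of elements whose zigzag path passes through a bucket, together with the structure of the tables, determines occupancy by a monotone/greedy argument — so success/failure and hence the transcript are symmetric in the insertion order. I would present this as a short lemma or inline remark rather than a full combinatorial proof, since the rest of the argument is a direct reduction to Lemma~\ref{lemma:ZHTOblivious1}. Everything else — the simulator construction, the reliance on pseudorandomness of the $h_i$, and the negligible-failure bookkeeping — is inherited verbatim from the previous proof.
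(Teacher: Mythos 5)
Your high-level plan---reduce to Lemma~\ref{lemma:ZHTOblivious1} by arguing that the insertion order gives the adversary nothing new---is the right instinct, and the first half of your argument (the search trace for key $e.\key$ is exactly $h_1(e.\key),\ldots,h_k(e.\key)$ regardless of where $e$ landed) is precisely the observation the paper uses. The problem is the lemma you then try to lean on: the claim that the final occupancy of each bucket, and hence the success/failure of insertion, is a symmetric function of the multiset of keys. That claim is \emph{false} for greedy first-empty-slot insertion along fixed paths. Concretely, take $k=2$ tables with two buckets each of size $c=1$, and three elements $a,b,c$ with paths $a\mapsto(1,2)$, $b\mapsto(1,1)$, $c\mapsto(1,1)$. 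Inserting in order $a,b,c$: $a$ takes $H_1[1]$, $b$ spills to $H_2[1]$, $c$ finds both slots on its path full and insertion fails. Inserting in order $b,c,a$: $b$ takes $H_1[1]$, $c$ spills to $H_2[1]$, $a$ spills to $H_2[2]$, and insertion succeeds. So both the bucket contents and the success bit depend on the order, and the exchange/coupling argument you sketch does not go through.

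Fortunately you do not need that claim at all. The paper's actual argument is weaker and simpler: reordering changes only \emph{where on its path} each element lands, never \emph{which path} it is on, and the adversary observes only whole-path searches, so the search trace is order-invariant. The residual dependence on order shows up only in the failure event, and the paper handles that exactly as in Lemma~\ref{lemma:ZHTOblivious1}: the ZHT failure probability is negligible for the chosen $k,c$ regardless of which insertion order the adversary picks (Theorem~\ref{thm:ZHT} gives a bound that does not depend on the order), so the two worlds can only be distinguished with negligible probability. If you drop the order-independence lemma and replace your ``step~2'' by ``the search-trace component of the transcript is a function of the hash functions and the set of keys, and the failure bit, while order-dependent, is $1$ with negligible probability for every order,'' your reduction becomes correct and coincides with the paper's proof.
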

\begin{proof}
The proof follows the proof of Lemma~\ref{lemma:ZHTOblivious1},
except now the adversary controls the order in which the ZHT algorithm
inserts the elements. As before, we use $\Sim{\search}$ that is not given the actual elements and the adversary does not observe accesses of the insertion phase.

Let $\p_i$ be the zigzag path of element $e_i$. Notice that
$k$ hash functions that determine zigzag paths are chosen
before the elements are inserted into an empty ZHT.
Hence, the order in which~$e_i$ is inserted in the ZHT influences where on the path~$e_i$ will appear (e.g., if~$e_i$ is inserted first, then it will be placed into~$H_1$, on the other hand, if it is inserted last it may not
find a spot in $H_1$ and could even appear in $H_k$) but not the path it is on.
Since the adversary does not learn where on the path an element was found, the rest of the proof is
similar to the proof of Lemma~\ref{lemma:ZHTOblivious1}
\end{proof}

We capture another property of a ZHT
that will be useful when used inside of an ORAM:
locations accessed during the insertion
appear to be independent of the elements being thrown.
Moreover, as we show in the next lemma,
the adversary cannot distinguish whether
thrown elements were real or dummy.
\begin{lemma}
The algorithm $\throw(A, H_1, H_2, \ldots, H_k)$
is data-oblivious where $H_1, H_2, \ldots, H_k$
are hash tables of ZHT$_{n,k,c}$
and the size of $A$ is at most $\n$.
\label{lemma:ZHTObliviousInsert}
\end{lemma}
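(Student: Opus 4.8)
The plan is to exhibit a simulator $\Sim{\throw}$ that, given only the public parameters $n,k,c$ and the (padded, hence public) number of elements in $A$, produces a trace of memory accesses indistinguishable from that of $\throw(A, H_1, \ldots, H_k)$. The crucial observation, exactly as for the search procedure, is that the modified $\throw$ walks the \emph{entire} zigzag path of each element: for a real element $e$ it reads and writes back (re-encrypted) every one of the $k$ buckets $h_1(e.\key), \ldots, h_k(e.\key)$ in full, regardless of which slot $e$ actually lands in and regardless of whether the insertion succeeds; for a dummy element it reads and writes a uniformly random bucket in each table. Thus the address trace contributed by the $j$th element is just a list of $k$ bucket indices, one per table, and it is independent of the outcome of the insertion, of the contents written back (always fresh semantically-secure ciphertexts), and of whether the $j$th element is real or dummy.

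First I would describe $\Sim{\throw}$ precisely: it allocates $k$ tables of $n$ buckets of size $c$, and for each of the $\abs{A}$ steps it samples $k$ independent uniformly random indices $r_1, \ldots, r_k \in [n]$ and performs a full-bucket dummy read/write on bucket $r_i$ of $H_i$, writing fresh encryptions. Next I would argue that the simulated and real traces are identically distributed, up to the gap between the PRFs instantiating the $h_i$ and truly random functions. The addresses produced by a dummy element in $A$ match the simulator's by construction. For a real element $e_j$, since $h_1, \ldots, h_k$ are modeled as independent truly random functions and the real keys in $A$ are distinct, the tuple $(h_1(e_j.\key), \ldots, h_k(e_j.\key))$ is distributed as $k$ independent uniform indices, and the tuples for different $j$ are mutually independent; this is exactly the distribution the simulator samples. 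Since the data written back is in all cases indistinguishable from random ciphertexts, the two traces are indistinguishable, and hence $\throw$ is data-oblivious.

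The main obstacle is a modeling subtlety rather than a hard computation, and has two parts. (i) One must ensure the access pattern of $\throw$ is genuinely oblivious to the insertion outcome: this relies on the modification (stated in the preceding subsection) that the procedure always issues the same reads and re-encrypted writes along the full path and runs to completion even when a path is full, so that a failed insertion is invisible in the trace of $\throw$ itself. (ii) One must justify that real keys yield independent uniform paths, which depends on the distinctness of the real keys in $A$ (guaranteed in the ZHT/ORAM usage) so that each path evaluates the random hash functions at a fresh point; if $A$ could contain repeated real keys, identical paths would appear with probability $1$ in the real trace but only with negligible probability under the simulator, so one would either invoke distinctness or absorb that negligible collision probability into the indistinguishability bound. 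Unlike the search case (Lemma~\ref{lemma:ZHTOblivious1} and Lemma~\ref{lemma:ZHTOblivious2}), there is no need here to reason about ``successful lookups'', since $\throw$ never reveals whether or where an element was placed; this makes the argument strictly simpler.
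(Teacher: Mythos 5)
Your proof takes essentially the same approach as the paper's: both construct a simulator that, for each element of $A$, reads and writes one uniformly random bucket per table, and both reduce indistinguishability to the fact that, with truly random $h_1,\ldots,h_k$ and distinct keys, the real zigzag paths are also independent uniform bucket indices. The one place you diverge in emphasis is the closing claim that ``there is no need here to reason about successful lookups'': the paper's proof does explicitly bound the distinguishing advantage by the event that $\throw$ fails or that $\Sim{\throw}$ emits paths that would not form a valid ZHT (both negligible by Theorem~\ref{thm:ZHT}), whereas you argue the traces are identically distributed and relegate failure to a side remark -- this is defensible for the standalone $\throw$ trace precisely because of the run-to-completion modification you cite, but the paper's explicit failure accounting is what makes the lemma compose cleanly into Theorem~\ref{thm:oZHT} where lookups do follow.
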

\begin{proof}
The adversary chooses up to $n$ distinct elements and
requests $\throw$.
The algorithm fails if it cannot insert all elements
in $k$ tables.
The simulator $\Sim{\throw}$ is given $A$ that contains only dummies.
It accesses a random bucket in each of the $k$ tables for every
element in~$A$.

The adversary can distinguish the two algorithms only when
$\throw$ fails or when $\Sim{\throw}$ generates invalid zigzag paths for~$A$.
According to Theorem~\ref{thm:ZHT}, this happens with negligible probability.
\end{proof}

\begin{lemma}
Consider $\throw(A, H_1, H_2, \ldots, H_k)$ where $A$ contains $n_\real \le n$ real elements and $n_\dummy$ dummy elements,
where a real element is inserted as usual while on a dummy element each table is accessed at a random index.
The algorithm $\throw$ remains data-oblivious.
\label{lemma:throw_with_dummy}
\end{lemma}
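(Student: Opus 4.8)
The plan is to reuse, essentially verbatim, the simulator argument of Lemma~\ref{lemma:ZHTObliviousInsert}, and to show that allowing $A$ to contain dummy entries changes neither the simulated trace nor the analysis. Concretely, I would let $\Sim{\throw}$ be given only the public length $\abs{A} = n_\real + n_\dummy$ and, for each of the $\abs{A}$ entries in order, access one uniformly random bucket in each of the tables $H_1,\ldots,H_k$ (this is exactly the trace of a dummy insertion). I would then show that the real execution of $\throw(A,H_1,\ldots,H_k)$ produces an identically distributed trace except on a negligible-probability event, and conclude data-obliviousness.

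The core observation is that, after the oblivious modification of the insertion procedure --- every bucket on a zigzag path is touched regardless of where the element actually lands --- the memory trace of $\throw$ depends only on the sequence of zigzag paths and not at all on bucket occupancy, hence not on the order of the entries nor on the positions of the dummies inside $A$. For a dummy entry the accessed path is by definition $k$ fresh uniform buckets, matching $\Sim{\throw}$ exactly. For a real entry with key $e.\key$, the path is $(h_1(e.\key),\ldots,h_k(e.\key))$; since the $h_i$ are modeled as independent truly random functions and real keys are distinct, these are $k$ independent uniform buckets, jointly independent across entries, again matching $\Sim{\throw}$. Therefore the joint distribution of the real trace equals that of the simulated trace, conditioned on the event that $\throw$ does not fail and that the buckets drawn by $\Sim{\throw}$ for the $n_\real$ real slots happen to form valid zigzag paths (in the sense of the valid-zigzag-paths notion). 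As in Lemma~\ref{lemma:ZHTObliviousInsert}, the adversary can distinguish only if one of these failure events occurs, and it cannot otherwise tell which entries of $A$ were real.

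It then remains to bound those failure events. Here $\throw$ inserts only $n_\real \le n$ real elements into a ZHT$_{n,k,c}$ --- the dummies occupy no slots --- so by Theorem~\ref{thm:ZHT} the failure probability is negligible, and symmetrically the probability that the simulator's random draws fail to form valid paths for the $n_\real$ real keys is negligible by the same bound. Combining the two, the real and simulated traces are indistinguishable, which gives data-obliviousness.

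The step I expect to require the most care is making precise that inserting $n_\real \le n$ real elements (plus arbitrarily many dummies) cannot overflow any table more often than inserting exactly $n$ real elements: this needs a short monotonicity/coupling remark, or an appeal to the overflow analysis of~\cite{Adler:1995:PRL:225058.225131}, since Theorem~\ref{thm:ZHT} is stated for exactly $n$ elements. Everything else is a direct replay of the simulator construction of Lemma~\ref{lemma:ZHTObliviousInsert}, with the single added remark that a dummy entry's trace is simulated identically to a real entry's trace, which is precisely what makes the real/dummy distinction unobservable.
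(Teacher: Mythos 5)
Your proof is correct and follows essentially the same route as the paper: define $\Sim{\throw}$ to touch one uniformly random bucket per table per entry of $A$, observe that the only distinguishing events are $\throw$ failing or the simulator's draws failing to form valid zigzag paths for the real slots, and bound both by Theorem~\ref{thm:ZHT}. Your added monotonicity remark (that $n_\real \le n$ real elements overflow no more often than exactly $n$) is a reasonable point of care that the paper leaves implicit, but it does not change the structure of the argument.
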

\begin{proof}
We define an adversary that chooses a permutation of $n_\real$ real elements
and $n_\dummy$ dummy elements, that is, it knows the locations of the
real elements.
We build a simulator that is given an empty ZHT$_{n,k,c}$ table. The simulator
accesses a random bucket in each table of the ZHT on every element in~$A$.

The adversary can distinguish between the simulator and the $\throw$ algorithm if (1) the algorithm fails, which happens when the algorithm cannot insert all real elements successfully
or (2) the simulator generates invalid zigzag paths when performing
a fake insertion for the input locations that correspond to real elements
(recall that the simulator does not know the locations of real elements).
Both events correspond to the failure of the ZHT construction, which happens with negligible probability (Theorem~\ref{thm:ZHT}).
\end{proof}

\subsection{Data-oblivious ZHT Setup}
\label{sec:ozht}

A Zigzag hash table allows one to search elements obliviously if insertion
of these elements was done privately.
Since the private memory of the client is much smaller than $\n$ in our scenario,
we develop
an algorithm that can build a ZHT relying on small memory and accessing the server memory
in an oblivious manner.
That is, the server does not learn the placement of the elements
in the final~ZHT.

The algorithm starts by placing elements in the tables $H_1, H_2, \ldots, H_k$
using $\throw$ procedure from~Section~\ref{sec:ZHT} according to
random hash functions.
The algorithm then proceeds in $k$ phases.
In the first phase, $H_1$ is given as an input to the probabilistic routing
network where the destination bucket of every element in $H_1$
is determined by $h_1$.
The result is a hash table
with some fraction of the elements in the right buckets and some overflown elements (tagged as $\false$).
The algorithm then pretends to throw \emph{all} elements of~$H_1$ into
tables $H_2, H_3, \ldots, H_k$ using yet another set of random functions.
However, only overflown elements, that is elements that were not routed to their correct bucket in $H_1$,
are actually inserted into tables $H_2, H_3, \ldots, H_k$.
For empty entries and elements that are labeled $\true$ in $H_1$,
the algorithm accesses random locations in $H_2, H_3, \ldots, H_k$
and only pretends to write to them.
In the end of this step every element that did not fit into~$H_1$
during the routing network appears in one of $H_2, H_3, \ldots, H_k$
at a random location.
In the second phase, $H_2$ goes through the network
where destination buckets are determined using $h_2$, and so on.
In general, every~$H_i$ goes through the network using~$h_i$,
then all elements of~$H_i$ are thrown into  tables $H_{i+1}, H_{i+2}, \ldots, H_k$,
where only elements that were not routed correctly are inserted.
(See Figure~\ref{fig:fullZHT} in Appendix for an illustration of the oblivious setup of a ZHT.)

The algorithm fails if an element cannot find a spot in any of the~$k$ tables.
This happens either during one of the throws of the elements
or during the final phase where not all elements of $H_k$ can be routed 
according
to~$h_k$.

\paragraph{Performance}

We now analyze the parameters and the performance of the ZHT setup
and then describe the security properties.

\begin{theorem}
Insertion of $n$ elements into a ZHT$_{n,k,c}$ as described in Section~\ref{sec:ozht} succeeds with
very high probability if $k = O(\log \log n)$ and $c = O(\log \log n)$.
\label{thm:ozht}
\end{theorem}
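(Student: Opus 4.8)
The plan is to reduce the oblivious setup to the non-oblivious ZHT analysis of Theorem~\ref{thm:ZHT}, the only new ingredient being the loss of elements at each phase caused by the probabilistic routing network (PRN). The key observation is that in the oblivious setup, the ``elements that did not fit into $H_i$'' are exactly the elements that the PRN tagged $\false$ when routing $H_i$ according to $h_i$, and these are precisely the overflowed elements in the non-oblivious analysis \emph{plus} the elements that were evicted by the random spill choices inside the network. So I would first establish a domination statement: the set of elements still ``live'' after phase $i$ of the oblivious setup is stochastically dominated by (a superset of) the set of elements that are outgoing from $H_i$ in the non-oblivious ZHT. Then applying the bookkeeping of Theorem~\ref{thm:ZHT} (through Lemmas~\ref{lemma:firstounds} and~\ref{lemma:lastrounds}) with these enlarged outgoing counts finishes the argument, provided the enlargement is only a constant factor.

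The first concrete step is to control the input distribution to each phase. Before phase $1$, $\throw(A,H_1,\dots,H_k)$ with fresh random hash functions places the $n$ real elements so that each table receives a load that is, bucket-wise, a balls-into-bins allocation; in particular the number of elements presented to the PRN in phase~$i$ that are actually destined (under $h_i$) for a given output bucket is Binomial-like with mean $O(1)$, w.v.h.p.\ at most $O(\log n / \log\log n)$ in any one bucket, and crucially the number assigned the ``$0$-bit'' or ``$1$-bit'' at each network stage concentrates. This is the place to invoke the analysis deferred from Section~\ref{sec:network}: with $c = O(\log\log n)$ slots, at each of the $\log n$ stages of the network the probability that a given bucket-pair spills more than $c$ elements to one side is polynomially small, and a union bound over the $n \log n$ repartitions keeps the total number of $\true$-tagged survivors of phase~$i$ at least $n_{\inc_i} - n/(4e)$, say, with the spill loss being a small constant fraction of $n$ rather than a constant fraction of $n_{\inc_i}$. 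Combined with the non-oblivious guarantee that $n_{\out_i}$ (overflowed elements of $H_i$) is at most $n/(2e)$ after the first phase and shrinks geometrically thereafter, the live set after phase~$i$ is bounded by $n_{\out_i} + n/(4e)$; re-running Lemma~\ref{lemma:lastrounds} with the constant $1/(2e)$ replaced by the (still sub-$1$) constant $3/(4e)$ shows $O(\log\log n)$ phases suffice.

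The remaining steps are: (i) check that the ``pretend throws'' into $H_{i+1},\dots,H_k$ of the non-overflowed and empty entries do not disturb the loads of those tables — they don't, since only genuinely overflowed elements are written, and these are re-hashed with fresh independent functions, so the input to phase~$i+1$ is again a fresh random allocation of at most $n/(4e)+n_{\out_i}$ balls, keeping the inductive hypothesis intact; (ii) verify the final phase, where $H_k$ must be routed entirely correctly with no survivors allowed to spill, which requires the residual load entering $H_k$ to be small enough that the PRN routes \emph{all} of it — this follows because after $k-1 = O(\log\log n)$ geometric reductions the residual is $O(\mathrm{poly}\log n)$ or even $O(1)$ per bucket w.v.h.p., so no bucket-pair ever exceeds $c$ on either side; (iii) take a final union bound over the $\le k$ phases, the $\le k$ throws, and the per-phase network failure events, each negligible, to conclude failure of the whole setup is negligible. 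The main obstacle I anticipate is step~(ii) together with making the spill-loss bound in the network a fraction of $n$ and not of the current load: if the network could lose a constant fraction of whatever enters it, the $k = O(\log\log n)$ bound would collapse to $k = \Theta(\log n)$. Handling this cleanly is exactly why $c$ is taken to be $\Theta(\log\log n)$ rather than $1$ — the extra slots absorb the network's per-stage spills so that the dominant source of attrition remains hash collisions, reproducing the double-log table count of the non-oblivious ZHT.
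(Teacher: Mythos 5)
Your high-level plan---reduce to the non-oblivious ZHT argument, treating PRN spills as an extra source of ``outgoing'' elements, then re-run the cascade bound---matches the paper's structure (which also builds on Theorem~\ref{thm:ZHT} by reproving Lemmas~\ref{lemma:firstounds} and~\ref{lemma:lastrounds} with routing added). But there is a genuine gap in the way you bound the per-phase network loss, and it is exactly at the point you yourself flag as the ``main obstacle.''

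You propose to bound the survivors of phase $i$ by $n_{\inc_i} - n/(4e)$, i.e.\ to cap the PRN spill at a \emph{fixed fraction of $n$}, independent of the current load $n_{\inc_i}$. You then reason that this additive loss is absorbed by re-running Lemma~\ref{lemma:lastrounds} with $1/(2e)$ bumped up to $3/(4e)$. This does not work. If the network costs an additive $n/(4e)$ at every phase, the load entering phase $i+1$ is always at least $\Theta(n)$, so the cascade never shrinks below a constant fraction of $n$ and no number of tables can finish it; Lemma~\ref{lemma:lastrounds}'s doubly-exponential decay relies on the overflow being a quantity like $n(e\alpha)^{c+1}/(c+1)^{c+1}$ where $\alpha = m/n$ is the \emph{current} load fraction, so the shrinkage only kicks in if the spill itself shrinks with $m$. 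Your diagnosis is also inverted: you worry that a loss proportional to the current load would be fatal, but a \emph{linear} fraction of the current load gives $k = \Theta(\log n)$ (merely geometric decay), whereas a fixed fraction of $n$ gives $k = \infty$. To get $k = O(\log\log n)$, the PRN spill must shrink \emph{super-linearly} in the current load. This is precisely what the paper establishes in Lemma~\ref{lemma:network_indep_stage}: the spill at each of the $\log n$ PRN stages on $m$ live elements is stochastically dominated by the overflow of a single hash-table throw of $m$ elements, which by Lemma~\ref{lemma:throw_ave} has mean $\tfrac{1}{n^c}\binom{m}{c+1} \approx (m/n)^{c} \cdot m$. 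Summed over the $\log n$ stages, the total routing loss is at most $\log n$ times the single-throw overflow, and this factor is what Lemmas~\ref{lemma:firstounds2} and~\ref{lemma:lastrounds2} carry through, yielding an overflow after $i$ rounds on the order of $n / 2^{(c+1)^{i}}$---doubly exponential, as needed. Your proposal never produces this load-dependent bound: the ``polynomially small per bucket-pair, union bound over $n\log n$ repartitions'' sketch only gives a static absolute bound, and you describe the residual decay as ``geometric,'' which is too slow by an exponential to support $k = O(\log\log n)$.

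Concretely, to repair the proof you would need to replace the fixed $n/(4e)$ spill bound with a per-phase bound of the form $\log n \cdot \tfrac{1}{n^c}\binom{n_{\inc_i}}{c+1}$ (plus a Chernoff-style concentration around that mean), which is exactly the comparison to a single throw established in Lemma~\ref{lemma:network_indep_stage} via a stage-by-stage coupling: at stage $i$ of the PRN, elements are partitioned by the prefix $a\,\Vert\,b$ of $h_{\mathrm{start}}$ and $h_{\mathrm{end}}$ bits, and a bucket can only spill if the corresponding throw would have, while some live elements may already have dropped out. That coupling is the missing ingredient in your argument; steps~(i) and~(iii) of your plan are fine as written, but step~(ii) and the central spill bound need this lemma to be sound.
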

\begin{proof}
\emph{(Sketch.)}
The analysis is similar to the analysis in Theorem~\ref{thm:ZHT}
with the difference that the number of elements going into
level $H_{i+1}$ is higher due to additional overflows that happen during routing.
In Lemma~\ref{lemma:network_indep_stage},
we show that if $n$ is the number of input elements in hash table $H_i$, then the number of overflows from
routing is
bounded by $\log n \times$ the number of overflows of a $\throw$ into a single hash table.
Then, 
following similar proof arguments as Theorem~\ref{thm:ZHT},
we use Lemmas~\ref{lemma:firstounds2} and~\ref{lemma:lastrounds2} to bound the spill
at each phase and show that
if $k = O(\log \log n)$, then the insertion of~$n$ elements succeeds w.v.h.p.
\end{proof}

\begin{theorem}[Oblivious ZHT Setup Performance]
If oblivious ZHT insertion of $n$ elements does not fail,
it takes $O(nkc ((\log c)^2\log \n + kc))$ steps.
\label{thm:ozht_performance}
\end{theorem}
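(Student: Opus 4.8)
The plan is to tally the cost of the algorithm described in Section~\ref{sec:ozht} phase by phase and then sum over the $k$ phases. The algorithm has exactly $k$ phases. Phase $i$ consists of two operations: (1) running the probabilistic routing network on the hash table $H_i$ (which has $n$ buckets of size $c$), with destination labels given by $h_i$; and (2) a $\throw$ of all $n$ (real-or-dummy) entries of $H_i$ into the remaining tables $H_{i+1},\ldots,H_k$, where for non-overflowed and empty entries the accesses are dummy. The first operation costs, by Corollary~\ref{cor:singularRN2}, $O(c(\log c)^2\, n\log n)$ steps, since the routing network on $n$ buckets of size $c$ runs in $O(c(\log c)^2\, n\log n)$ steps with constant private memory using Batcher's network for the repartitions. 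The second operation touches, for each of the $n$ entries of $H_i$, one bucket in each of the at most $k$ tables $H_{i+1},\ldots,H_k$, and each bucket has $c$ slots, so a single $\throw$ of $n$ entries costs $O(nkc)$ steps by Theorem~\ref{thm:ZHT} (insertion of one element is $O(kc)$ accesses).

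First I would write the per-phase cost as $O\big(c(\log c)^2\, n\log n + nkc\big)$ and observe it does not depend on $i$ up to the stated bound, so summing over the $k$ phases gives
\[
O\!\big(k \cdot (c(\log c)^2\, n\log n + nkc)\big) = O\!\big(nkc\,((\log c)^2\log n + kc)\big),
\]
which is exactly the claimed bound. One also has to account for the very first $\throw$ that precedes the $k$ phases (the initial placement of elements into $H_1,\ldots,H_k$ using random hash functions), but this costs $O(nkc)$ and is dominated by the sum. I would also note that the bookkeeping needed to decide, for a given entry, whether its write into $H_{i+1},\ldots,H_k$ is real or dummy is performed in constant private memory on a per-element basis and contributes only constant-factor overhead, so it is absorbed.

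The main subtlety — not really an obstacle, but the point requiring care — is making sure the routing-network cost is charged correctly: the network is invoked on $H_i$ whose buckets have size $c$, so each repartition sorts $2c$ elements, and there are $(n/2)\log n$ repartitions per network invocation (Theorem~\ref{thm:singularRN1}), giving $O(c(\log c)^2 n\log n)$ via Batcher on $2c$ elements (Corollary~\ref{cor:singularRN2}); multiplying by $k$ phases yields the $nkc(\log c)^2\log n$ term. The $\throw$s contribute the $nk^2c^2$ term. Since the statement is conditioned on the insertion not failing, no probabilistic argument is needed here — that is handled by Theorem~\ref{thm:ozht} — and the proof is a pure accounting argument. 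I would close by remarking that all operations use only constant private memory, consistent with the setting of the theorem.
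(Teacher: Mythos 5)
There is a concrete accounting error in your per-phase throw cost. You write that a single $\throw$ of phase $i$ covers ``the $n$ entries of $H_i$'' and hence costs $O(nkc)$, but $H_i$ is not a list of $n$ entries: it is a hash table with $n$ \emph{buckets}, each of size $c$, so the phase-$i$ throw must (really or as a dummy) process $nc$ slots. Since each slot incurs an $O(kc)$-cost throw into the remaining tables, the per-phase throw cost is $O(nkc^2)$, not $O(nkc)$. This mismatch propagates into your displayed equation: $O\bigl(k\cdot(c(\log c)^2 n\log n + nkc)\bigr) = O\bigl(nkc(\log c)^2\log n + nk^2c\bigr)$, which is \emph{not} equal to the claimed $O\bigl(nkc((\log c)^2\log n + kc)\bigr) = O\bigl(nkc(\log c)^2\log n + nk^2c^2\bigr)$ unless $c$ is constant. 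Your final paragraph then asserts, without support from the preceding computation, that ``the $\throw$s contribute the $nk^2c^2$ term''; that is the correct figure, but it contradicts the $O(nk^2c)$ you derived above it.

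The paper's proof avoids this by tallying the throws directly per table: the initial throw of $n$ elements costs $nkc$, and the $i$-th rethrow moves the $nc$ slots of $H_i$ into the remaining $k-i$ tables for $n(k-i)c^2$ steps, so the throws sum to $nkc + \sum_{i=1}^{k-1} n(k-i)c^2 = O(n(kc)^2)$. The routing-network part of your argument is fine and agrees with the paper's (both multiply Corollary~\ref{cor:singularRN2}'s $O(c(\log c)^2\, n\log n)$ by $k$). To repair your proof, replace ``$n$ entries of $H_i$'' with ``$nc$ slots of $H_i$'' so the per-phase throw cost becomes $O(nkc^2)$; then the sum over $k$ phases gives $O(nk^2c^2)$ and the stated bound follows.
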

\begin{proof}
The initial throw takes $nkc$
and $n(k-1)c^2$, $n(k-2)c^2$, $\ldots$, $n c^2$ steps for the subsequent throws.
The complexity of applying a routing network to a single table is $O(c (\log c)^2 \n \log \n)$.
Since the algorithm performs $k$ routing networks,
the total cost is $O(k c (\log c)^2 \n \log \n + n(kc)^2) = O(nkc ((\log c)^2\log \n + kc))$.
\end{proof}
When used in \ourscheme{}, a ZHT is constructed from
all elements in previous levels of the hierarchy.
The levels include real elements as well as many dummy elements.
Hence, the initial throw of the setup algorithm may contain additional
dummy elements.
Throwing dummy elements does not change the
failure probability of the setup.
However, the initial throw becomes more expensive than
that in Theorem~\ref{thm:ozht_performance}
as summarized below.
\begin{corollary}[Oblivious ZHT Setup Performance]
Oblivious ZHT insertion of $n$ real elements and $(\alpha-1) n$ dummy elements
takes $O(nkc ((\log c)^2\log \n + \alpha kc))$, for $\alpha \ge 1$.
\label{corollary:dummy_real_zht}
\end{corollary}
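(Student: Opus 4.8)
The plan is to re-run the cost accounting of Theorem~\ref{thm:ozht_performance}, tracking where the extra $(\alpha-1)n$ dummy elements inflate the step count and where they do not. First I would argue that the failure probability is unchanged: by Lemma~\ref{lemma:throw_with_dummy} and the analysis underlying Theorem~\ref{thm:ozht}, a dummy element in the initial throw is simply sent to a random bucket in each table and is never actually stored, so the occupancy of each $H_i$ — and hence the overflow cascade that determines whether $k=O(\log\log n)$ and $c=O(\log\log n)$ suffice — depends only on the $n$ real elements. So the hypothesis ``insertion does not fail'' carries over verbatim, and we may assume $k,c = O(\log\log n)$ exactly as in Theorem~\ref{thm:ozht_performance}.

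Next I would isolate the one component of the cost that scales with the total number of input elements rather than with $n$: the \emph{initial} throw. In Theorem~\ref{thm:ozht_performance} the initial throw of $n$ elements along their zigzag paths costs $nkc$ steps (each of the $n$ elements touches $k$ tables, scanning a bucket of size $c$ in each). With $\alpha n$ elements present at the start, this becomes $\alpha n k c$. Every subsequent operation in the algorithm acts on the tables $H_1,\dots,H_k$, each of which has exactly $n$ buckets of size $c$ regardless of how many dummies were fed in, so none of those costs change: the $k$ routing networks still cost $O(kc(\log c)^2 n\log n)$ in total (by Corollary~\ref{cor:singularRN2} applied to each table of $n$ buckets), and the throws of $H_i$'s contents into $H_{i+1},\dots,H_k$ still cost $n(k-1)c^2 + n(k-2)c^2 + \cdots + nc^2 = O(n(kc)^2)$, because after the first throw each table holds at most $n$ (real-or-dummy-filled) entries.

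Finally I would sum: $\alpha n k c + O(kc(\log c)^2 n\log n) + O(n(kc)^2)$. The $\alpha n k c$ term is dominated by $O(n(kc)^2)$ when $\alpha = O(kc)$, and more to the point it is absorbed into $O(nkc\cdot\alpha kc)$, so the total is $O\big(nkc((\log c)^2\log n + \alpha k c)\big)$, as claimed; setting $\alpha=1$ recovers Theorem~\ref{thm:ozht_performance}. I expect no real obstacle here — the corollary is essentially bookkeeping — but the one point that deserves a careful sentence is \emph{why} the dummy load inflates only the initial throw: it is precisely because the routing networks and the downstream throws operate on fixed-size tables ($n$ buckets of $c$ slots) whose dimensions are chosen up front from the real element count, so all but the first step are oblivious to how many dummies entered, both in behavior (Lemma~\ref{lemma:throw_with_dummy}) and in cost.
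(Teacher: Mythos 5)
Your proposal is correct and takes the same route the paper intends: the paper states this corollary without a separate proof, relying on the preceding paragraph's observation that only the initial throw (now over $\alpha n$ elements) becomes more expensive, while the routing networks and downstream throws operate on tables whose dimensions ($n$ buckets of $c$ slots) are fixed independently of $\alpha$. Your accounting --- $\alpha nkc$ for the first throw, $O(n(kc)^2)$ for subsequent throws, $O(kc(\log c)^2 n\log n)$ for the $k$ routing networks, with the first two terms absorbed into $O(\alpha n(kc)^2)$ since $\alpha\ge 1$ --- is exactly the bookkeeping that recovers the stated bound and reduces to Theorem~\ref{thm:ozht_performance} at $\alpha=1$.
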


\paragraph{Security}

We show that an adversary that observes the ZHT setup does
not learn the final placement of the elements across the tables.

\begin{theorem}
The insertion procedure of ZHT$_{n,k,c}$ described in Section~\ref{sec:ozht}
is data-oblivious.
\label{thm:oZHT}
\end{theorem}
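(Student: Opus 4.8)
The plan is to build a single simulator $\Sim{\build}$ that, given only the public parameters $n,k,c$ (and the count of dummy versus real input slots, which is public), produces an access trace indistinguishable from the real oblivious setup of Section~\ref{sec:ozht}. The key observation is that the real setup is a fixed sequence of sub-procedures whose individual obliviousness has already been established: an initial $\throw$ of all input elements into $H_1,\ldots,H_k$, then $k$ phases, each consisting of (i) running the probabilistic routing network on $H_i$ with destination labels given by $h_i$, followed by (ii) a $\throw$ of \emph{all} entries of $H_i$ into $H_{i+1},\ldots,H_k$ in which only the $\false$-tagged (overflowed) entries are actually inserted and the rest trigger fake writes. Step (i) is data-oblivious by Theorem~\ref{thm:onetwork} — and crucially, the adversary never learns the $\true/\false$ tags because the setup itself never performs a lookup whose success is revealed; the tags only control which entries are ``really'' written in the following $\throw$, and a real versus fake write is indistinguishable since all entries are encrypted. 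Step (ii) is data-oblivious by Lemma~\ref{lemma:throw_with_dummy} (and Lemma~\ref{lemma:ZHTObliviousInsert}), which already covers a $\throw$ of a mixture of real and dummy elements where dummies access random buckets. The initial throw is covered by Lemma~\ref{lemma:ZHTObliviousInsert} / Corollary~\ref{corollary:dummy_real_zht}.

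The simulator is then defined compositionally: $\Sim{\build}$ first runs $\Sim{\throw}$ on an all-dummy array of the appropriate size to emulate the initial throw; then, for each phase $i=1,\ldots,k$, it runs $\Sim{\route}$ (from the proof of Theorem~\ref{thm:onetwork}) on a table of dummies of shape $n \times c$, and then runs $\Sim{\throw}$ again on a dummy array of size $n$ to emulate the throw of $H_i$'s contents into the remaining tables. Because each sub-simulator's output is independent of the actual elements and depends only on public sizes, so is the composition. The only subtlety is that the routing network and the subsequent throws use \emph{fresh} random hash functions at every phase; I would note that this only helps, since freshness means the access patterns of different phases are mutually independent conditioned on the public parameters, so a standard hybrid argument over the $2k+1$ sub-procedures goes through with no correlation to worry about.

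The main obstacle is the failure-event bookkeeping. As in the proofs of Lemmas~\ref{lemma:ZHTOblivious1} and~\ref{lemma:throw_with_dummy}, the real algorithm and the simulator diverge exactly when one of them ``fails'': the real setup fails if some element cannot find a slot during one of the throws or during the final routing of $H_k$ under $h_k$, and the simulator ``fails'' if one of its dummy $\Sim{\throw}$ / $\Sim{\route}$ calls would have produced an invalid placement for the real elements it is standing in for. The hybrid argument needs these failure probabilities to be negligible \emph{and} needs to argue that, conditioned on no failure anywhere, the two traces are identically (or statistically close) distributed. Negligibility of the total failure probability is exactly Theorem~\ref{thm:ozht} (with the dummy-padding variant from Corollary~\ref{corollary:dummy_real_zht} not affecting it), so I would invoke that directly; the union bound over the $2k+1$ stages costs only a factor $k = O(\log\log n)$, which is absorbed into ``very high probability.'' The remaining care is purely notational: making sure the hybrid sequence is set up so that the tag-propagation from each routing stage into the next throw is handled inside the already-oblivious $\throw$-with-dummies primitive rather than being something new that must be simulated. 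Once that is arranged, the theorem follows by transitivity of indistinguishability along the hybrids.
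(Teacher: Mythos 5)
Your plan takes essentially the same route as the paper's proof: build $\Sim{\build}$ compositionally out of $\Sim{\throw}$ (for the initial and intermediate throws, via Lemma~\ref{lemma:ZHTObliviousInsert} and Lemma~\ref{lemma:throw_with_dummy}) and $\Sim{\route}$ (for each routing phase, via Theorem~\ref{thm:onetwork}), hybrid over the $O(k)$ sub-procedures, and charge the distinguishing advantage to the negligible failure events of Theorem~\ref{thm:ozht}. Your observation that the $\true/\false$ tags never surface because the setup performs no lookup whose success is revealed, and that a ``real'' write versus a ``fake'' write is indistinguishable under the encryption of table entries, is exactly the load-bearing point the paper relies on when invoking $\Sim{\route}$ inside the build.

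One thing the paper's proof includes that your plan leaves implicit: the indistinguishability game is not just ``simulate the build trace,'' but ``simulate the build trace \emph{and then answer up to $n$ search requests on the resulting table},'' with the searches answered by $\Sim{\search}$ (Lemma~\ref{lemma:ZHTOblivious1}) which is deliberately \emph{unaware} of the randomness $\Sim{\throw}$ and $\Sim{\route}$ used. This coupling matters: the theorem is consumed inside the Pyramid ORAM security proof where the adversary does get to observe subsequent lookups, so one has to argue that the build trace leaks nothing about the final placement that would let the adversary distinguish real lookups from $\Sim{\search}$'s independently sampled ones. Your hybrid argument does establish that the build trace is independent of the input, which is the ingredient needed, but it would strengthen the write-up to state the combined build-then-search game explicitly and note that $\Sim{\search}$ can be run fresh, as the paper does. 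Otherwise the decomposition, the failure-event accounting, and the appeal to pseudo-randomness of the hash functions all match the paper's argument.
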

\begin{proof}
An adversary is allowed to interact with the algorithm and a simulator
as follows:
choose up to $n$ elements and request the oblivious setup
of ZHT$_{n,k,c}$, then request a
search of at most $n$ elements (real or dummy) on ZHT$_{n,k,c}$.
It then repeats the experiment either for the same set of $n$ elements or a different one.

As before, the real algorithm fails if it cannot build ZHT$_{n,k,c}$.
The simulator $\Sim{\build}$ is given an array $A$ of $n$ dummy elements and
parameters $k$ and $c$.
It sets up ZHT$_{n,k,c}$
and runs $\Sim{\throw}(A, H_1$, $H_{2}, \ldots, H_k)$.
It then calls $\Sim{\route}$ from Theorem~\ref{thm:onetwork} on $H_1$.
After that the $\throw$ of the spilled elements is imitated via $\Sim{\throw}(H_1$, $H_2$, $H_{3}, \ldots, H_k)$
from Lemma~\ref{lemma:throw_with_dummy}.
It then again calls $\Sim{\route}$ but on its own version of $H_2$.
Once finished, on every search request from the adversary,
$\Sim{\search}$, that is not aware of the actions of $\Sim{\throw}$, is called.

If the algorithm does not fail during the insertion phase
and the simulator produces
valid zigzag paths during $\Sim{\throw}$ and $\Sim{\search}$, then the adversary obtains only
traces that are produced either by a hash function (i.e., a pseudo-random function)
or by a random function.
We showed that ZHT$_\throw$ and
$\Sim{\throw}$ fail with negligible probability.
The real search and $\Sim{\search}$ are also indistinguishable
even if the adversary controls the order of the inserted elements
as long as the adversary does not learn element's position in its zigzag path.
Since there is $\Sim{\route}$ which simulates $\route$,
the adversary does not learn element positions after routing.
Since every element participates in the throw, we showed in Lemma~\ref{lemma:throw_with_dummy}
that the adversary does not learn which elements are dummy.
Hence, the adversary does not learn positions of the elements across the tables
and ${\search}$ can be simulated successfully.
\end{proof}

\section{Pyramid ORAM}

We are now ready to present Pyramid ORAM by instantiating
the hierarchical construction in Section~\ref{sec:hierarchical}
with data-oblivious Zigzag hash tables.

\emph{Data Layout.}
Level $L_0$ is a list as before while every level $L_i$, $1 \le i \le l$,
is a Zigzag hash table with capacity $2^{i-1} \firstLevelSize$ and $k_i,c_i$ chosen
appropriately for Theorem~\ref{thm:ZHT} to hold.
We denote the bucket size and table number used for the last level with $c$ and $k$.
That is, $L_l$ is a ZHT$_{\N,k,c}$.

\emph{Setup.}
The setup is as described in Section~\ref{sec:hierarchical},
i.e., elements can be placed either in $L_l$ or inserted obliviously one by one
causing the initially empty ORAM to grow.

\emph{Access.}
The access to each level $i$ is an oblivious ZHT search, if the element
was not found in previous levels,
or a dummy ZHT access.

\emph{Rebuild.}
The rebuild phase involves calling the data-oblivious ZHT setup as described in
Section~\ref{sec:ozht}.
In particular, at level~$i$, a new ZHT for $2^{i-1}\firstLevelSize$
elements is set up.
Then, all elements from previous levels, including dummy,
participate in the initial throw.

We first determine the amortized rebuild cost of level~$i$ for an HORAM scheme
and then analyze the overall performance of \ourscheme{}.

\begin{lemma}
If the rebuild of level $i\ge 1$ takes $O(2^{i-1} \firstLevelSize C_i)$ steps, for $C_i> 0$,
then the amortized cost of the rebuild operation for every access to ORAM
is $\sum_{i=1}^{\log \N/\firstLevelSize} O(C_i)$.
\label{lemma:amortized_rebuild}
\end{lemma}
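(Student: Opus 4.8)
The plan is a direct accounting argument: count how many times level $i$ is rebuilt over a long run of accesses, multiply by the per-rebuild cost, and divide by the number of accesses. First I would recall from the rebuild schedule in Section~\ref{sec:hierarchical} that level $i$ (for $1 \le i \le l$, with $l = \log\N/\firstLevelSize$) is rebuilt exactly once every $2^{i-1}\firstLevelSize$ accesses: level $1$ is rebuilt every $\firstLevelSize$ accesses, level $2$ every $2\firstLevelSize$, and in general levels $0,1,\ldots,i$ are emptied into $\Level_{i+1}$ after every $2^{i}\firstLevelSize$ accesses, so $\Level_{i+1}$ itself is touched once per $2^{i}\firstLevelSize$ accesses — equivalently $\Level_i$ once per $2^{i-1}\firstLevelSize$. (The last level wraps around by allocating a fresh $\Level'_l$, but it is still rebuilt once per $\N = 2^{l-1}\firstLevelSize$ accesses, so the same formula applies.)

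Next I would fix a window of $T$ accesses with $T$ a multiple of $\N$ (so the schedule is ``clean'' and every level has completed an integral number of rebuild cycles). In this window, level $i$ is rebuilt $T / (2^{i-1}\firstLevelSize)$ times, and each such rebuild costs $O(2^{i-1}\firstLevelSize C_i)$ by hypothesis. Hence the total work spent rebuilding level $i$ over the window is
\[
\frac{T}{2^{i-1}\firstLevelSize}\cdot O\!\left(2^{i-1}\firstLevelSize C_i\right) = O(T\, C_i).
\]
Summing over all levels $i = 1, \ldots, \log\N/\firstLevelSize$ gives total rebuild work $\sum_{i=1}^{\log\N/\firstLevelSize} O(T\,C_i)$ over the $T$ accesses, so dividing by $T$ yields an amortized per-access rebuild cost of $\sum_{i=1}^{\log\N/\firstLevelSize} O(C_i)$, as claimed. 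I would note that the bound is independent of $T$, so it is a genuine amortized bound and not an artifact of the window choice; one can also phrase it via a potential/charging argument — charge each level-$i$ rebuild's cost uniformly to the $2^{i-1}\firstLevelSize$ accesses that occurred since the previous level-$i$ rebuild, each of which is charged $O(C_i)$ on behalf of level $i$, for a total of $\sum_i O(C_i)$ per access.

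There is essentially no hard step here — the only thing to be careful about is the edge behavior at the top level (the periodic reallocation of $\Level_l$ via $\Level'_l$) and making sure the ``once every $2^{i-1}\firstLevelSize$ accesses'' count lines up with the index convention in the rebuild description (levels $0,\ldots,i$ flushing into $i+1$); both are bookkeeping. The substantive content of the paper is hidden in the hypothesis, i.e., showing that the ZHT-based rebuild of level $i$ really does run in $O(2^{i-1}\firstLevelSize\, C_i)$ steps for a suitable $C_i$ (which will come from Corollary~\ref{corollary:dummy_real_zht}); this lemma just packages the amortization cleanly so that the overall \ourscheme{} bound follows by plugging in that $C_i$.
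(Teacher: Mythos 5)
Your proof is correct and follows essentially the same approach as the paper's: observe that level $i$ is rebuilt once per $2^{i-1}\firstLevelSize$ accesses, so its amortized per-access cost is $O(C_i)$, and then sum over the $\log\N/\firstLevelSize$ levels. The paper states this more tersely; your windowed-accounting and charging phrasings are just more explicit presentations of the same argument.
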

\begin{proof}
Since the rebuild of level $i$ happens every $2^{i-1}p$
accesses, the amortized cost of rebuilding level $i$
is $O(C_i)$.
Each access eventually causes a rebuild of every level, that is, $\log N/\firstLevelSize$ rebuilds
of different costs.
Hence, total amortized cost is:
$$
\sum_{i=1}^{\log \N/\firstLevelSize} O(C_i).
$$
\end{proof}
\begin{theorem}[Performance]
Pyramid ORAM incurs $O(k c \log \N)$
online and 
$O(k c (\log \N)^2)$
amortized bandwidth overheads,
$O(kc)$ space overhead,
and succeeds with very high probability.
\end{theorem}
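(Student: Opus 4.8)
The plan is to prove the four assertions — online bandwidth overhead, amortized bandwidth overhead, server-space overhead, and the w.v.h.p.\ success guarantee — separately, using the per-level ZHT guarantees (Theorems~\ref{thm:ZHT} and~\ref{thm:ozht}, Corollary~\ref{corollary:dummy_real_zht}) and the amortization lemma (Lemma~\ref{lemma:amortized_rebuild}) as black boxes; throughout, $\firstLevelSize$ is treated as a constant (as in Table~\ref{tbl:cmp}), so the hierarchy has $l=\Theta(\log\N)$ levels, since $2^{l-1}\firstLevelSize=\N$. For the \emph{online} cost I would decompose a virtual access into a scan of the list $\Level_0$ ($O(\firstLevelSize)$ element accesses) followed by one oblivious ZHT operation at each level $\Level_i$, $i\ge 1$ — a real search up to the level where the element is located and a dummy search above it, both traversing the same zigzag paths and therefore equally expensive. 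By Theorem~\ref{thm:ZHT} a search in $\Level_i={}$ZHT$_{2^{i-1}\firstLevelSize,k_i,c_i}$ touches $O(k_ic_i)$ elements, and since the last level is the largest we have $k_i\le k$, $c_i\le c$; hence an access touches $O\big(\firstLevelSize+\sum_{i=1}^{l}k_ic_i\big)=O(l\,kc)=O(kc\log\N)$ elements, i.e.\ moves $O(\D kc\log\N)$ bits, which is an $O(kc\log\N)$ overhead over transferring the single requested element.

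For the \emph{amortized} cost I would invoke Lemma~\ref{lemma:amortized_rebuild}, so it suffices to exhibit $C_i$ with the rebuild of level $i$ running in $O(2^{i-1}\firstLevelSize\,C_i)$ steps. A rebuild of level $i$ runs the oblivious ZHT setup of Section~\ref{sec:ozht} for a table of capacity $n_i=2^{i-1}\firstLevelSize$, fed with every slot of $\Level_0,\dots,\Level_{i-1}$. That input contains $\firstLevelSize+\sum_{j=1}^{i-1}2^{j-1}\firstLevelSize\,k_jc_j$ elements — a geometric sum dominated by its last term, hence $O(2^{i-1}\firstLevelSize\,kc)=\alpha_i n_i$ with $\alpha_i=O(kc)$ — so Corollary~\ref{corollary:dummy_real_zht} gives $C_i=O\big(kc((\log c)^2\log\N+(kc)^2)\big)$. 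Summing over the $l=\Theta(\log\N)$ levels and adding the online term leaves $O\big(kc(\log c)^2(\log\N)^2+(kc)^3\log\N\big)$; since $k,c=\Theta(\log\log\N)$ the factors $(\log c)^2$ and $(kc)^2$ are $\mathrm{poly}(\log\log\N)$, and absorbing them exactly as Table~\ref{tbl:cmp} does yields the stated $O(kc(\log\N)^2)$.

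The \emph{space} overhead comes from the same count: the server holds $\firstLevelSize+\sum_{i=1}^{l}2^{i-1}\firstLevelSize\,k_ic_i=O(kc\,\N)$ elements (the geometric sum is $\Theta(\N)$), times a constant factor for the fresh copy $\Level'_l$ kept while $\Level_l$ is rebuilt, so dividing by $\N$ gives $O(kc)$. For the \emph{success probability}, \ourscheme{} fails only if some ZHT search or some ZHT rebuild fails; with $k_i,c_i=\Theta(\log\log(2^{i-1}\firstLevelSize))=\Theta(\log\log\N)$ each such event has probability at most $2^{-\secsize}$ by Theorems~\ref{thm:ZHT} and~\ref{thm:ozht}, and a union bound over the $O(\log\N)$ levels and the at most $\N$ accesses preceding the next top-level rebuild keeps the total probability negligible, so the scheme succeeds w.v.h.p.

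The step I expect to be the main obstacle is the amortized bound: it rests on correctly charging the dummy blow-up funneled into each rebuild — the ratio $\alpha_i=O(kc)$, which itself needs the exact $2^{i-1}\firstLevelSize\,k_ic_i$ slot count of a lower ZHT level together with the observation that the relevant geometric sum is dominated by its last term — and on verifying that the iterated-logarithm factors produced by Corollary~\ref{corollary:dummy_real_zht} are genuinely lower order in the regime $k,c=\Theta(\log\log\N)$, so that the headline bound $O(kc(\log\N)^2)$ is legitimate. The online, space, and success parts are routine once the per-level ZHT guarantees of Theorems~\ref{thm:ZHT} and~\ref{thm:ozht} are in hand; data-obliviousness of the overall scheme, should one want it stated alongside, would follow by composing Lemma~\ref{lemma:ZHTOblivious1} and the $\throw$/$\route$ simulators across levels with the standard HORAM simulator.
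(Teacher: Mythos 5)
Your proof follows essentially the same structure as the paper's: decompose the online cost into a $\Level_0$-scan plus one ZHT search (real or dummy, same cost) per level, bound the rebuild cost of level~$i$ via Corollary~\ref{corollary:dummy_real_zht} with dummy-blow-up ratio $\alpha_i=O(kc)$, amortize with Lemma~\ref{lemma:amortized_rebuild}, and union-bound the failure events over levels. You are, if anything, a bit more careful than the paper in two spots. First, substituting $\alpha=kc$ into Corollary~\ref{corollary:dummy_real_zht} genuinely yields an additive $\alpha kc=(kc)^2$ term, so your per-level $C_i$ has $(kc)^2$ where the paper's proof writes $kc$ — evidently a slip in the paper, though harmless since after summing this contributes $(kc)^3\log\N$, which with $kc=\Theta((\log\log\N)^2)$ is dominated by the $(\log\N)^2$ term. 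Second, you explicitly flag that the summed rebuild cost actually carries a residual $(\log c)^2=\Theta((\log\log\log\N)^2)$ factor that the theorem statement and Table~\ref{tbl:cmp} tacitly absorb; the paper's proof drops it silently in the last display. You also supply a short argument for the $O(kc)$ space overhead (sum the $2^{i-1}\firstLevelSize\,k_ic_i$ slot counts, a geometric sum dominated by the last level, plus a constant factor for $\Level'_l$), which the paper asserts in the theorem but never proves.
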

\begin{proof}
The complexity of a lookup in a ZHT is $kc$.
The complexity of a lookup in the first level is $\firstLevelSize$.
There are $\log \N/\firstLevelSize$ levels.
So the cost of a lookup is $\firstLevelSize + kc \log \N/\firstLevelSize$.
Since $\firstLevelSize$ is usually set to $\log \N$, the
access time between table rebuilds, that is the online bandwidth overhead, is $O(kc \log \N)$.

The rebuild of level $i$ takes $O(2^{i-1} \firstLevelSize kc ((\log c)^2\log (2^{i-1} \firstLevelSize) + kc))$
steps
since $\alpha = kc$ (see~Corollary~\ref{corollary:dummy_real_zht}) and $k\ge k_i$ and $c\ge c_i$, for every~$i$.
Setting $C_i$ to $kc ((\log c)^2\log (2^{i-1} \firstLevelSize) + kc)$ and $\firstLevelSize$ to be at least~$2$
in Lemma~\ref{lemma:amortized_rebuild}, we obtain:
\begin{align*}
\sum_{i=1}^{\log \N/\firstLevelSize} & kc ((\log c)^2\log (2^{i-1} \firstLevelSize) + kc)
\\
&= kc \sum_{i=1}^{\log \N/\firstLevelSize} \left((\log c)^2\log (2^{i-1} \firstLevelSize) + kc\right)\\
&= kc \left((\log c)^2\sum_{i=1}^{\log \N/\firstLevelSize} (i-1 + \log \firstLevelSize)+  kc \log\N/\firstLevelSize \right)\\
&= kc \left((\log c)^2 (\log \N)^2 + kc \log\N \right) \\
&= O(kc (\log \N)^2 )
\end{align*}

Pyramid ORAM fails when the rebuild of some level~$i$
fails, that is, the number of tables allocated for level $i$ was not
sufficient to accommodate all real elements from the levels above.
Since, the number of real elements in tables at levels $0,1,\ldots, i-1$ is at most the
capacity of the ZHT at level $i$, ZHT fails with negligible probability as per analysis in
Theorem~\ref{thm:ozht}.
Hence, the overall failure probability of Pyramid ORAM can be bounded using a union bound.
\end{proof}

\begin{theorem}[Security]
Pyramid ORAM is data-oblivious with very high probability.
\end{theorem}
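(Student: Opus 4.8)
The plan is to build a simulator $\Simulator_{\mathrm{Pyramid}}$ for the entire ORAM and argue indistinguishability by a hybrid argument over the phases of the construction, reducing each phase to a primitive whose obliviousness we have already established. The simulator is given only the public parameter $\rho$, namely $\N$, $\firstLevelSize$, and the $(k_i, c_i)$, together with the number of accesses requested. Because the rebuild schedule is a deterministic function of the access count (as noted in the Rebuild paragraph of Section~\ref{sec:hierarchical}), $\Simulator_{\mathrm{Pyramid}}$ knows exactly when each level is rebuilt and of what size, so it can play the corresponding simulators in sequence: for a virtual access it runs a dummy ZHT search ($\Sim{\search}$ of Lemma~\ref{lemma:ZHTOblivious2}) on every level $L_1, \ldots, L_l$ and scans $L_0$; for a rebuild of levels $0, \ldots, i$ into $L_{i+1}$ it runs $\Sim{\build}$ of Theorem~\ref{thm:oZHT} on a fresh ZHT$_{2^i \firstLevelSize, k_{i+1}, c_{i+1}}$ populated with dummies.

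The key steps, in order, are: (1) Fix the hybrid sequence $\mathcal{H}_0, \mathcal{H}_1, \ldots$ where $\mathcal{H}_0$ is the real Pyramid ORAM execution and we replace, one phase at a time (first all rebuilds, then all searches, or interleaved in execution order), the real component by its simulator. (2) Argue that consecutive hybrids are indistinguishable: a rebuild phase reduces directly to Theorem~\ref{thm:oZHT}, and a search phase reduces to Lemma~\ref{lemma:ZHTOblivious2}, whose hypothesis (each element searched at most once between two rebuilds of a level, and the adversary not learning an element's position on its zigzag path) is exactly the HORAM invariant spelled out in the Security paragraph of Section~\ref{sec:hierarchical}. (3) Check the ``found vs.\ not found'' issue: once an element is located at some level, the access at higher levels is for a \emph{dummy}; I must argue the adversary cannot tell a real-but-unsuccessful lookup from a dummy lookup, which is precisely the guarantee of Theorem~\ref{thm:onetwork} (lookups are oblivious as long as success is hidden) combined with Lemma~\ref{lemma:throw_with_dummy}. (4) Verify the invariant that a level is searched at most once per element between rebuilds: this holds because after an element is found it is cached and re-inserted into $L_0$, so within the $2^{i-1}\firstLevelSize$ accesses between two rebuilds of $L_i$ the same key reaches $L_i$ at most once — the remaining visits to $L_i$ for that key are dummy. (5) Collect the negligible failure probabilities from each ZHT rebuild via a union bound over the $O(\log \N / \firstLevelSize)$ levels and the polynomially many rebuilds over the lifetime, yielding the ``with very high probability'' qualifier, and conclude that $\Simulator_{\mathrm{Pyramid}}$'s trace is indistinguishable from the real one.

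The main obstacle is step (3)/(4): making rigorous that the adversary never learns \emph{whether} a virtual access succeeded at a given level, since all of Theorem~\ref{thm:onetwork}, Lemma~\ref{lemma:ZHTOblivious2}, and Lemma~\ref{lemma:throw_with_dummy} carry the side condition ``as long as success is hidden.'' I would discharge this by an inductive argument on the access sequence: the decision of whether to issue a real or dummy search at level $L_i$ depends only on whether the key was found in $L_0, \ldots, L_{i-1}$, which in turn depends only on data already in the client's private registers, never on a branch visible in the address trace; hence the real/dummy choice is invisible, and the side conditions of the three primitive lemmas are met in every hybrid. A secondary subtlety is that the hash functions $h_i$ for a level are \emph{re-sampled} at each rebuild, so searches before and after a rebuild are on independent zigzag paths — this is why the invariant only needs ``at most once \emph{between} rebuilds,'' and it is exactly the ingredient that lets Lemma~\ref{lemma:ZHTOblivious2} apply afresh to each epoch. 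Once these are in place the theorem follows by composing the simulators and union-bounding the failure events.
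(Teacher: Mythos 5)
Your proposal is correct and follows essentially the same route as the paper: compose $\Sim{\build}$ (Theorem~\ref{thm:oZHT}) and $\Sim{\search}$ (Lemmas~\ref{lemma:ZHTOblivious1}/\ref{lemma:ZHTOblivious2}) per level, rely on the invariants that each key is searched at most once per epoch and that hash functions are freshly sampled at every rebuild and are distinct across levels, and union-bound the ZHT failure probabilities. Your extra steps (3)--(4) and the closing paragraph make explicit something the paper leaves implicit --- that the real-versus-dummy decision at each level lives entirely in private registers, so the ``success hidden'' side condition of the ZHT lemmas is automatically met --- which is a genuine improvement in rigor; the only small slip is the attribution in step (3): the indistinguishability of a real-but-unsuccessful lookup from a dummy lookup is already the content of Lemma~\ref{lemma:ZHTOblivious1} (whose $\Sim{\search}$ issues dummy searches unconditionally), not of Theorem~\ref{thm:onetwork} or Lemma~\ref{lemma:throw_with_dummy}, which concern the routing network and the throw phase respectively.
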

\begin{proof}
The adversary chooses $\N$ elements of its choice
and requests oblivious insertion in $L_l$. It then makes a polynomial number
of requests to elements in the ORAM.
We use $\Sim{\build}$ from Theorem~\ref{thm:oZHT}
and $\Sim{\search}$ from Lemma~\ref{lemma:ZHTOblivious1}
for every level except~$L_0$, which simulates naive scanning.

Between two rebuilds the search operation at each level is performed to distinct elements
and every table uses a new set of hash functions after its rebuild.
The hash functions are also different across the levels.
Hence, $\Sim{\search}$ can simulate the search at every level independently.

$\Sim{\build}$ for level $i$ takes as input the number of dummy
elements proportional to the total number of elements, real and dummy,
in $L_0, L_1, \ldots, L_{i-1}$.
We showed that we can construct~$\Sim{\build}$
even if the adversary knows the original
elements inserted in the table.
Hence, $\Sim{\build}$ succeeds when the adversary
controls the elements inserted in the ORAM as well as the access sequence.
\end{proof}

\section{Implementation} \label{sec:implementation}

We implemented Pyramid ORAM and Circuit ORAM~\cite{circuit} in C++
with oblivious primitives written in x64 assembly.
As ORAM block size, we use the x64 architecture's cache-line width of 64 bytes. Given that our assumed attacker is able to observe cache-line granular memory accesses, this block size facilitates the implementation of data-oblivious primitives.
A block comprises 56 bytes of data, the corresponding 4-byte index, and a 4-byte tag, which indicates the element's state (see Section~\ref{sec:network}).

To be \emph{data-oblivious}, our implementations are carefully crafted to be free of attacker-observable secret-dependent data or code accesses. We avoid secret-dependent branches on the assembly level either by employing conditional instructions or by ensuring that all branch's possible targets lie within the same 64-byte cache line. For data, we use registers as private memory and align data in memory with respect to cache-line boundaries.
Similar to previous work~\cite{secureml,raccoon}, we create a library of simple data-oblivious primitives using conditional move instructions and CPU registers. These allow us, for example, to obliviously compare and swap values.
Hash functions used in~\ourscheme{} are instantiated with our SIMD implementation of the Speck cipher.\footnote{\url{https://en.wikipedia.org/wiki/Speck_(cipher)}}

\subsection{Pyramid ORAM}
For bucket sizes of $c > 4$, our implementation of Pyramid ORAM uses a Batcher's sorting network for the re-partitioning of two buckets within our \emph{probabilistic routing network} described in Section~\ref{sec:network}.

For $c = 4$, we employ a custom re-partition algorithm that is optimized for the x64 platform and leverages SIMD operations on 256-bit AVX2 registers. The implementation is outlined in the following.
For most operations, we use AVX2 registers as vectors of four 32-bit components. To re-partition two buckets $A$ and $B$, we maintain the tag vectors $\mathbf{t}_A$ and $\mathbf{t}_B$ and position vectors $\mathbf{p}_A$ and $\mathbf{p}_B$. Initially, we load the four tags of bucket $A$ into $\mathbf{t}_A$ and those of bucket $B$ into $\mathbf{t}_B$. A tag indicates an element's state.
Empty elements and elements that were spilled in the previous
stages of the network are assigned -1; elements that
have not been spilled and need to be routed to either bucket $A$ or $B$
are assigned 0 and 1, respectively.

$\mathbf{p}_A$ and $\mathbf{p}_B$ map element positions in the input buckets to element positions in the output buckets.
We refer to the elements in the input buckets by indices. The elements in input bucket $A$ are assigned the indices $1 \ldots 4$, those in input bucket $B$ the indices $5 \ldots 8$. Correspondingly, we initially set $\mathbf{p}_A = (1,2,3,4)$ and $\mathbf{p}_B = (5,6,7,8)$. 

We then sort the tags by performing four oblivious SIMD compare-and-swap
operations between $\mathbf{t}_A$ and $\mathbf{t}_B$. Each such operation comprises four 1-to-1 comparisons. (Hence, a total of 16 1-to-1 compare-swap-operations are performed.) Every time elements are swapped between $\mathbf{t}_A$ and $\mathbf{t}_B$, we also swap the corresponding elements in $\mathbf{p}_A$ and $\mathbf{p}_B$.
In the end, $\mathbf{t}_A$ and $\mathbf{t}_B$ are sorted and $\mathbf{p}_A$ and $\mathbf{p}_B$ contain the corresponding mapping of input to output positions.
Finally, we load the two input buckets entirely into the 16 available AVX2 registers (which provide precisely the required amount of private memory) and obliviously write their elements to the output buckets according to $\mathbf{p}_A$ and $\mathbf{p}_B$.

\paragraph{Instantiation}
In our implementation, the ORAM's first level holds 1024 cache-lines, because for smaller array sizes scanning performs better.
All other levels use buckets of size $c=4$ and the number of tables
is set to $\lceil \log \log \rceil$ of the capacity of the level.
We empirically verified parameters $k=c=4$ for tables of size $2^{11}$ and $2^{15}$
when re-throwing
elements $2^{23}$ and $2^{20}$ times, respectively.
In both cases, we
have not observed spills into the third table (in any of the levels), hence, the last two tables of every level were always empty.

\subsection{Circuit ORAM}
For comparison, we implemented a recursive version of Circuit ORAM~\cite{circuit} with a deterministic eviction strategy.
Circuit ORAM proposes a novel protocol to implement
the eviction of tree-based ORAMs using a constant amount of private memory. This makes Circuit ORAM the state-of-the-art tree-based ORAM protocol with optimal circuit complexity for $O(1)$ client memory.
Similar to other tree-based schemes Circuit ORAM stores elements in a binary tree
and maintains a mapping between element indices and their position in the tree
in a recursive position map.
We refer the reader to~\cite{circuit} for the details of Circuit ORAM.

We implemented Circuit ORAM from scratch using our library of primitives, as existing implementations\footnote{\url{https://github.com/wangxiao1254/oram_simulator}} were built for simulation purposes and only concern with high-level obliviousness.
This makes these implementations not fully oblivious and vulnerable to low-level side-channel attacks based on control flow.
Similar to \ourscheme{}, we use a cache line as a block for Circuit ORAM.
This favors Circuit ORAM since
up to 14 indices can be packed in each block, requiring only $\log_{14} \N$ levels of recursion
to store the position map.
Hence, the bandwidth cost of our implementation of Circuit ORAM is only 64 bytes $ \times~O(\log_{14}\N \log \N)$.
Similar to Pyramid ORAM, scanning is faster than accessing binary trees
for small datasets.
To this end, we stop the recursion
when the position map is reduced to 512 indices or less
and store these elements in an array that is scanned on every access.
We chose 512 since it gave the best performance.

\paragraph{Instantiation}
We use a bucket size of $3$ with two stash parameters: 12 and 20, that is every level of recursion uses
the same stash size since these parameters are independent of the size of a tree,
and a bucket size of $4$ and stash size 12.
We note that buckets of size~2 were not sufficient in our experiments to avoid overflows.
In the figures we denote the three settings with Circuit-3-12, Circuit-3-20 and Circuit-4-12.

\section{Evaluation} \label{sec:evaluation}
We experimentally evaluate the performance of Pyramid ORAM in comparison to Circuit ORAM and naive scanning when accessing arrays of different sizes, when used in
two machine learning applications, and when run inside Intel SGX enclaves. The key insight is that, for the given datasets, Pyramid ORAM considerably outperforms Circuit ORAM (orders of magnitude) in terms of \emph{online} access times and is still several \textit{x} faster for 99.9\% of accesses when rebuilds are included in the timings.

All experiments were performed on
SGX-enabled Intel Core i7-6700 CPUs running Windows Server 2016. All code was compiled with the Microsoft C/C++ compiler v19.00 with \texttt{/O2}. In cases where we ran code inside SGX enclaves, it was linked against a custom SGX software stack. 
We report all measurements in CPU cycles.

\paragraph{ORAM vs.~Scanning}
Figure~\ref{fig:all_compare_14} shows the break-even point of Pyramid and Circuit ORAM compared to the basic approach---called \emph{Naive} in the following---of scanning through all elements of an array for each access. Our implementation of Naive is optimized using AVX2 SIMD operations and is the same as the one we use to scan the first  level in our Pyramid implementation.
Hence, for up to 1024 elements (the size of Pyramid's first level), Naive and Pyramid 
perform identically. We also set Circuit ORAM to use scanning for small arrays.
Pyramid and Circuit start improving over Naive at $4096$ (229 KB) and $8192$ elements (459 KB), respectively.

\begin{figure}[t]
\centering
    \includegraphics[width=\columnwidth]{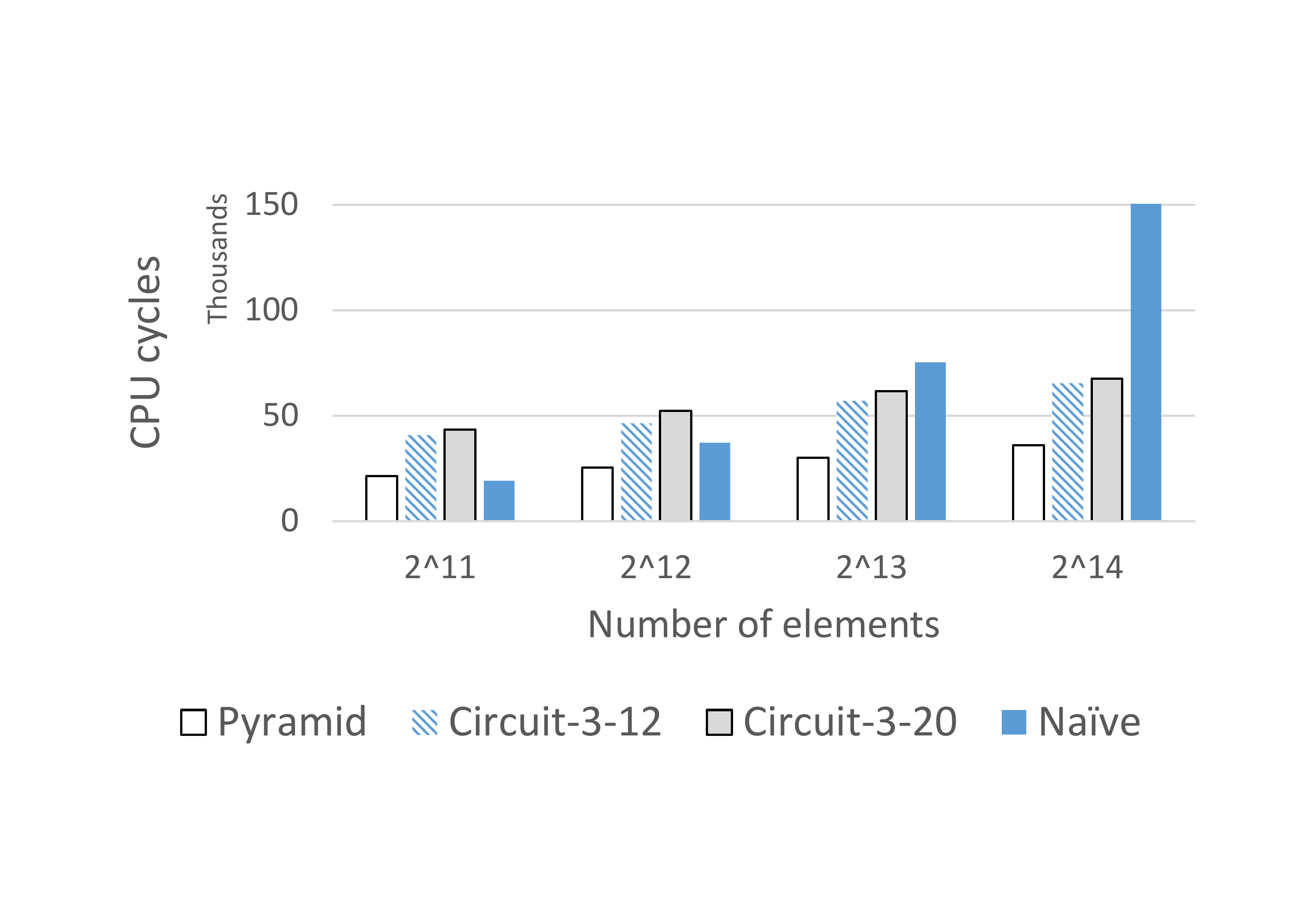}
\caption{Average overhead of obliviously reading elements from an array of 56-byte elements. Array sizes vary from $2048$ to~$16384$ elements (114KB--917KB).}
\label{fig:all_compare_14}
\end{figure}

\paragraph{Pyramid vs.~Circuit}
Figures~\ref{fig:boolean} and~\ref{fig:custom} depict the performance of Pyramid and Circuit
when sequentially accessing arrays of 1-byte and 56-byte elements.
Pyramid dominates Circuit for datasets of size up to 14~MB.
Though Circuit ORAM should perform worse for smaller
elements than larger ones, it is not evident in the 1-byte and 56-byte experiments
since our implementation uses (machine-optimized) large block sizes 
favoring Circuit ORAM.

\begin{figure}[t]
\centering
    \includegraphics[width=\columnwidth]{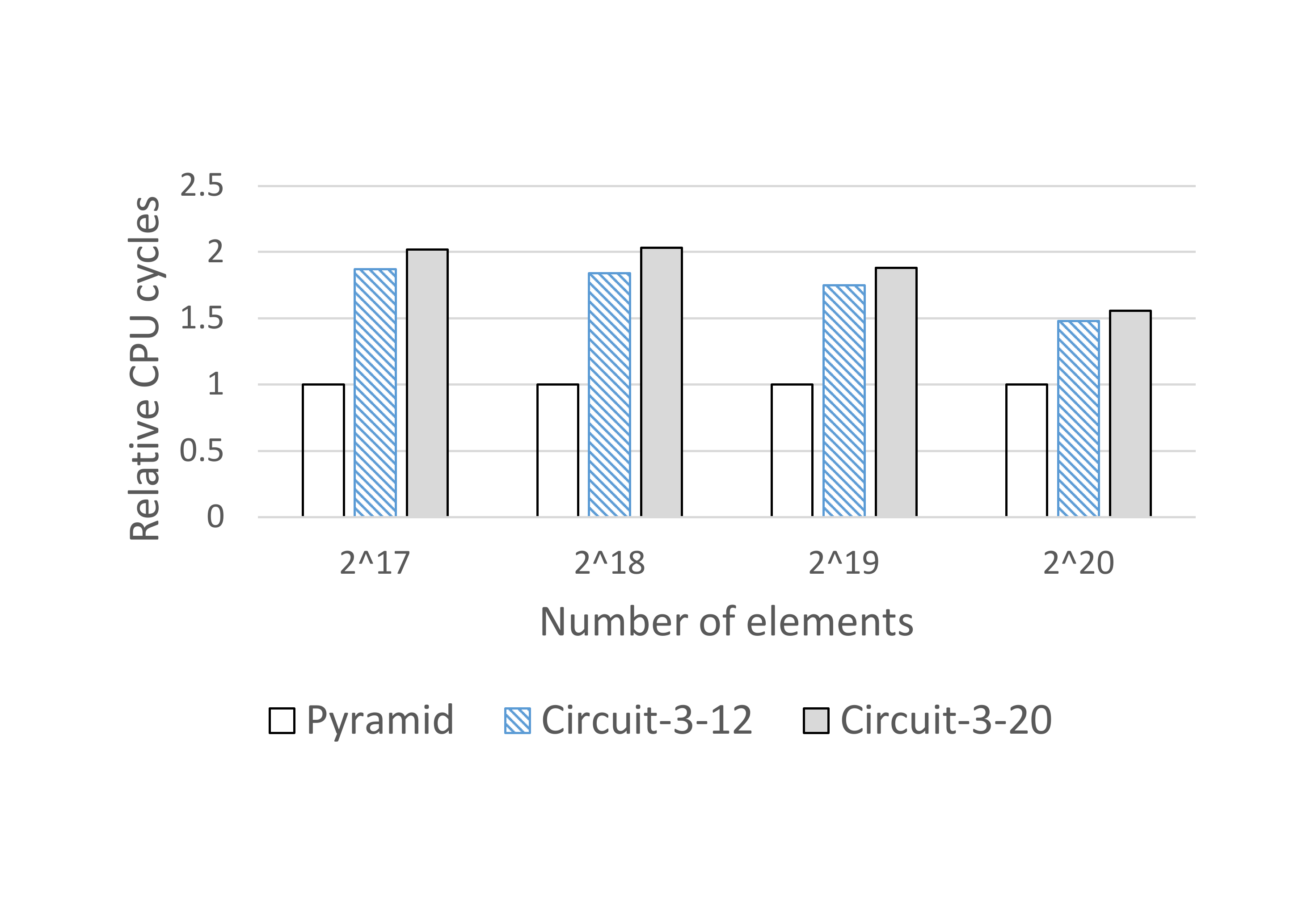}
\caption{
Average CPU cycle overhead of obliviously reading 1-byte elements from arrays of size  $2^{17}$ to $2^{20}$ (4KB--1MB) in an SGX enclave;
the measurements are normalized to the performance of~\ourscheme{}.}
\label{fig:sgx_boolean}
\end{figure}

\begin{figure}[t]
\centering
    \includegraphics[width=\columnwidth]{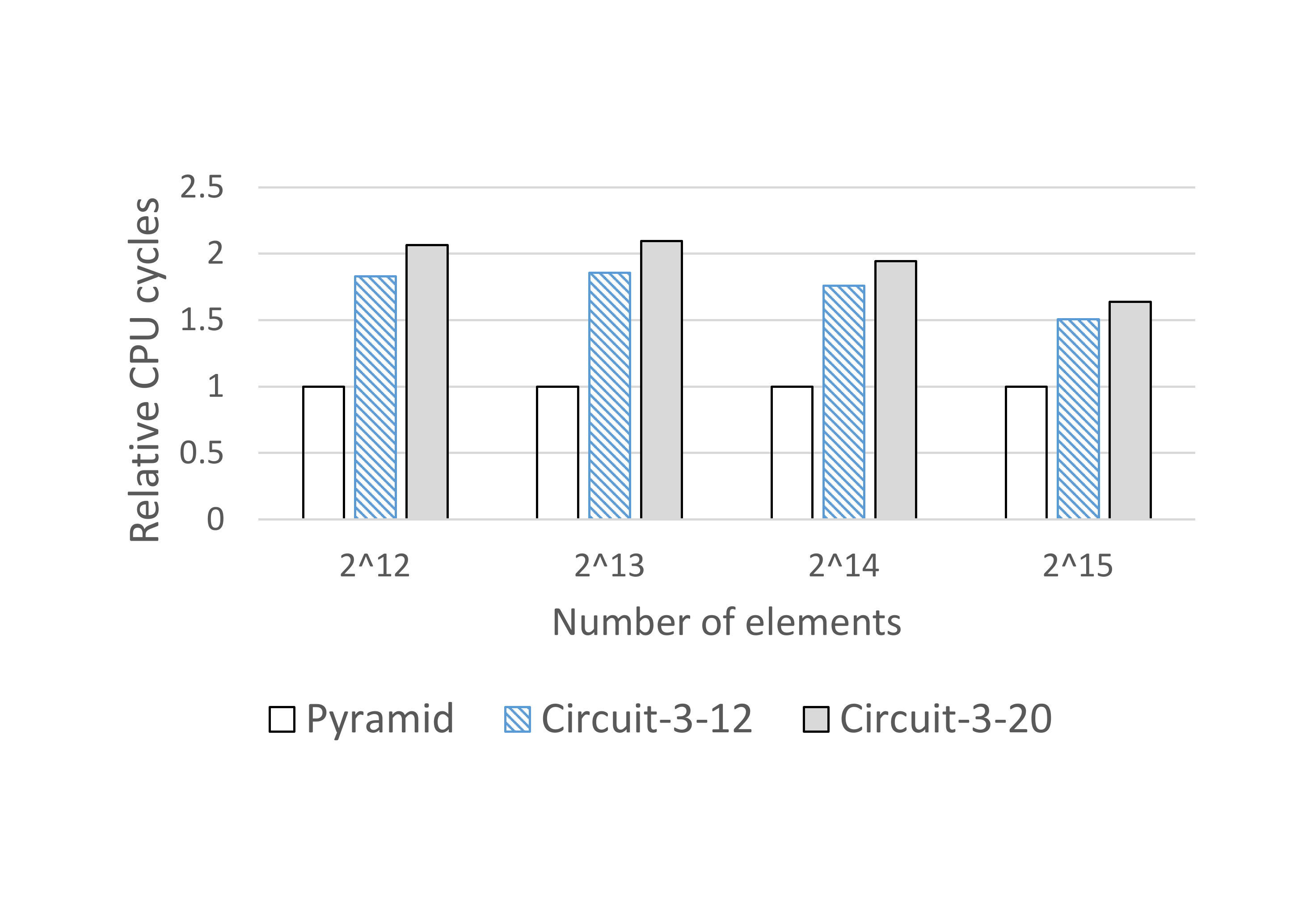}
\caption{Average CPU cycle overhead (normalized to \ourscheme{}) of obliviously reading 56-byte elements from arrays of size $4096$ to $32768$ (229KB--1.8MB)
{from an Intel SGX enclave}.
}
\label{fig:sgx_custom}
\end{figure}

\begin{figure}[t]
\centering
    \includegraphics[width=\columnwidth]{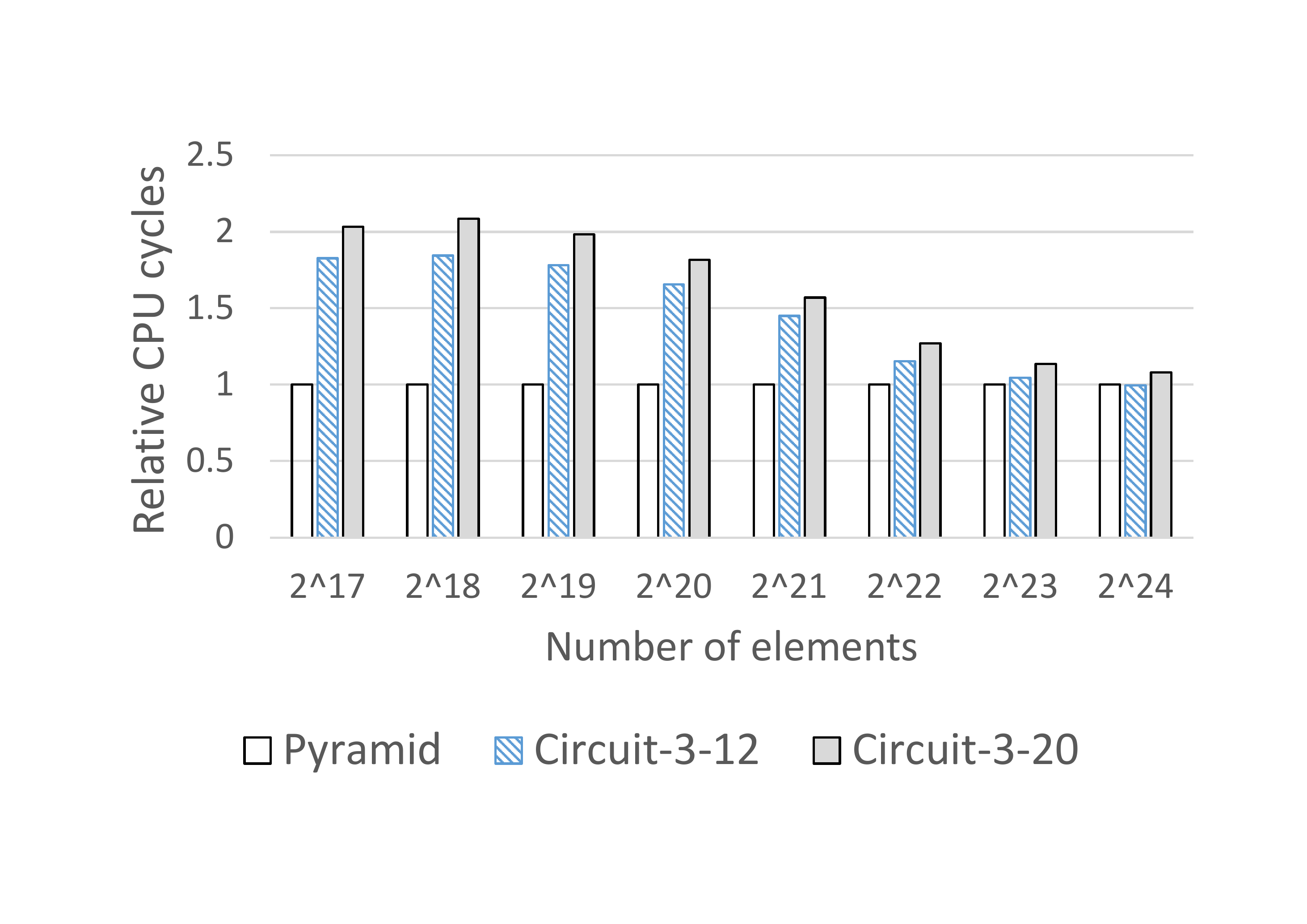}
\caption{Average CPU cycle overhead (normalized to \ourscheme{}) of obliviously reading 1-byte elements from arrays of size $2^{17}$ to $2^{24}$ (131KB--16MB).}
\label{fig:boolean}
\end{figure}

\begin{figure}[t]
\centering
    \includegraphics[width=\columnwidth]{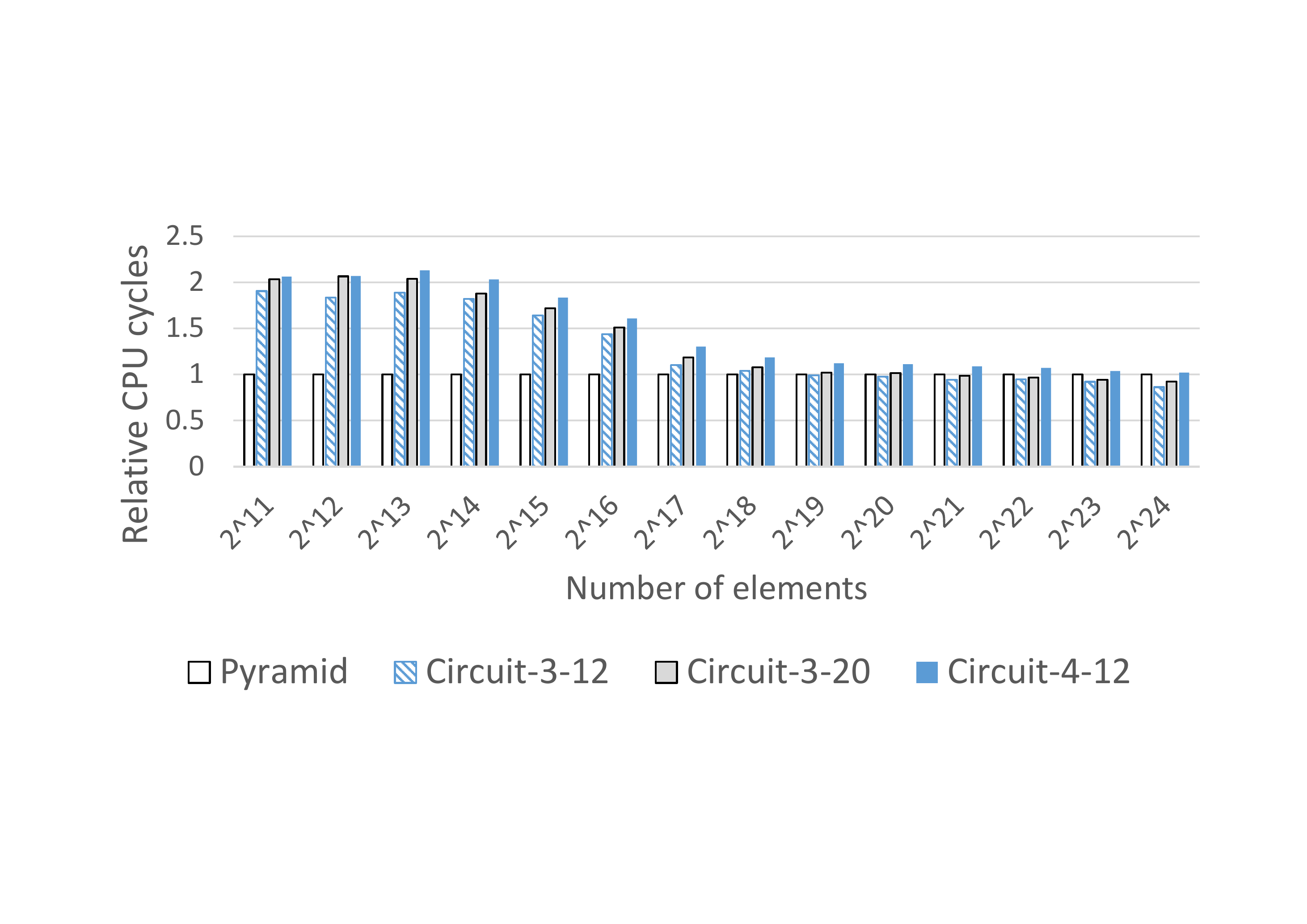}
\caption{
Average CPU cycle overhead (normalized to \ourscheme{}) of obliviously reading 56-byte elements from arrays of size {$2^{11}$ to $2^{24}$} (114KB--939MB)}
\label{fig:custom}
\end{figure}

\paragraph{Intel SGX}
We run several benchmarks inside of an SGX enclave.
Due to memory restrictions of around 100 MB in the first generation of Intel SGX,
we evaluate the algorithms on small datasets.
Since both Circuit and Pyramid have space overhead, the actual data that can be loaded in an ORAM is at most 12 MB.
Compared to the results in the previous section, SGX does not add any
noticeable overhead. Figures~\ref{fig:sgx_boolean} and~\ref{fig:sgx_custom} show the results for arrays with 1-byte and 56-byte
elements.

\paragraph{Variation of access overhead}
As noted earlier, the overhead of accessing a hierarchical ORAM depends on the number of previous accesses.
For example, an access that causes a rebuild of the last level is
much slower than the subsequent access (for which no rebuild happens).
Hence, \ourscheme{} exhibits different online and amortized cost as noted in Table~\ref{tbl:cmp}.
To investigate this effect, we accessed Pyramid and Circuit $3\N$ times and measured the
overhead of every request for $\N = 2^{20}$ and $\N=2^{25}$.
For both schemes the measured time includes the time to access the element
and to prepare the data structure for the next request,
that is, it includes rebuild and eviction for Pyramid ORAM and Circuit ORAM, respectively.
Figures~\ref{fig:single_accesses_scale} and~\ref{fig:single_accesses_scale_25} show the corresponding overhead split across 99.9\% of the fastest among $3\N$ accesses. As can be seen, the fraction of expensive accesses in Pyramid is very small since large rebuilds are infrequent.
For $\N= 2^{20}$, almost $40\%$ of all Pyramid accesses are answered on average in 16.6K CPU cycles,
almost 10$\times$ faster than Circuit.
Except for the few accesses that cause a rebuild,
Pyramid answers $99.9\%$ of all accesses on average in 26.8K CPU cycles,
compared to 178.7K cycles for Circuit.
A similar trend can be observed for the larger array in Figure~\ref{fig:single_accesses_scale_25}:
for $35\%$ of all accesses, Pyramid is at least 10$\times$ faster than Circuit
and still 8$\times$ faster for $99\%$ of all accesses,
while, on average, Circuit performs better.
We stress that for many applications the online cost is the significant characteristic, as expensive rebuilds can be anticipated and performed during idle times or, to some extent, can be performed concurrently.

\begin{figure}[t!]
\centering
      \includegraphics[width=\columnwidth]{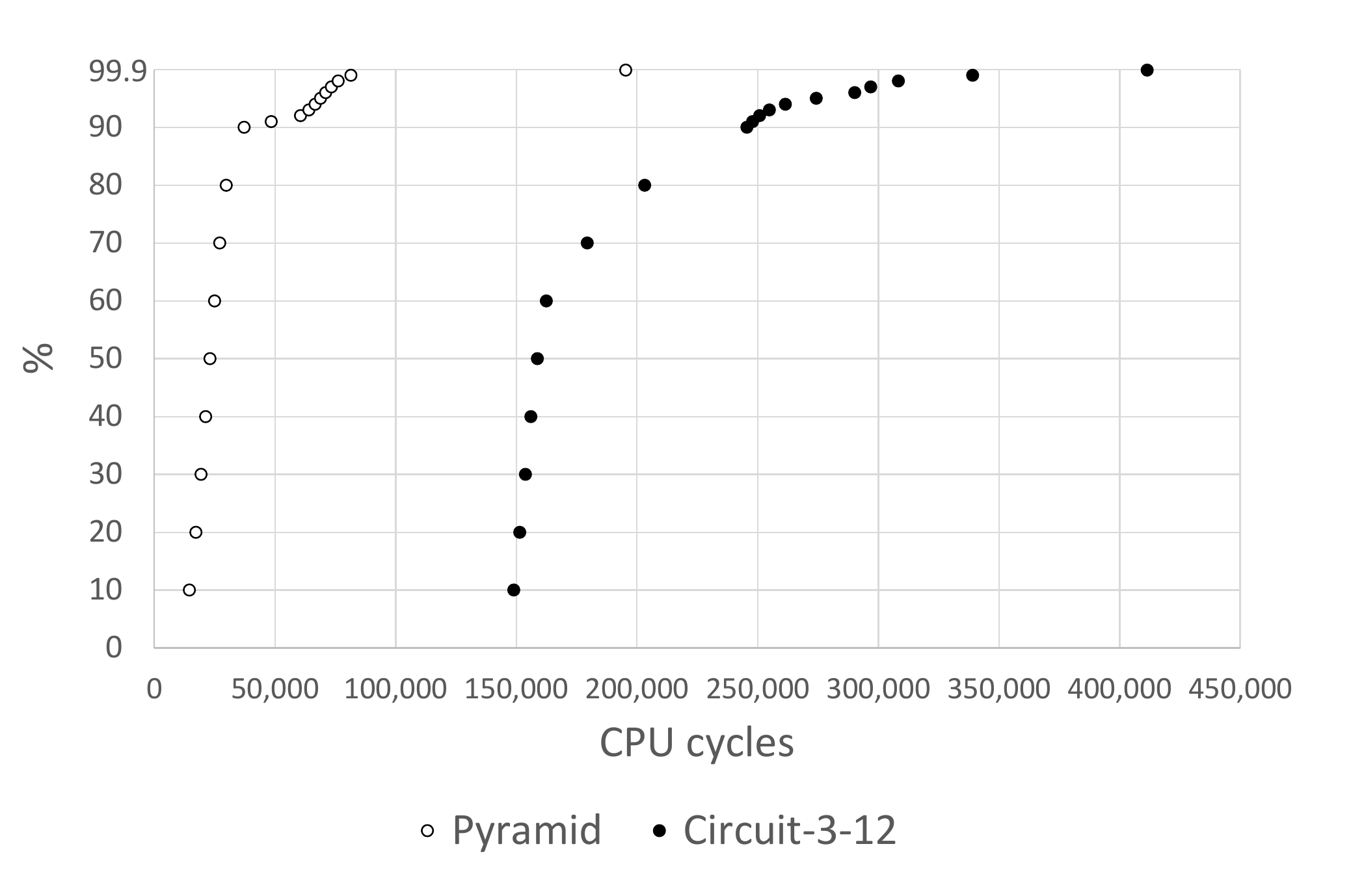}
\caption{Cumulative distribution of access times in CPU cycles
when requesting elements from an array of $2^{20}$ 56-byte elements (58 MB)
where 99.9\% of fastest accesses are plotted.
For example, Pyramid ORAM (Circuit ORAM) answers 50\% and~99\% of all requests under~23K (159K)
and~82K (340K) CPU cycles.}
\label{fig:single_accesses_scale}
\end{figure}

\begin{figure}[t!]
\centering
      \includegraphics[width=\columnwidth]{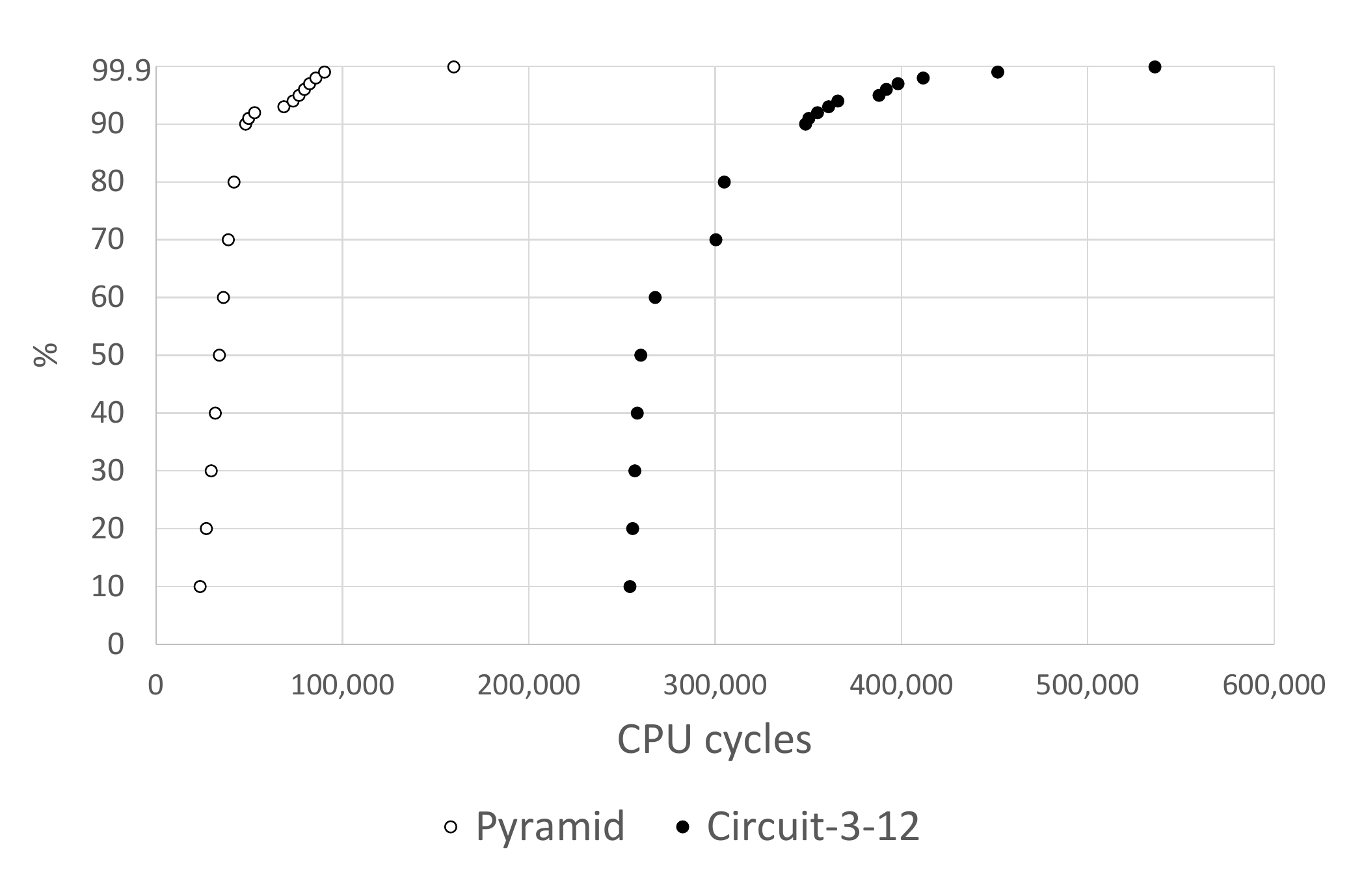}
\caption{Cumulative distribution of access times in CPU cycles
when requesting elements from an array of $2^{25}$ 56-byte elements (1.87GB)
where 99.9\% of fastest accesses are plotted.
For example, Pyramid ORAM (Circuit ORAM) answers 50\% and~99\% of all requests under~33.8K (260K)
and~90K (452K) CPU cycles.}
\label{fig:single_accesses_scale_25}
\end{figure}

\paragraph{Sparse Vectors}
Dot product between a vector and a sparse vector
is a common operation
in machine-learning algorithms.
For example, in linear SVM~\cite{KeerthiD05},
a model contains a weight for every feature while
a sample record may contain only some of the features.
Since relatively few features may be present in a sample record,
records are stored as
(feature index, value) pairs.
The dot product is computed by performing random accesses into the model
based on the feature indices, in turn, revealing
the identities of the features present in the record.

We use ORAM to store and access the model vector
of 4-byte floating feature values in
order to protect record content during classification.
To measure the overhead, we choose three datasets with sparse sample records from~LIBSVM Tools\footnote{\url{https://www.csie.ntu.edu.tw/~cjlin/libsvmtools/datasets/} (accessed 15/05/2017).}:
\texttt{news20.binary} (1,355,191 features and 19,996 records), \texttt{rcv1-train} (47,236 features and 20,242 records)
and \texttt{rcv1-test} (47,236 features and 677,399 records).
These datasets are common datasets for text classification
where a feature is a binary term signifying the presence of
a particular trigram.
As a result the datasets are sparse (e.g., if the first dataset were to be stored
densely, only $0.0336$\% of its values would be non-zero).
In Figure~\ref{fig:sparse_vector} we compare the overhead
of the ORAM schemes over insecure baseline when computing dot product on each dataset.
The insecure baseline measures sparse vector multiplication
without side-channel protection.
As expected, Pyramid performs better than Circuit,
as model sizes are 5.4MB for \texttt{news20.binary}
and 188KB for \texttt{rcv1}.
Comparing to the baseline, the average ORAM overhead is four and
three orders of magnitude, respectively.

\begin{figure}[t]
\centering
    \includegraphics[width=\columnwidth]{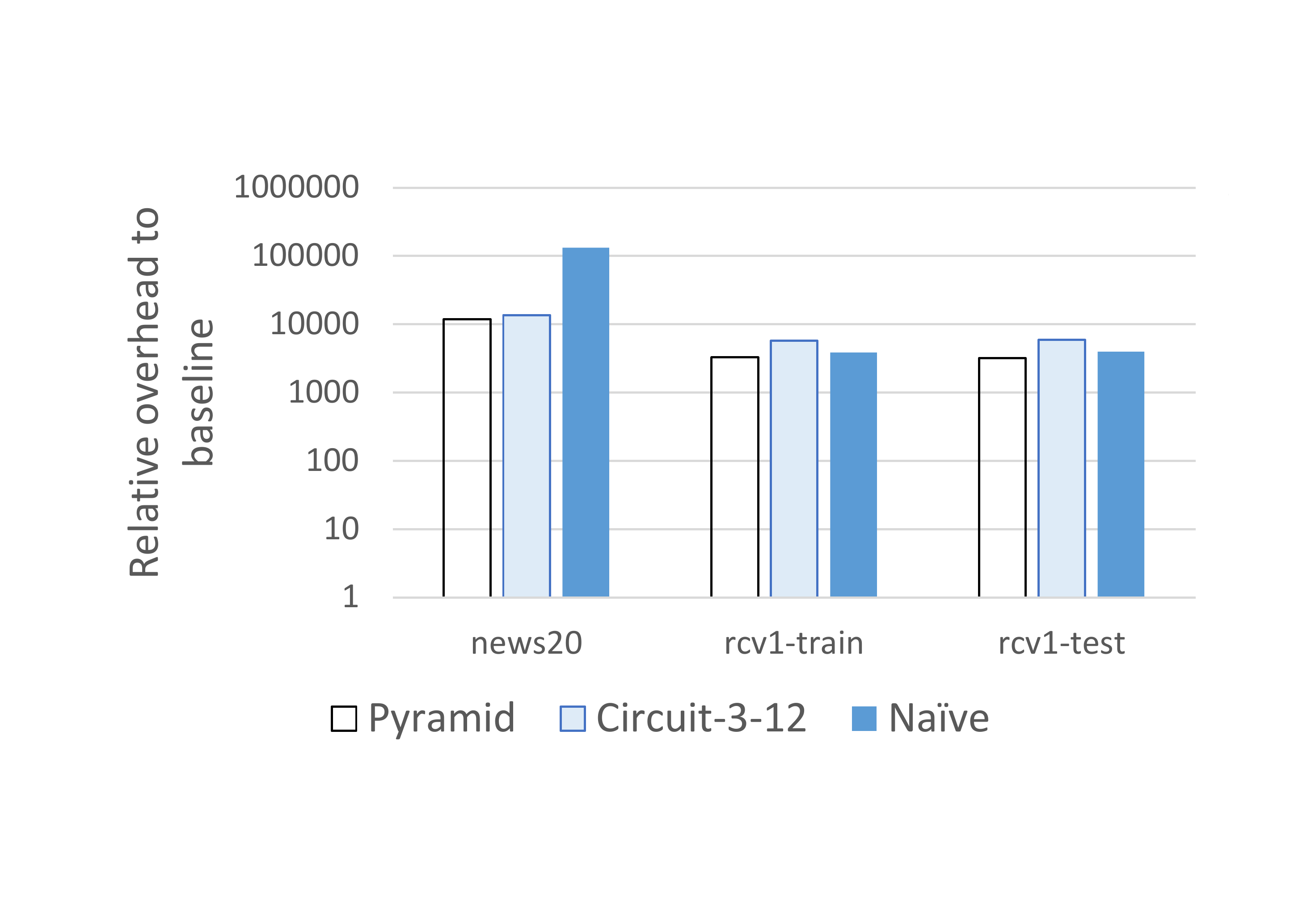}
\caption{Sparse vector multiplication: CPU cycle overhead of ORAM solutions on three datasets normalized to the performance
of the insecure baseline running on the same dataset.}
\label{fig:sparse_vector}
\end{figure}

\vspace{-10pt}

\paragraph{Decision Trees}
Tree traversal is also a common task in machine learning
where a model is a set of decision trees (a forest)
and a classification on a sample record is done by traversing the trees
according to record's features. Hence, accesses to the tree
can reveal information about sample records.
One can use an ORAM to protect such data-dependent accesses
either by placing the whole tree in an ORAM or by placing each layer of the tree in a separate ORAM.
The latter approach suits balanced trees, however, for unbalanced trees it reveals the height of the tree
and the number of nodes at each layer.

Our experiments use a pre-trained forest for a large public dataset\footnote{The \texttt{Covertype} dataset is available from the UCI Machine Learning Repository: \url{https://archive.ics.uci.edu/ml/datasets/covertype/} (accessed 18/05/2017)}. The forest consists of 32 unbalanced trees with 30K to 33K nodes each. A node comprises four 8-byte values and hence fits into a single 56-byte block. We load each tree into a separate ORAM and evaluate 290,506 sample records; the performance results are given in Figure~\ref{fig:trees}.
Pyramid's low online cost shows again: Pyramid and Circuit incur three orders of magnitude overhead on average, while $99\%$ of all classifications on average have only two orders of magnitude overhead for Pyramid.

In summary, Pyramid ORAM can give orders of magnitude lower access latencies than Circuit ORAM. In particular, Pyramid's online access latencies are low. This makes Pyramid the better choice for many applications in practice.

\begin{figure}[t]
\centering
    \includegraphics[width=.85\columnwidth]{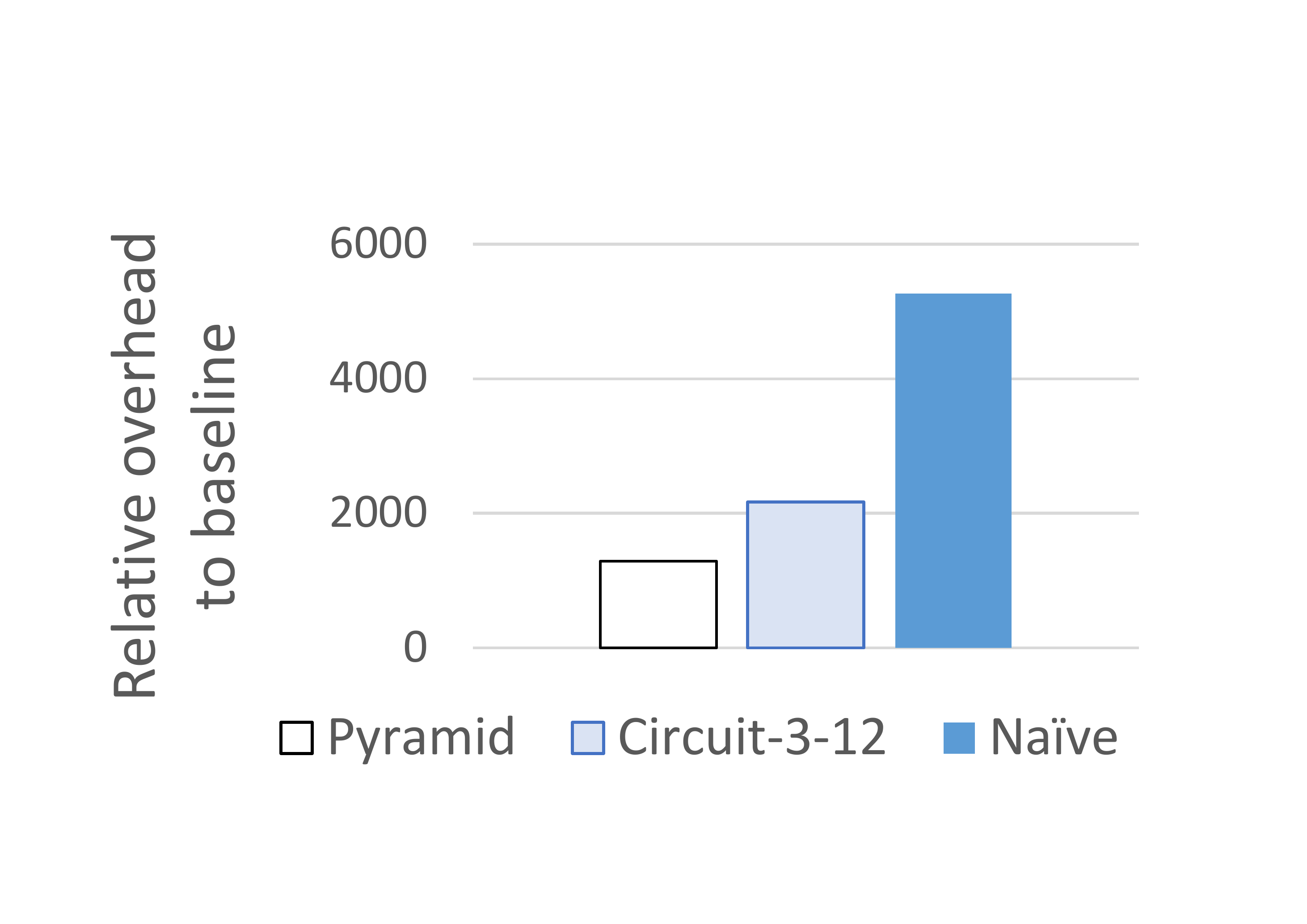}
\caption{Decision trees: CPU cycle overhead of ORAM solutions normalized to the performance of the insecure baseline.}
\label{fig:trees}
\end{figure}

\section{Related work} \label{sec:relatedwork}

\paragraph{Oblivious RAM}
The hierarchical constructions were the first instantiations proposed
for oblivious RAM~\cite{GoldreichO96}.
Since then, many variations have been proposed which alter
hash tables, shuffling techniques, or assumptions on the size of private memory~\cite{Pinkas2010,GMOT12,wsc-bcomp-08,gm-paodor-11}.
The hierarchical construction
by Kushilevitz~\textit{et al.}~\cite{KLO}
has the best asymptotical performance
among ORAMs for constant private memory size.
It uses the oblivious cuckoo hashing scheme proposed by Goodrich and Mitzenmacher~\cite{gm-paodor-11}.
In particular, the scheme relies on the AKS sorting network~\cite{AKS},
which is performed $O(\log \N)$ times
for every level in the hierarchy.
The AKS network, though an optimal sorting network, in simplified form has a high hidden constant of 6,100~\cite{DBLP:journals/algorithmica/Paterson90}.
Moreover, until the hierarchy reaches
the level that can hold at least $(\log \N)^7$ elements,
every level is implemented using
regular hash tables with buckets of size $\log \N$.
Thus, AKS only becomes practical for large data sizes.

Tree-based ORAM schemes, first proposed by Shi~\textit{et al.}~\cite{tree_based_orams}, are a fundamentally different paradigm. There have been a number of follow-up constructions~\cite{pathoram, ringoram, burstoram} that improve on the performance of ORAM schemes. Most of these works focus on reducing bandwidth overhead, often relying on non-constant private memory and elements of size $\Omega(\log \N)$ or $\Omega((\log \N)^2)$.
We note that such schemes can be modified to rely
on constant private memory by storing its content in an encrypted form externally
and accessing it also obliviously (e.g., a scanning for every access).
However, this transformation
is often expensive. For example, it increases the asymptotical overhead of Path ORAM, because stash operations now require using oblivious sorting~\cite{scoram,circuit}.

ORAMs have long held the promise to serve as building blocks for multi-party computation (MPC). Hence some recent works focus on developing protocols with small circuit complexity~\cite{scoram,7546504}, including the Circuit ORAM scheme~\cite{circuit} discussed in previous sections.
Some of the schemes that try to decrease circuit size also have small private memory requirements,
since any ``interesting'' operation on private memory (e.g., sorting) increases the size.
However, not all ORAM designs that target MPC can be used in our
client-server model. For example, the work by Zahur~\textit{et al.}~\cite{7546504}
optimizes a square-root ORAM~\cite{GoldreichO96} assuming that each party can
store a permutation of the elements.

\paragraph{Memory Side-channel Protection for Hardware}
Rane~\textit{et al.}~\cite{raccoon} investigated compiler-based
techniques for removing secret-dependent accesses in the code running on commodity hardware.
Experimentally, they showed
that Path ORAM is not suitable in this setting due to its requirements on
large private memory, showing that naive scanning performs better for
small datasets.

Ring ORAM~\cite{ringoram} optimizes parameters of Path ORAM and uses
an XOR technique on the server to decrease the bandwidth cost.
As a case study, Ren~\textit{et al.} estimate the performance of Ring ORAM
if used on secure processors assuming 64-byte element size. The design
assumes non-constant size on-chip memory that can fit the stash,
a path of the tree (logarithmic in $\N$) and the position map at a certain recursion level.
Ascend~\cite{ascend} is a design for a secure processor
that accesses code and data through Path ORAM.
As a result, Ascend relies on on-chip private memory
that is large enough to store the stash and a path internally (i.e., $O(\log \N)$).
PHANTOM~\cite{phantom} is a processor design with an oblivious
memory controller that also relies on Path ORAM, storing the stash in trusted memory
and using 1KB--4KB elements. Fletcher~\textit{et al.}~\cite{Fletcher:2015:LLH:2860695.2860768} improve
on this using smaller elements, but they also assume large on-chip private memory.
Other hardware designs optimized for Path ORAM exist~\cite{Ren:2013:DSE:2485922.2485971,Fletcher:2015:FON:2694344.2694353, flatoram}.
Sanctum~\cite{sanctum} is a secure processor design related to Intel SGX. Other than SGX, it provides private cache partitions for enclaves, stores enclave page-tables in enclaves, and dispatches corresponding page faults and other exceptions directly to in-enclave handlers. This largely prevent malicious software (including the OS) from inferring enclave memory access patterns. However, it does not provide additional protection against a hardware attacker. 

Recent work also addresses SGX's side channel problem in software: Shih~\textit{et al.}~\cite{tsgx} proposed executing sensitive code within Intel TSX transactions in order to suppress and detect leaky page faults. Gruss~\textit{et al.}~\cite{cloak} preload sensitive code and data within TSX transactions in order to prevent leaky cache misses. 
Ohrimenko~\textit{et al.}~\cite{secureml} manually protect a set of machine learning algorithms against memory-based side-channel attacks, applying high-level algorithmic changes and low-level oblivious primitives. 
DR.SGX~\cite{drsgx} aims to protect enclaves against cache side-channel attacks by frequently re-randomizing data locations (except for small stack-based data) at cache-line granularity. For this, DR.SGX augments a program's memory accesses at compile time. DR.SGX incurs a reported \emph{mean} overhead of 3.39x--9.33x, depending on re-randomization frequency. DR.SGX does not provide strong guarantees; it can be seen as an ad-hoc ORAM construction that may induce enough noise to protect against certain software attacks.
ZeroTrace~\cite{zerotrace} is an oblivious ``memory controller'' for enclaves. It implements Path ORAM and Circuit ORAM to obliviously access data stored in untrusted memory and on disk; only the client's position map and stash are kept in enclave memory and are accessed obliviously using scanning and conditional x86 instructions. 

\paragraph{Routing and Hash Tables}
The probabilistic routing network in Section~\ref{sec:network}
belongs to the class of
unbuffered delta networks~\cite{Patel81}.
The functionality of our network is different as instead of dropping
the elements that could not be routed, we label them
as spilled (i.e., tagged $\false$) and carry them to the output.
Probably the closest to ours, is the network based on
switches with multiple links studied by Kumar~\cite{network_thesis}.
The network assumes that each switch of the first stage (input buckets in our case) receives one packet,
which is not the case in our work where buckets may contain up to $c$ elements.
Moreover, the corresponding analysis, similar to other analyses of delta networks,
focuses on approximating the throughput of the network, while we are interested in the
upper-bound of the number on
elements that are not routed.

The Zigzag hash table can be seen as a combination of the load balancing strategy called
$\mathsf{Threshold}(c)$~\cite{Adler:1995:PRL:225058.225131}
and Multilevel Adaptive Hashing~\cite{hashtable}.
In the former, elements spilled from a table with buckets of size $c$
are re-thrown into the same table, discarding
the elements from the previous throws. 
In Multilevel Adaptive Hashing, $k$ tables of different sizes with buckets of size 1 are used
to store elements. The insertion procedure is the same as that for \emph{non-oblivious} ZHT,
that is elements that collide in the first hash table are re-inserted into the next table.
Different from ZHTs, multilevel hashing \emph{non-obliviously} rebuilds its tables when the $k$th table overflows.

\section{Conclusions} \label{sec:conclusions}
We presented \ourscheme{}, a novel ORAM design 
that targets modern SGX processors.
Rigorous analysis of \ourscheme{}, as
well as an evaluation of an optimized implementation
for x64 Skylake CPUs, shows that \ourscheme{} provides strong
guarantees and practical performance while outperforming
previous ORAM designs in this setting either in asymptotical online complexity or in practice.

\section*{Acknowledgments}
We thank C\'edric Fournet, Ian Kash, and Markulf Kohlweiss for helpful discussions.

\balance
\bibliographystyle{plain}
\bibliography{main}

\appendix

\begin{figure*}[t!]
\centering
    \includegraphics[scale=0.39]{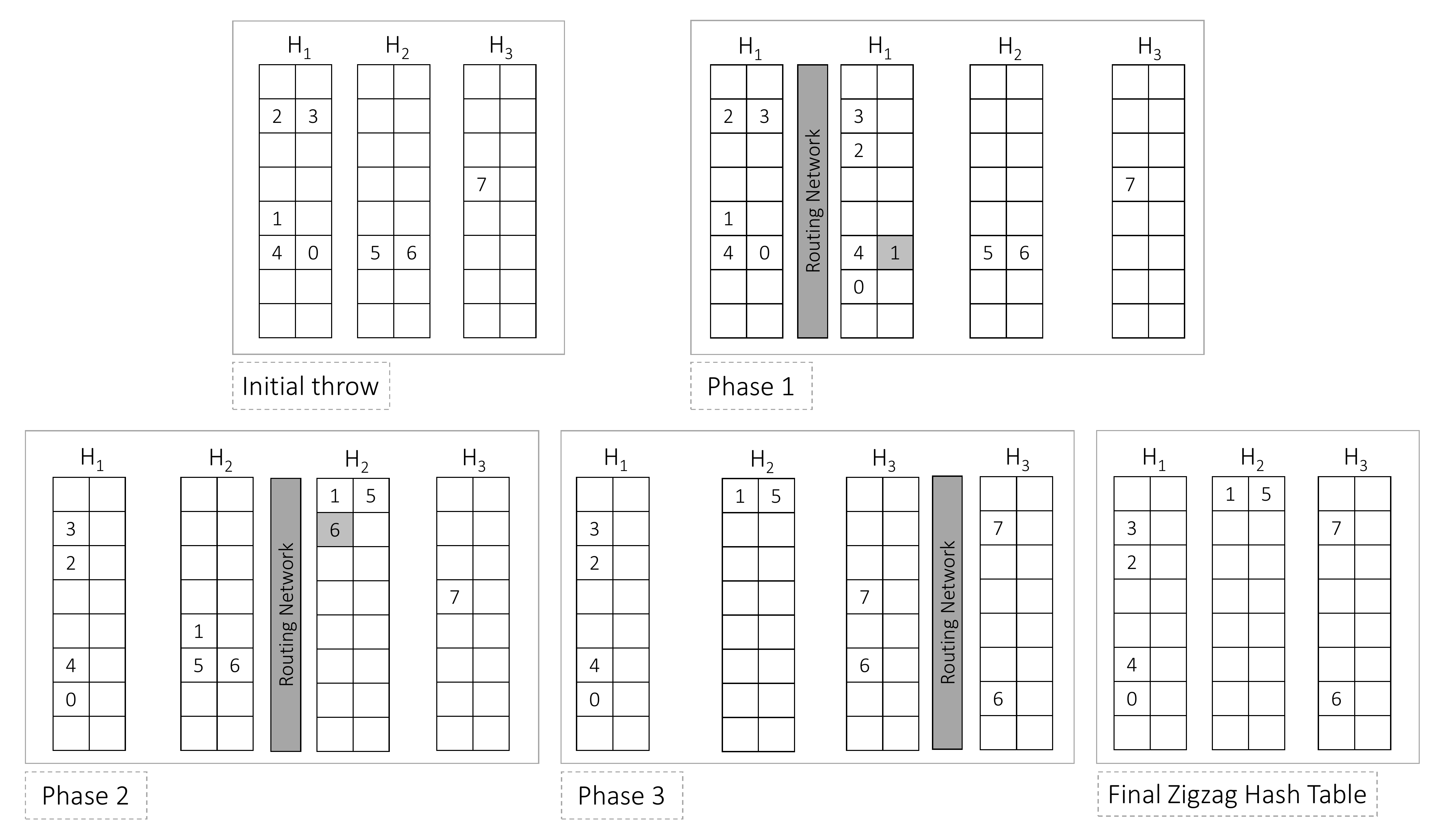}
\caption{Oblivious Zigzag Hash Table Setup from Section~\ref{sec:ozht}.
During the initial throw elements are allocated using random hash functions.
In phase~1 elements in $H_1$ are routed with probabilistic routing network
using hash function~$h_1$. Element~1 could not be routed to its bucket and, hence,
is greyed out (tagged $\false$ by the PRN). Hence, it is thrown into the next table.
In the next phases elements in $H_2$ and $H_3$ are routed according to~$h_2$ and~$h_3$, correspondingly. 
}
\label{fig:fullZHT}
\end{figure*}

{\section*{Appendix}}

\section{Additional lemmas}

\begin{lemma}
When throwing $m$ elements in $n$ buckets, each
with capacity $\BSize$, the probability that
a given bucket receives $\BSize$ or more elements
is at most ${m \choose \BSize} \frac{1}{n^{\BSize}}$.
\label{lemma:overflow_bucket_prob}
\end{lemma}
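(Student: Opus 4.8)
The plan is to use a straightforward union bound over the possible sets of balls that could be responsible for filling a given bucket. Fix a bucket $b$. Since the $m$ elements are thrown using a truly random hash function, each element lands in bucket $b$ independently with probability $1/n$. The event ``$b$ receives $\BSize$ or more elements'' is exactly the event that there exists some subset $S$ of the $m$ elements with $|S| = \BSize$ such that every element of $S$ hashes to $b$. (If at least $\BSize$ elements land in $b$, then in particular \emph{some} $\BSize$-subset of them all land in $b$; conversely if some $\BSize$-subset all land in $b$ then $b$ has received at least $\BSize$ elements.)

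The key steps, in order, are: (1) rewrite the target event as the union, over all $\binom{m}{\BSize}$ subsets $S$ of size $\BSize$, of the event $E_S$ that all elements of $S$ hash to $b$; (2) bound $\Pr[E_S] = (1/n)^{\BSize}$ by independence of the hash values; (3) apply the union bound to get $\Pr[b \text{ overflows}] \le \binom{m}{\BSize} (1/n)^{\BSize}$. This is exactly the claimed bound, so no further manipulation is needed.

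There is no real obstacle here; the only thing to be careful about is the logical equivalence in step (1) — specifically that ``$\ge \BSize$ balls in $b$'' and ``$\exists$ an $\BSize$-subset all landing in $b$'' describe the same event, so that the union bound is being applied to a genuine cover. I would state this equivalence explicitly before invoking independence and the union bound. One could alternatively note this is just the first-moment bound on the number of $\BSize$-subsets that are monochromatic for bucket $b$, but the union-bound phrasing is cleanest and matches the form of the stated inequality.
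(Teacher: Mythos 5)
Your proof is correct, and it takes a genuinely different route from the paper's. You bound the tail event by a union bound over all $\binom{m}{\BSize}$ size-$\BSize$ subsets of the $m$ balls, using that each such subset lands entirely in the fixed bucket with probability $n^{-\BSize}$; the logical equivalence you flag in step (1) is indeed exactly right and is what licenses the union bound. The paper instead writes out the exact binomial tail probability $\sum_{i=\BSize}^{m} \binom{m}{i} n^{-i} (1-1/n)^{m-i}$, factors $\binom{m}{\BSize} n^{-\BSize}$ out of each term, and then shows by a term-by-term comparison with $\binom{m-\BSize}{i}$ that the remaining sum is at most $1$ via the binomial theorem. Your argument is shorter, avoids the factorial manipulations, and makes the probabilistic content (first-moment bound on monochromatic $\BSize$-subsets) transparent; the paper's computation has the minor advantage of exhibiting the exact tail and showing precisely what is being discarded, but both yield the identical bound with no loss in either direction.
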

\begin{proof}
The probability that a bucket gets assigned $\BSize$ or more elements is
\begin{eqnarray*}
&&\sum_{i=c}^m  {m \choose i} \frac{1}{n^{i}}  \left(1 - \frac{1}{n}\right) ^{m - i} = \\
&&{m \choose c} \frac{1}{n^c} \sum_{i=c}^m  \frac{c! (m-c)!}{i! (m-i)!} \frac{1}{n^{i-c}}  \left(1 - \frac{1}{n}\right) ^ {m - i} = \\
&& {m \choose c} \frac{1}{n^{c}}\sum_{i=0}^{m-c}  \frac{c! (m-c)!}{(c+i)! (m-c-i!)} \frac{1}{n^i}  \left(1 - \frac{1}{n}\right) ^ {m - c - i}
\end{eqnarray*}
To prove the lemma, we show that the sum expression is at most 1.
Note that as long as the first component of the sum is at most ${m-c \choose i }$,
we can use the binomial formula:
$$\sum_{i=0}^{m-c}  {m -c \choose i}  \frac{1}{n^{i}} \left(1 - \frac{1}{n}\right) ^ {m - c - i} = 1$$
We now show that ${m-c \choose i } \ge \frac{c! (m-c)!}{(c+i)! (m-c-i!)}$ for $i \ge 0$
and $c \ge 1$:
\begin{eqnarray*}
\frac{(m-c)!}{i! (m-c-i)!}&\ge& \frac{c! (m-c)!}{(c+i)! (m-c-i)!}\\
\frac{1}{i! } &\ge& \frac{c! }{(c+i)!} = \frac{1}{(c+1)\ldots(c+i)} \\
\end{eqnarray*}
\end{proof}

\begin{lemma}
The average number of elements that do not get assigned to a bucket
during the throw of $m$ elements into $n$ buckets of size $\BSize$ is
at most $\frac{1}{n^\BSize} {m \choose \BSize+1}$.
\label{lemma:throw_ave}
\end{lemma}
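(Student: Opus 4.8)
The plan is to compute the expectation by linearity over buckets and then replace the inconvenient truncation term $\max(\cdot,0)$ by a binomial coefficient, whose expectation under a binomial law is a clean factorial moment.

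First I would set up the random experiment: the throw consists of $m$ independent, uniformly random bucket choices. For a bucket $b$, let $X_b$ be the number of elements landing in $b$, so $X_b$ is $\mathrm{Binomial}(m,1/n)$ and, since $b$ keeps its first $c$ arrivals and spills the rest, the number of elements of $b$ that fail to be assigned is exactly $\max(X_b-c,0)$. Hence the total number of unassigned elements is $\sum_{b=1}^{n}\max(X_b-c,0)$, and by linearity of expectation the quantity to bound equals $n\cdot\mathbf{E}\!\left[\max(X_1-c,0)\right]$.

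Second, I would prove the pointwise inequality $\max(j-c,0)\le\binom{j}{c+1}$ for every integer $j\ge 0$: for $j\le c$ both sides vanish, and for $j=c+t$ with $t\ge 1$ one has $\binom{c+t}{c+1}=\binom{c+t}{t-1}$, which equals $1=t$ when $t=1$ and is at least $c+t\ge t$ when $t\ge 2$ (using $\binom{N}{k}\ge N$ for $1\le k\le N-1$). Taking expectations gives $\mathbf{E}\!\left[\max(X_1-c,0)\right]\le\mathbf{E}\!\left[\binom{X_1}{c+1}\right]$. Finally, I would evaluate the factorial moment: $\binom{X_1}{c+1}$ counts the $(c+1)$-subsets of the $m$ elements all of whose members land in bucket $1$, so summing over the $\binom{m}{c+1}$ candidate subsets and using independence of the placements, $\mathbf{E}\!\left[\binom{X_1}{c+1}\right]=\binom{m}{c+1}(1/n)^{c+1}$; multiplying by $n$ yields the claimed $\frac{1}{n^c}\binom{m}{c+1}$. (Equivalently one can run this $(c+1)$-subset count directly over all $n$ buckets and skip the linearity step.)

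The computations are elementary; the only subtlety, and the step I expect to be the crux, is the inequality $\max(j-c,0)\le\binom{j}{c+1}$. It is exactly what is needed to avoid overcounting: a naive ``charge each overflowed element to the $c$ elements already sitting in its bucket'' argument double-counts the same $(c+1)$-clusters, whereas the binomial coefficient absorbs this. An alternative route through Lemma~\ref{lemma:overflow_bucket_prob}, writing $\mathbf{E}[\max(X_1-c,0)]=\sum_{i>c}\Pr[X_1\ge i]$ and summing the per-bucket bound, is also possible but messier, so I would prefer the factorial-moment argument.
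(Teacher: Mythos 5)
Your proof is correct, and it takes a genuinely different route from the paper's. You decompose the overflow over \emph{buckets}: with $X_b$ the arrival count at bucket $b$, the total overflow is $\sum_b \max(X_b - c, 0)$, and the pointwise inequality $\max(j-c,0)\le\binom{j}{c+1}$ reduces the problem to the binomial factorial moment $E[\binom{X_b}{c+1}]=\binom{m}{c+1}/n^{c+1}$, which you compute exactly. The paper instead decomposes over \emph{elements}: it lets $X_j$ indicate whether the $j$th thrown element overflows, bounds $\Pr[X_j]\le\binom{j-1}{c}/n^{c}$ by a union bound over $c$-subsets of the preceding throws landing in the same bucket (i.e.\ Lemma~\ref{lemma:overflow_bucket_prob} applied to the first $j-1$ throws), and then telescopes with the hockey-stick identity $\sum_{j=c}^{m-1}\binom{j}{c}=\binom{m}{c+1}$. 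Both arguments ultimately control the same quantity --- the expected number of $(c+1)$-subsets of the $m$ balls that all land in a common bucket --- but the paper reaches it by indexing each such subset by its last-thrown member, whereas you reach it directly through the factorial moment. Your route has the advantage of isolating the crux as the clean combinatorial inequality $\max(j-c,0)\le\binom{j}{c+1}$, which is exactly the observation that prevents the double-counting a naive per-element charge would incur, and it also sidesteps the paper's incidental (and in fact incorrect) remark that the $X_j$ are independent, since only linearity of expectation is needed in either proof. Your closing speculation about a tail-sum route $E[\max(X_b-c,0)]=\sum_{i>c}\Pr[X_b\ge i]$ is not what the paper does --- its per-element indicator argument avoids the per-bucket tail sum entirely.
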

\begin{proof}
Let $X_j$ be a random variable that is 1 if there is a collision on the $j$th inserted element,
and $0$~otherwise.
For the first $\BSize+1$ elements, $\Pr[X_1] = \Pr[X_2]  = \ldots = \Pr[X_{\BSize}] = 0$.
For $ c+1 \le j \le n$, we need to estimate the probability that $j$th element
is inserted in a bucket with at least $\BSize$ elements.
The probability of a given bucket having $c$ or more elements after $j-1$
elements have been inserted is given in Lemma~\ref{lemma:overflow_bucket_prob}.
Hence,
$$\Pr[X_{j}]  \le {j-1 \choose \BSize} \frac{1}{n^{\BSize}}$$
Note that $X_j$ are independent.
Then, expected number of collisions is
\begin{eqnarray*}
\mu = E\left[\sum_{j=1}^{m} X_j\right] \le \sum_{j={c+1}}^{m} \Pr[X_j] \le \\
\sum_{j={c}}^{m-1}  {j \choose \BSize} \frac{1}{n^\BSize} = \frac{1}{n^\BSize}  \sum_{j={c}}^{m-1}  {j \choose \BSize}
\end{eqnarray*}
Using the fact that $\sum_{j={c}}^{m-1}  {j \choose \BSize} = {m \choose\BSize+1}$:
\begin{eqnarray*}
\mu \le \frac{1}{n^\BSize} {m \choose \BSize+1}
\end{eqnarray*}
\end{proof}

\begin{lemma}
The number of elements that are spilled
during the $i$th stage of the routing network of size $n$ on input of size $m\le n$
is at most the number of elements that do not fit into $H_1$ of a ZHT
when throwing $m$ elements.
\label{lemma:network_indep_stage}
\end{lemma}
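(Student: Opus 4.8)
The plan is to show that the re-partitioning done at a single stage is just a disguised balls-into-bins throw along one fresh coordinate, so the elements it spills are no more numerous than those that overflow in one ordinary $\throw$; the lemma then follows as a coupling/stochastic-domination statement.

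First I would fix the stage index $i$ and describe the state of the network just before stage $i$. From the algorithm, the re-partition at stage $j$ only exchanges elements between buckets whose indices differ in bit $j$, so it changes bit $j$ of an element's current bucket and no other bit. Hence, after stages $1,\dots,i-1$ have been applied, the bits $i,i+1,\dots,\log n$ of a still-$\true$ element's bucket equal the corresponding bits of the bucket into which the initial $\throw$ placed it, while (by the stated invariant) its bits $1,\dots,i-1$ equal bits $1,\dots,i-1$ of its destination $h_1(e)$. A group processed at stage $i$ is a pair of buckets sharing all bits but bit $i$, and when a still-$\true$ element is kept, bit $i$ of its bucket becomes bit $i$ of $h_1(e)$. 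Therefore the bucket (the ``slot'') occupied by a surviving element right after stage $i$ is determined by the pair
\[
\sigma(e)=\bigl(\text{bits $>i$ of the initial placement of }e,\ \text{bits $1,\dots,i$ of }h_1(e)\bigr),
\]
and as $e$ varies $\sigma(e)$ ranges over the whole index set $\{0,\dots,n-1\}$.

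Next I would count the spill. By construction the number of elements newly tagged $\false$ during stage $i$ is $\sum_{g,\beta}\max(0,N_{g,\beta}-c)$, where $N_{g,\beta}$ is the number of still-$\true$ elements entering group $g$ whose destination has bit $i$ equal to $\beta$; identifying $(g,\beta)$ with the slot above, this equals $\sum_{s}\max(0,N_s-c)$ with $N_s$ the number of such elements landing in slot $s$. Let $\widehat N_s=\lvert\{e:\sigma(e)=s\}\rvert$ count over \emph{all} $m$ input elements (not only those routed into $H_1$). Every element landing in slot $s$ at stage $i$ is one of these $m$ elements, so $N_s\le\widehat N_s$ for every $s$, and hence the stage-$i$ spill is at most $\sum_s\max(0,\widehat N_s-c)$.

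Finally I would identify $\sum_s\max(0,\widehat N_s-c)$ with the number of elements that fail to fit when $m$ elements are thrown into $n$ capacity-$c$ buckets under the map $e\mapsto\sigma(e)$. Because the initial $\throw$ uses an independent random hash function and $h_1$ is modeled as truly random, $\sigma$ is itself a uniform random function into $\{0,\dots,n-1\}$, so $\sum_s\max(0,\widehat N_s-c)$ has exactly the distribution of the number of overflowing elements in a $\throw$ of $m$ elements into $H_1$ of a ZHT (the quantity bounded in expectation by Lemma~\ref{lemma:throw_ave}). Chaining the two bounds gives a coupling under which the stage-$i$ spill is at most the overflow of a single $\throw$, which is the lemma. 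I expect the main obstacle to be the bit-bookkeeping in the first step — pinning down exactly which bits of a bucket index are ``frozen'' after $i-1$ stages and which come from $h_1$, so that $\sigma$ is correctly read off — and, relatedly, being careful that the conclusion is a distributional/coupling statement rather than a pointwise one, since the single $\throw$ it is compared against uses $\sigma$, not $h_1$ alone; once the description of $\sigma$ is in hand, the rest is routine.
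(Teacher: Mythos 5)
Your proof is correct and follows essentially the same route as the paper's: identify the composite bucket map $\sigma$ acting at stage $i$ (bits above $i$ from the initial placement, bits $1,\dots,i$ from the destination hash), observe that the still-$\true$ elements competing for a slot at stage $i$ form a subset of all $m$ elements mapped there by $\sigma$, and conclude via the uniform randomness of $\sigma$ that the stage-$i$ spill is dominated by the overflow of a single $\throw$ of $m$ elements. The paper's version is terser (and contains a small labeling slip in its $a||b$ decomposition of the post-stage-$i$ bucket index, swapping which bits come from which hash), but makes no essentially different move; your emphasis that the conclusion is a coupling/stochastic-domination statement, since the comparison $\throw$ uses $\sigma$ rather than the original hash, is the correct reading of the paper's ``the same\dots or smaller'' comparison.
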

\begin{proof}
We are interested in counting the number of elements
whose tag changes from $\true$ to $\false$ during the~$i$th stage of
the probabilistic routing network and
refer to such elements as spilled during the $i$th stage.

Let $h_{start}$ be the hash function that was used to distribute elements
during the throw of $n$ elements into the input table of the network
(i.e., the elements that did find a spot in the table appear in buckets according to $h_{start}$)
and $h_{end}$ be the hash function of the network.
Let $||$ denote a concatenation of two bit strings.

Consider stage $i$ of the network, $1\le i \le \log n$.
After stage~$i$, element~$e$ is assigned to a bucket $a||b$ where~$a$ are the first $i$ bits of $h_{start}(e)$
and $b$ are the first $\log n -i$ bits of $h_{end}(e)$.
Now consider, a throw phase of $m$ elements into a table of $n$ buckets
using the same hash function.
Let us compare the number of elements assigned to a particular bucket
in each case. We argue that the number in the former case is either
the same as in the latter case or smaller.
Note that the elements that can be assigned to a bucket
is the same in both cases as elements are distributed across buckets using the same hash function.
However, some of the elements of the PRN
may have been lost in earlier stages (those that arrive to the $i$th stage with the~$\false$ tag)
and have already been counted towards the spill of an earlier stage.
Since elements tagged~$\false$ are disregarded by the network during routing,
$i$th stage may spill the same number of elements as the hash table
or less since there are less ``live'' (tagged $\true$) elements that reach
their buckets.
\end{proof}

\begin{lemma}
Consider the following allocation of $n$ elements
into
tables, each with $n$ buckets of size $c = \log \log n$.
The elements are allocated uniformly at random
in the first table. The elements that do not fit into the
first table are allocated into the second table.

The number of elements that arrive into the
second table is at most $n/(2e)$.
\label{lemma:firstounds}
\end{lemma}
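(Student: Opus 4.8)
The plan is to reduce the statement to a bound on the expected spill, followed by a concentration step. First I would apply Lemma~\ref{lemma:throw_ave} with $m = \n$: the expected number of elements that fail to be placed when throwing $\n$ balls into $\n$ buckets of capacity $\BSize$ is at most $\frac{1}{\n^\BSize}\binom{\n}{\BSize+1}$. Using $\binom{\n}{\BSize+1}\le \n^{\BSize+1}/(\BSize+1)!$ (the numerator is a product of $\BSize+1$ factors, each at most $\n$), this is bounded by $\n/(\BSize+1)!$. Since $\BSize=\log\log\n\ge 2$ for all $\n$ above a fixed constant, we have $(\BSize+1)!\ge 3!=6>2e$, so the expected spill is at most $\n/6$, which lies below the target $\n/(2e)$ by a positive constant fraction $\delta\n$, where $\delta=\frac{1}{2e}-\frac16>0$. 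Note that a larger $\BSize$ only shrinks this expectation further, so the precise value $\log\log\n$ plays no essential role here beyond the inequality $\BSize\ge 2$.

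Second, to promote this to a very-high-probability statement I would observe that the number $Z$ of spilled elements equals $\sum_{v}\max\bigl(\mathrm{load}(v)-\BSize,\,0\bigr)$, the sum taken over the $\n$ buckets; in particular $Z$ is a function only of the multiset of hash values $h(e_1),\ldots,h(e_\n)$ and is independent of the insertion order, which is precisely why the expectation bound above is valid. Resampling a single value $h(e_j)$ moves one ball from one bucket to another, changing two bucket loads by $1$ each and hence changing $Z$ by at most $2$. McDiarmid's bounded-differences inequality then gives $\Pr[Z \ge E[Z] + t] \le \exp\bigl(-t^2/(2\n)\bigr)$. Taking $t=\n/(2e)-E[Z]\ge \delta\n$ yields $\Pr[Z>\n/(2e)]\le \exp(-\delta^2\n/2)$, an $\exp(-\Omega(\n))$ tail that is negligible for $\n=\Omega(\secsize)$, as holds for every level of \ourscheme{} since level sizes are chosen at least linear in $\secsize$. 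Hence at most $\n/(2e)$ elements arrive into the second table w.v.h.p.

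The only genuine obstacle is the concentration step: one must express $Z$ as a function of independent coordinates with an $O(1)$ Lipschitz constant (the argument above gives $2$), rather than as the sum $\sum_j X_j$ of per-ball overflow indicators from the proof of Lemma~\ref{lemma:throw_ave}, since those indicators are not independent, and a plain Markov bound on $Z$ would only give a $2e/(\BSize+1)!$ tail, which is not negligible in the regime of $\n$ we need. Everything else — the binomial estimate and the numeric check $6>2e$ — is routine.
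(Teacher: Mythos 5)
Your proof is correct and takes a genuinely different route in the concentration step. The paper uses the sharper numeric fact $(\BSize+1)! \ge 2e\log n$ (proved separately, for $n\ge 10$) to get $E[Z]\le \n/(2e\log\n)$, and then invokes a multiplicative Chernoff bound on the per-ball indicators $X_j$ to conclude $Z\le 2E[Z]\le \n/(e\log\n)\le \n/(2e)$ with failure probability roughly $\exp(-\Omega(\n/\log\n))$. As you correctly flag, this step is shaky as written: the $X_j$ are not independent, and the paper's side remark that they are is simply false, so the Chernoff application needs further justification (e.g., via negative association of the bucket loads). Your use of McDiarmid on $Z=\sum_v\max(\mathrm{load}(v)-\BSize,0)$ as a Lipschitz function of the independent hash values $h(e_1),\dots,h(e_\n)$ is a cleaner and fully rigorous way to get concentration, and you are also right that a bare Markov bound is too weak. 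The tradeoff is quantitative: because you settle for the crude expectation bound $E[Z]\le \n/6$ (using only $\BSize\ge 2$) instead of the $\n/(2e\log\n)$ bound, your slack margin $\delta=\tfrac1{2e}-\tfrac16\approx 0.017$ is a small absolute constant, so your tail $\exp(-\delta^2\n/2)\approx\exp(-0.00015\,\n)$ only becomes negligible for $\n$ in the hundreds of thousands, whereas the paper's tail is already negligible for $\n$ in the tens of thousands; your closing claim that ``$\n=\Omega(\secsize)$'' suffices hides a constant of several thousand. Two small fixes would tighten this: moving one ball between two buckets changes $Z$ by at most $1$ (one term can drop by $1$ and another rise by $1$, but the net change is in $\{-1,0,1\}$), so the Lipschitz constant is $1$, not $2$, which doubles the exponent; and if you want to match the paper's regime you can still use $(\BSize+1)!\ge 2e\log\n$ to push $E[Z]$ down to $\n/(2e\log\n)$ and keep your McDiarmid argument, getting a comfortable $\delta$ close to $1/(2e)$.
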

\begin{proof}
In Lemma~\ref{lemma:throw_ave} we showed that average
number of outgoing elements at level $i$ is $\mu_i = \frac{1}{n^\BSize} {\inc_i \choose \BSize+1}$.
Hence, with $\inc_1=n$
$$\mu_1 \le \frac{n^{c+1}}{n^c(c+1)!} = \frac{n}{(c+1)!}\le \frac{n}{2 e\log n }$$
where we rely on $(c+1)! \ge 2e\log n$ (see below).
If $n/(2e \log n)$ is sufficiently large (e.g., $n/(2e\log n) \ge 3\secsize$
where $\secsize$ is the security parameter), we
can use Chernoff bounds to show that
with probability $\exp(-\mu_1/3)$ the number
of spilled elements from the first level is no more than $2\mu_1$.
Hence, in total the number of elements that spill is at most:
$$\frac{n}{e\log n } \le \frac{n}{2e}$$
If $n$ is of moderate size (e.g., $n \ge 2^{10}$), we can
use another form of Chernoff bounds~\cite[Chapter 4]{Mitzenmacher:2005:PCR:1076315}
to show that with probability $2^{-R}$
the number
of spilled elements from the first level is no more than $R\ge 6\mu_1$,
where $6\mu_1 \ge \secsize$ and $6\mu_1 \le n/2e$.

Finally we show that for any $a \ge 10$,
$(\log a +1)! \ge 2ea$:
\begin{eqnarray*}
(\log a +1)! \ge \sqrt{2 \pi (\log a+1)}  \frac{(\log a+1)^{\log a + 1}}{e^{\log a + 1}} \ge \\
 \sqrt{2 \pi (\log a+1)}  \frac{(\log a+1)^{\log a + 1}}{3^{\log a + 1}} \ge \\
 \sqrt{2 \pi (\log a+1)} \frac{2^{\log (\log a + 1)(\log a+1)}}{2^{(\log a + 1) \log 3}}\ge\\
 2ea
\end{eqnarray*}
\end{proof}

\begin{lemma}
Consider the process in Lemma~\ref{lemma:firstounds}
but after the first throw,
the first table is routed through the routing network
from Section~\ref{sec:network}
and elements spilled from the network
are also thrown into the second table.
The elements of the second table
are routed as well and spilled elements
are thrown into the third table, and so on.
The number of elements that arrive into the
third table is at most $n/(2e)$.
\label{lemma:firstounds2}
\end{lemma}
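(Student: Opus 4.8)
The plan is to re-run the single-round analysis of Lemma~\ref{lemma:firstounds}, now tracking the extra elements the routing network throws off on top of the plain throw overflows, and to show that two rounds of (throw followed by route) suffice to drive the residual count below $n/(2e)$. Let $m_2$ denote the number of elements that arrive at the second table. They come from two sources: the elements that overflowed the first throw into table~$1$, and the elements that are spilled (retagged $\false$) while routing table~$1$ through the PRN of Section~\ref{sec:network}. The first source is exactly the quantity estimated inside the proof of Lemma~\ref{lemma:firstounds}, namely at most $n/(e\log n)$ w.v.h.p. For the second source, note that table~$1$ holds at most $n$ elements with at most $c=\log\log n$ per bucket, so it is a legal PRN input; by Lemma~\ref{lemma:network_indep_stage} the number spilled in any one of the $\log n$ stages is stochastically dominated by the overflow of throwing at most $n$ elements into $n$ buckets of capacity $c$, whose expectation is $\mu_1\le \binom{n}{c+1}/n^{c}\le n/(c+1)!\le n/(2e\log n)$ by Lemma~\ref{lemma:throw_ave} together with the inequality $(\log\log n+1)!\ge 2e\log n$ established in Lemma~\ref{lemma:firstounds}. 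Applying the same Chernoff estimates as in Lemma~\ref{lemma:firstounds} stage by stage (the multiplicative form when $\mu_1$ is large, the $2^{-R}$ form with $R\ge\max(6\mu_1,\secsize)$ otherwise) and union-bounding over the $\log n$ stages, the total routing spill of table~$1$ is at most $2\mu_1\log n\le n/e$ w.v.h.p. Hence $m_2\le n/(e\log n)+n/e\le n/2$ for $n$ not too small.

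I would then repeat the estimate one table lower. The third table receives the overflow of throwing the $m_2\le n/2$ elements into table~$2$, together with whatever is spilled while routing table~$2$. Because the number of inputs is now at most $n/2$, Lemma~\ref{lemma:throw_ave} gives an overflow expectation of at most $(n/2)^{c+1}/(n^c(c+1)!)=n/(2^{c+1}(c+1)!)$, and, again by Lemma~\ref{lemma:network_indep_stage}, each of the $\log n$ routing stages of table~$2$ spills in expectation at most the same amount. Concentrating each of these $\log n+1$ contributions with the Chernoff bounds above and union-bounding, the third table receives at most $(\log n+1)\cdot n/(2^{c}(c+1)!)\le \frac{n}{2^{c}}\cdot\frac{\log n+1}{2e\log n}\le n/(2e)$ elements w.v.h.p., where the last step uses $2^{c}\ge 2$ and $(\log n+1)/\log n\le 2$. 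A final union bound over the $O(\log n)$ concentration events invoked for tables~$1$ and~$2$ shows the claimed bound holds with very high probability.

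The main obstacle is the treatment of the routing network's stages. Lemma~\ref{lemma:network_indep_stage} only provides, for each stage separately, a stochastic domination by a throw-overflow variable, and these dominating variables are not independent across stages, so one cannot apply a single Chernoff bound to their sum; instead one must bound each stage's spill in the tail and pay a union bound, which is what produces the extra $\log n$ factor in the routing spill---and hence why a third table is needed here where a second sufficed in Lemma~\ref{lemma:firstounds}. One must then check, as above, that this $\log n$ factor is still swallowed by the $1/(c+1)!$ decay with $c=\log\log n$, and be careful to switch between the large- and small-expectation forms of the Chernoff bound exactly as in the proof of Lemma~\ref{lemma:firstounds} when some of the per-stage expectations drop below the security parameter.
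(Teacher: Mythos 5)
Your proposal is correct and follows essentially the same route as the paper: use Lemma~\ref{lemma:network_indep_stage} to dominate each routing stage's spill by a throw-overflow variable, concentrate with Chernoff, union-bound over the $O(\log n)$ stages, and iterate the estimate from table~1 to table~2 to bound the arrivals at table~3 by $n/(2e)$. Your bookkeeping is a touch more explicit — you keep the initial-throw overflow and the per-stage routing spills as separate terms (arriving at $m_2 \le n/2$ rather than the paper's $n/e$) and you flag the non-independence across stages as requiring a per-stage tail bound plus union bound, which the paper handles implicitly by writing the factor $2\log n$ — but these are presentational differences, not a different argument.
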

\begin{proof}
From Lemma~\ref{lemma:network_indep_stage}
we know that the total overflow from
throwing the elements and routing them through the network
is at most $\log n$ times the overflow from the original
throw.
Consider the proof of Lemma~\ref{lemma:firstounds}.
The second table receives at most
$$\frac{2 n \log n}{2e \log n} = \frac{n}{e}$$
number of elements.
Then the average overflow according to Lemma~\ref{lemma:throw_ave}
is
$$\frac{1}{n^c}{ \frac{n}{e} \choose c+1} \le \frac{n^{c+1}}{n^c e^{c+1} 2 e\log n} = \frac{n}{2\log n e^{c+2}}
\le \frac{n}{4 e \log n}$$
We can again use Chernoff bound to show that the number of elements that
arrive into the next level will be at most $2\log n$ factor away from the above expression.
Hence, the number of elements that arrive into the third table is
$$\frac{2 n \log n }{4 e \log n} = \frac{n}{2e}$$
\end{proof}

\begin{lemma}
Consider the following allocation of $m \le n/(2e)$ elements
into
$k$ tables, each with $n$ buckets of size $c = \log \log n$.
The elements are allocated uniformly at random
in the first table. The elements that do not fit into the
first table are allocated into the second table and so on,
until there are no elements left.
Then, $k=O(\log \log n)$ is sufficient to
allocate all elements.
\label{lemma:lastrounds}
\end{lemma}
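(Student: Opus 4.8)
The plan is to adapt the template of Lemma~\ref{lemma:firstounds}. For each $i$ let $m_i$ denote the number of elements that arrive into the $i$th table, so $m_1 = m \le n/(2e)$ and $m_{i+1}$ is the overflow out of table $i$; I will show that $m_i$ contracts \emph{doubly} exponentially, so that after $O(\log\log n)$ tables it equals $0$ with very high probability. The two ingredients are Lemma~\ref{lemma:throw_ave}, which bounds the expected overflow of one throw, and Chernoff bounds, which pin the overflow near its expectation.

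First I would set up the recursion. Conditioned on $m_i$, the $m_i$ elements are hashed uniformly and independently into the $n$ buckets of size $c = \log\log n$, so Lemma~\ref{lemma:throw_ave} gives $\mu_i := E[m_{i+1}\mid m_i] \le \frac{1}{n^c}\binom{m_i}{c+1} \le \frac{m_i^{c+1}}{n^c\,(c+1)!}$. In terms of $\beta_i := m_i/n$ (so $\beta_1 \le 1/(2e) < 1$) this reads $E[\beta_{i+1}\mid m_i] \le \beta_i^{c+1}/(c+1)!$. As long as $\mu_i$ exceeds a fixed constant multiple of $\secsize$, the multiplicative Chernoff bound already used in the proof of Lemma~\ref{lemma:firstounds} gives $m_{i+1} \le 2\mu_i$ except with probability at most $2^{-\secsize}$; since $2/(c+1)! \le 1$ for every $c \ge 1$, this collapses to the clean recursion $\beta_{i+1} \le \beta_i^{c+1}$, and iterating it yields $\beta_i \le \beta_1^{(c+1)^{i-1}} \le 2^{-(c+1)^{i-1}}$ — a tower-type decay governed by the exponent $(c+1)^{i-1}$.

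Next I would count the tables. Let $i_0$ be the first round at which the expected overflow $\mu_{i_0}$ falls below that constant multiple of $\secsize$; for $i \le i_0$ the recursion is in force, so $\mu_i \le \frac{n}{(c+1)!}\,2^{-(c+1)^i}$, which forces $i_0 = O\!\big(\log\log n/\log(c+1)\big) = O(\log\log n/\log\log\log n)$ when $c = \log\log n$. From round $i_0$ on the overflow expectations are $O(\secsize)$, and I would switch to the alternative Chernoff form also invoked in Lemma~\ref{lemma:firstounds}, which caps the next overflow at $\max(6\mu,\secsize) = O(\secsize)$ with very high probability; after $O(1)$ such rounds the input to a table is only $O(\secsize)$ elements, so the expected overflow of the following throw is $\frac{O(\secsize)^{c+1}}{n^c\,(c+1)!} < 2^{-\secsize}$ for $n$ large, and Markov's inequality then makes that overflow $0$ with very high probability, terminating the process. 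Hence $k = i_0 + O(1) = O(\log\log n)$ tables suffice, and a union bound over these (at most $k$) rounds, each failing with probability at most $2^{-\secsize}$, shows the allocation succeeds with very high probability.

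The main obstacle is not any single inequality but stitching the concentration arguments together across the full dynamic range of $\mu_i$ — polynomially large in $n$ at the start, $\Theta(\secsize)$ in the middle, $o(1)$ at the end — while simultaneously keeping the number of rounds $O(\log\log n)$ and the union-bounded failure probability negligible, and arranging the constants (the factorial $(c+1)!$, the factor $2$ from Chernoff, the initial slack $1/(2e)$) so that $\beta_i$ contracts super-geometrically rather than merely geometrically. The small-$n$ regime, where $c = \log\log n$ is a small constant and $2/(c+1)! \le 1$ is tight, is routine and parallels the corresponding remark in the proof of Lemma~\ref{lemma:firstounds}.
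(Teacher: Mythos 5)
Your proposal is correct but takes a genuinely different technical route from the paper's own proof of this lemma. The paper abandons the binomial machinery of Lemma~\ref{lemma:throw_ave} here and instead switches to a Poisson-approximation argument: it bounds the probability that a bucket receives at least $c$ balls by the Poisson tail $e^{-\alpha}(e\alpha)^c/c^c$, counts the expected number of \emph{overflowing buckets}, and then multiplies by a $\log n$ factor (the w.v.h.p.\ bound on the maximum bucket load) to convert to overflowing \emph{elements}; iterating this yields a per-round count $\approx n/(2^{(c+1)^i}(\log n)^{(c+1)^{i-1}}e)$. You instead re-use Lemma~\ref{lemma:throw_ave} directly, as in Lemma~\ref{lemma:firstounds}, to bound the expected overflow in elements (avoiding the $\log n$ fudge factor entirely), and drive the recursion with the clean normalization $\beta_{i+1}\le 2\beta_i^{c+1}/(c+1)!\le\beta_i^{c+1}$, which gives the same double-exponential contraction $\beta_i\le\beta_1^{(c+1)^{i-1}}$ governed by the same base $c+1$. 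Both routes land on the same round count (indeed both actually give the slightly stronger $O(\log\log n/\log\log\log n)$), so the difference is mostly aesthetic; your version parallels Lemma~\ref{lemma:firstounds} more tightly and has cleaner bookkeeping, whereas the paper's Poisson version makes the ``bucket-level to element-level'' conversion explicit.

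Two comments on rigor, both of which apply roughly equally to the paper's own argument and so do not constitute a gap specific to you. First, invoking the multiplicative Chernoff bound on the overflow count $m_{i+1}$ is not literally licensed, because the per-element overflow indicators $X_j$ are \emph{not} independent (despite the paper's claim in Lemma~\ref{lemma:throw_ave}); one would formally need negative association, a martingale/McDiarmid bound, or the Poissonization the paper uses to justify the concentration. Second, your terminal step needs $n^c\,(c+1)!\gg O(\secsize)^{c+1}\,2^{\secsize}$ so that the expectation after the intermediate rounds is truly below $2^{-\secsize}$; this is a mild constraint relating $n$ and $\secsize$ that should be stated, and it is the same implicit assumption that appears in the paper's handling of the ``moderate $n$'' case in Lemma~\ref{lemma:firstounds}.
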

\begin{proof}
We analyze the process using Poisson random variables
as an approximation of the balls-and-bins approximation process~\cite[Chapter 5]{Mitzenmacher:2005:PCR:1076315}.
Recall that as long as event's probability
is monotonically decreasing or increasing in the number of elements,
the event that happens with probability $\epsilon$ in the Poisson case has probability at most $2\epsilon$
in the exact case.
Let $Z_i$ be a Poisson random variable that measures
the number of elements in a bucket after a random throw.
Let $\alpha$ be the mean of $Z_i$.
Recall that,
$$\Pr(Z_i \ge c) \le \frac{e^{-\alpha} (e\alpha)^c}{c^c}$$
Hence, the number of $n$ buckets that will have
at least $c$ elements is $n {e^{-\alpha} (e\alpha)^c}/{c^c}$.
It is known, that with very high probability no bucket has more than $O(\log n)$
elements, hence, the number of overflow elements
is  $(n\log n) {e^{-\alpha} (e\alpha)^c}/{c^c}$.
We now show that as long as $\alpha \le 1/(2e)$, we require
$O(\log \log n)$ rounds until no overflow occurs.
After the first throw, the overflow is
\begin{eqnarray}
(n\log n) \frac{(e\alpha)^{c+1}}{(c+1)^{c+1}} \le 
\frac{n}{2^{c+1} e\log n} 
\label{eq:line1}
\end{eqnarray}
where with $c = \lceil\log \log n\rceil$,
$\log n /(c+1)^{c+1} \le 1/e$ (see below).
Hence, in the next round:
\begin{eqnarray*}
(n\log n) \frac{(e/{(2^{c+1}e\log n)})^{c+1}}{(c+1)^{c+1}} \le 
\frac{n}{2^{(c+1)^{2}} (\log n)^{c+2} e} 
\end{eqnarray*}
Then for the $i$th round the overflow is:
\begin{eqnarray*}
\frac{n}{2^{(c+1)^{i}} (\log n)^{(c+1)^{i-1}} e} 
\end{eqnarray*}
Hence, after at most $O(\log \log n)$ rounds
the above expression is less than 1.

We are left to show that as long as $a \ge 8$,
$a/(\log a + 1)^{(\log a +1)} \le 1/e$.
In the following lemma
we require a similar result but for~$a^2$
in the numerator. Hence, instead we show that
$a^2/(\log a + 1)^{(\log a +1)} \le 1/e$:
\begin{eqnarray*}
\frac{a^2}{(\log a + 1)^{(\log a +1)}} =
\frac{1}{2^{\log (\log a+1) (\log a + 1) - 2\log a}} \le\\
\frac{1}{2^{ \log (\log a+1)  + \log a (\log (\log a+1)  -2)}} \le 
\frac{1}{\log a + 1} \le \frac{1}{4} \le \frac{1}{e}
\end{eqnarray*}
\end{proof}

\begin{lemma}
Consider the process in Lemma~\ref{lemma:lastrounds}
but after the first throw,
the first table is routed through the routing network
from Section~\ref{sec:network}
and elements spilled from the network
are also thrown into the second table.
The elements of the second table
are routed as well and spilled elements
are thrown into the third table, and so on.
Then, $O(\log \log n)$ rounds
are still sufficient to avoid overflow
in the last round.
\label{lemma:lastrounds2}
\end{lemma}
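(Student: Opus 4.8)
\noindent The plan is to replay the argument of Lemma~\ref{lemma:lastrounds} almost verbatim, paying one extra factor of $\log n$ per round for the spills that the routing network introduces. The hinge is Lemma~\ref{lemma:network_indep_stage}: in each of the $\log n$ stages of the PRN applied to a table holding at most $m\le n$ elements, the number of newly spilled elements is at most the overflow of a single (non-oblivious) throw of those $m$ elements into $n$ buckets of size $c$. Summing over the $\log n$ stages, the number of elements leaving $H_i$ for $H_{i+1},\dots,H_k$ is at most $\log n$ times the single-throw overflow of the contents of $H_i$ --- exactly the bound already used in the proof of Lemma~\ref{lemma:firstounds2}.

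First I would set $m_1=m\le n/(2e)$ and, for $i\ge 1$, let $m_i$ be the number of elements reaching $H_i$ and $\alpha_i=m_i/n$. Since a fresh random throw function is used for each table, conditioned on the set reaching $H_i$ the elements land uniformly, so the Poisson / balls-in-bins estimate of Lemma~\ref{lemma:lastrounds} reapplies at every round: with very high probability the single-throw overflow at $H_i$ is at most $(n\log n)(e\alpha_i)^{c+1}/(c+1)^{c+1}$. Combining with the routing factor gives the recurrence
$$m_{i+1}\;\le\;\log n\cdot(n\log n)\,\frac{(e\alpha_i)^{c+1}}{(c+1)^{c+1}}.$$
Then I would plug in $(c+1)^{c+1}\ge e(\log n)^2$ for $c=\lceil\log\log n\rceil$ --- precisely the inequality $a^2/(\log a+1)^{\log a+1}\le 1/e$ with $a=\log n$ that Lemma~\ref{lemma:lastrounds} already establishes ``for the following lemma'' --- to cancel both $\log n$ factors and reduce the recurrence to $\alpha_{i+1}\le(e\alpha_i)^{c+1}/e$.

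A one-line induction on this recurrence, starting from $e\alpha_1\le 1/2$, then yields $\alpha_{i+1}\le\big(e\,2^{(c+1)^i}\big)^{-1}$, so the number of elements arriving at $H_{i+1}$ is at most $n\,2^{-(c+1)^i}/e$, which is below $1$ as soon as $(c+1)^i>\log n$, i.e.\ $i>\log\log n/\log(c+1)$. Since $c+1\ge 2$ this holds for $i=O(\log\log n)$ (indeed $O(\log\log n/\log\log\log n)$), giving the claimed bound on $k$. To finish I would, exactly as in Lemma~\ref{lemma:lastrounds}, upgrade the per-round expectation bounds to high-probability bounds via Chernoff and union-bound over the $O(\log\log n)$ rounds, keeping the total failure probability negligible.

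The main obstacle I expect is not the recurrence, which is bookkeeping, but two composition checks: that the $\log n$ from routing stacks correctly on top of the $\log n$ already hidden inside the single-throw overflow bound (which is exactly why the $a^2$ rather than $a$ form of the auxiliary inequality is needed), and that the set handed to $H_{i+1}$ really is distributed like a fresh uniform throw --- it depends on $h_1,\dots,h_i$ but is independent of $H_{i+1}$'s throw function --- so the Poisson approximation can legitimately be reapplied at each of the $O(\log\log n)$ rounds.
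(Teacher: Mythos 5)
Your proposal is correct and follows essentially the same route as the paper: both invoke Lemma~\ref{lemma:network_indep_stage} to charge the routing network's spills an extra factor of $\log n$, both fold that factor into the recurrence from the proof of Lemma~\ref{lemma:lastrounds} (replacing $n$ by $n\log n$ in Equation~\ref{eq:line1}), both rely on the strengthened auxiliary inequality $a^2/(\log a+1)^{\log a+1}\le 1/e$ to absorb the extra logarithm, and both conclude that the resulting double-exponentially decaying spill dies out within $O(\log\log n)$ rounds. You merely spell out the recurrence $\alpha_{i+1}\le(e\alpha_i)^{c+1}/e$, the per-round independence/Poisson justification, and the Chernoff plus union-bound amplification more explicitly than the paper does.
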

\begin{proof}
From Lemma~\ref{lemma:network_indep_stage}
we know that the total overflow from
throwing the elements and routing them through the network
is at most $\log n$ times the overflow from the original
throw.
Consider the proof of Lemma~\ref{lemma:lastrounds}.
In Equation~\ref{eq:line1},
we can replace $n$ with $n \log n$ to account
for the buckets in each round of the network.
This would result in the spill
of
$$
\frac{n}{2^{c+1} e\log n} 
$$
in the first round
and
\begin{eqnarray*}
\frac{n}{2^{(c+1)^{i}}  e} 
\end{eqnarray*}
in the $i$th round.
Hence, $O(\log \log n)$ rounds are still sufficient
to fit all the elements.
\end{proof}

\end{document}